\documentclass[a4paper,11pt]{article}
\pdfoutput=1

\usepackage{tcs}

\begin{document}

\title{Krylov Methods are (nearly) Optimal for \\ Low-Rank Approximation}
\author{
Ainesh Bakshi \\
\texttt{ainesh@mit.edu} \\
MIT
\and
Shyam Narayanan \\
\texttt{shyamsn@mit.edu} \\
MIT
}
\date{}

\maketitle

\begin{abstract} 

We consider the problem of rank-$1$ low-rank approximation (LRA) in the matrix-vector product model under various Schatten norms: 
\begin{equation*}
    \min_{ \norm{u}_2=1} \norm{A (I - u u^\top)}_{\calS_p} ,
\end{equation*}
where $\norm{M}_{\calS_p}$ denotes the $\ell_p$ norm of the singular values of $M$. Given $\eps>0$, our goal is to output a unit vector $v$ such that 
\begin{equation*}
    \norm{ A\Paren{I - vv^\top} }_{\calS_p} \leq \Paren{1+\eps}\min_{\norm{u}_2=1}\norm{A\Paren{I - u u^\top}}_{\calS_p}.
\end{equation*}
Our main result shows that Krylov methods (nearly)
achieve the information-theoretically optimal\footnote{For Spectral LRA, the upper and lower bounds match up to a fixed universal constant. For Frobenius and Nuclear LRA, they match up to a $\log(1/\eps)$ factor. } number of matrix-vector products for  
Spectral ($p=\infty$), Frobenius ($p=2$) and Nuclear ($p=1$) LRA.

In particular, for Spectral LRA, we show that any algorithm requires $\Omega\Paren{ \log(n)/\eps^{1/2} }$ matrix-vector products, exactly matching the upper bound obtained by Krylov methods~\cite{musco2015randomized}. Our lower bound addresses Open Question 1 in~\cite{woodruff2014sketching}, providing evidence for the lack of progress on algorithms for Spectral LRA and resolves Open Question 1.2 in~\cite{bakshi2022low}.  Next, we show that for any fixed constant $p$, i.e. $1\leq p =O(1)$, there is an upper bound of $O\Paren{ \log(1/\eps)/\eps^{1/3} }$ matrix-vector products, implying that the complexity does not grow as a function of input size. This improves the $O\Paren{ \log(n/\eps)/\eps^{1/3} }$ bound recently obtained in~\cite{bakshi2022low}, and matches their $\Omega\Paren{1/\eps^{1/3}}$ lower bound, to a $\log(1/\eps)$ factor. 
\end{abstract}

\thispagestyle{empty}

\clearpage

\microtypesetup{protrusion=false}
\tableofcontents{}
\thispagestyle{empty}
\microtypesetup{protrusion=true}

\clearpage

\pagestyle{plain}
\setcounter{page}{1}

\section{Introduction}

Iterative algorithms are the workhorse of modern optimization methods and are pervasive throughout scientific computing, numerical linear algebra and machine learning. 
Such algorithms are now used for a wide array of tasks, from training large machine learning models~\cite{boyd2004convex, goodfellow2016deep}, running large-scale simulations for fluid dynamics~\cite{elman1996iterative,elman1996fast,elman1996multigrid}, structural analysis~\cite{roux1989acceleration, toselli2004domain}  and computational chemistry~\cite{schlick2009optimization,zuev2015new}, to quantum machine learning~\cite{harrow2009quantum,gilyen2019quantum}. In order to develop a general theory of iterative algorithms, and systematically compare their performace, we need a computational model that simulatenously captures all such algorithms and admits fine-grained lower bounds. While the standard RAM model easily captures iterative algorithms, we have no tools obtain fine-grained lower bounds on their performace.

An alternate computational model that has recieved significant attention lately is the \textit{matrix-vector product} model~\cite{WWZ14, SunWYZ19,RWZ20,SAR18,braverman2020gradient,MMMW21, bakshi2022low,needell2022testing}. Here, the algorithm accesses an input matrix $A$ only via adaptive matrix-vector queries. In particular, the algorithm chooses
a query vector $v^1$, obtains the product $A \cdot v^1$, chooses the next query
vector $v^2$, which is any randomized function of $v^1$ and $A \cdot v^1$, receives $A \cdot v^2$,
and so on. The fundametal measure of complexity in this model is the minimum number of matrix-vector products required to solve a given problem, which we refer to as the \textit{matrix-vector complexity}. 

The \textit{matrix-vector product} model captures natural iterative algorithms and has been extensively studied in the scientific computing and numerical linear algebra communities (see, for instance, ~\cite{knoll2004jacobian}, and references therein). Further, in many real-world applications, the number of matrix-vector products dominate the overall running time~\cite{mellor2004optimizing}. Finally, it is possible to obtain unconditional, information-theoretical lower bounds on the \textit{matrix-vector} complexity for various problems, as demonstrated by~\cite{SAR18,braverman2020gradient}, for computing the top eigenvalue of a matrix.

A popular class of iterative algorithms are based on computing the Krylov subspace: we loosely refer to such algorithms as \textit{Krylov subspace methods}. Here, instead of discarding intermediate matrix-vector products, the algorithm constructs a basis for the subspace spanned by intermediate vectors, i.e. $\calK = [v, A v, A^2 v, \ldots A^t v]$. 
Canonical examples of \textit{Krylov subspace methods} include Krylov iteration (Algorithm~\ref{algo:block-krylov-single-vector}) to compute top-$k$ eigenvalues and low-rank approximations~\cite{rokhlin2010randomized, halko2011finding,  musco2015randomized}, Conjugate Gradient to solve a linear system, 
 and Lanczos iteration to apply low-degree polynomials to eigenvalues (see ~\cite{saad1981krylov} and references therein).

In this work, we focus on understanding the \textit{matrix-vector complexity} of low-rank approximation, in the special case where the target rank is $1$. In particular, given an $n \times d$ matrix $A$ and accuracy parameter $0<\eps<1$,  the goal is to compute a unit vector $v$ such that 
\begin{equation*}
    \Norm{ A \Paren{I - vv^\top} }_{\calS_p} \leq  \min_{\norm{u}_2 =1 } \Paren{1+\eps}\ \Norm{ A \Paren{I - uu^\top} }_{\calS_p},
\end{equation*}
where $\Norm{M}_{\calS_p}$ is the Schatten-$p$ norm of $M$, defined as the $\ell_p$ norm of the singular values of $A$. Formulating low-rank approximation under Schatten-$p$ norms provides a convenient way to compare algorithms for well-studied matrix norms: Spectral $(p=\infty)$, Frobenius $(p=2)$, and Nuclear $(p=1)$. We note that any lower bound for rank-$1$ LRA implies a lower bound when the rank is a fixed universal constant, and our upper bounds extend naturally to the rank-$k$ approximation setting. For ease of exposition, we focus on the rank-$1$ LRA problem.  

Recently, Bakshi, Clarkson and Woodruff~\cite{bakshi2022low} studied Krylov methods in the matrix-vector product model for Schatten-$p$ low-rank approximation. They obtained an upper bound of $O\Paren{p^{1/6}\log(n/\eps) /\eps^{1/3} }$ matrix-vector products for any $p\geq 1$ by exploiting a trade-off between iterations and \textit{block size} (the number of starting vectors that are multiplied by $A$ in each step). They
construct two indepedent Krylov subspaces, $\calK_1 = [g , Ag, A^2 g, \ldots]$ and $\calK_2 = [G, AG, A^2 G, \ldots]$, where $\calK_2$ starts with a block matrix instead of a single vector.~\cite{bakshi2022low} also obtain an $\Omega\Paren{1/\eps^{1/3}}$ lower bound for any $p$ that is a fixed universal constant. On the other hand,  for Spectral LRA, the gap free analysis of Krylov Iteration by Musco and Musco~\cite{musco2015randomized} obtains  a $O\Paren{ \log(n) /\eps^{1/2} }$ upper bound.  Further, to the best of our knowledge, there is no known matrix-vector lower bound for Spectral low-rank approximation (see Section 1.2~\cite{bakshi2022low}, which explicitly states this as an open question).

Therefore, a natural question to ask is as follows:

\begin{center}
{\it Does Krylov iteration  achieve the optimal number of matrix-vector products for Spectral, Frobenius and Nuclear low-rank approximation? }
\end{center}

\subsection{Our Results}

\begin{table}[hbt]
\label{table:comparsion}
\centering
\begin{tabular}{|c|c|c|c|}
\hline
Reference                                                                   & Spectral $(p=\infty)$ & Frobenius $(p=2)$ & Nuclear $(p=1)$\\ \hline
\begin{tabular}[c]{@{}c@{}}Simultaneous Iteration\\ \cite{rokhlin2010randomized, halko2011finding,  musco2015randomized}\end{tabular} & $O\Paren{ \log(n)/\eps }$    & $O\Paren{ \log(n)/\eps }$      &  N.A.   \\ \hline
\begin{tabular}[c]{@{}c@{}}Block Krylov\\ ~\cite{musco2015randomized}\end{tabular}           & $O\Paren{\log(n)/ \eps^{1/2}} $     & $O\Paren{\log(n)/ \eps^{1/2}} $       & N.A.    \\ \hline
\begin{tabular}[c]{@{}c@{}}Modified Block Krylov\\ ~\cite{bakshi2022low}\end{tabular} & $O\Paren{ \log(n/\eps)/\eps^{1/2} }$     & $O\Paren{ \log(n/\eps)/\eps^{1/3} }$      & $O\Paren{ \log(n/\eps)/\eps^{1/3} }$    \\ \hline
\begin{tabular}[c]{@{}c@{}} Prior Lower Bounds\\ ~\cite{bakshi2022low}\end{tabular} & N.A      & $\Omega\Paren{ 1/\eps^{1/3} }$      & $\Omega\Paren{1/\eps^{1/3} }$    \\ \hline
\begin{tabular}[c]{@{}c@{}} Our Results \\
Thm~\ref{thm:spectral-lra-lower-bound}, Thm~\ref{thm:upper-bound-schatten-p} \end{tabular} & $\Omega\Paren{\log(n)/\eps^{1/2}}$      & $O\Paren{ \log(1/\eps)/\eps^{1/3} }$      & $O\Paren{ \log(1/\eps)/\eps^{1/3} }$     \\ \hline
\end{tabular}
\caption{Comparison of our results with prior work, measuring number of  matrix-vector products for Spectral, Frobenius and Nuclear low-rank approximation. We note that our lower bound implies Block Krylov~\cite{musco2015randomized} is the optimal algorithm for Spectral low-rank approximation. Our upper bound for Frobenius and Nuclear low-rank approximation matches the lower bound from~\cite{bakshi2022low} up to a $\log(1/\eps)$ factor.  }
\end{table}

We answer the aforementioned question in the affirmitive and show that Krylov iteration with a single starting vector (Algorithm~\ref{algo:block-krylov-single-vector}) obtains (nearly) optimal matrix-vector products for Spectral, Frobenius and Nuclear LRA. For Spectral LRA, the matrix-vector complexity is $\Theta\Paren{\log(n)/\sqrt{\eps} }$. For Frobenius and Nuclear LRA, the matrix-vector complexity is $\tilde{\Theta}\Paren{ 1/ \eps^{1/3} }$, where $\tilde{\Theta}$ surpresses a single $\log(1/\eps)$ factor (see Table~\ref{table:comparsion} for explicit upper and lower bounds). 

We begin by stating our lower bound for Spectral low-rank approximation:

\begin{theorem}[Lower Bound for Spectral LRA]
\label{thm:spectral-lra-lower-bound}
There exists a distribution $\calD$ over symmetric real $n\times n$ matrices such that given $A \sim\calD$ and $0< \eps<1$, any randomized algorithm requires $\Omega\Paren{\log(n)/\eps^{1/2}}$ matrix-vector products to output a vector $v$ such that with probability at least $2/3$,
\begin{equation*}
    \norm{A\Paren{I - vv^\top} }_{\textrm{op}} \leq \Paren{1+\eps} \min_{\norm{u}_2=1 }\norm{ A\Paren{I - uu^\top} }_{\textrm{op}}.
\end{equation*}
\end{theorem}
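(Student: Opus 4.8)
The plan is to build a hard instance from a planted-spike construction: take a matrix of the form $A = W + \theta \cdot u u^\top$, where $W$ is drawn from a rotationally invariant ensemble (e.g. a GUE/Wigner-type matrix, suitably truncated or a Wishart-type matrix) whose bulk spectrum is well-understood, $u$ is a uniformly random planted direction on the sphere, and $\theta$ is tuned so that the planted spike barely separates from the bulk. The key point of the construction is a spectral gap structure: the optimal rank-$1$ LRA must align with $u$, and to get a $(1+\eps)$-approximation in operator norm the output vector $v$ must have nontrivial correlation $\langle v, u\rangle^2 \gtrsim$ some threshold depending on $\eps$. I would calibrate the spike strength so that detecting $u$ to the required accuracy sits right at the boundary of what the bulk eigenvalue distribution allows, and the scaling $\log(n)/\sqrt{\eps}$ should emerge from balancing the number of "resolvable" eigenvalue levels ($\log n$) against the polynomial resolution ($\eps^{-1/2}$) at which a degree-$t$ polynomial in $A$ — which is all a Krylov method of depth $t$ can produce — can distinguish the spike from the bulk edge.

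The core technical step is a Krylov-space reduction: any algorithm making $q$ adaptive matrix-vector queries on a rotationally invariant instance can be simulated (by a standard argument, e.g. as in Simchowitz--El Alaoui--Recht or Braverman et al.) by an algorithm that only observes the restriction of $A$ to the Krylov subspace $\mathcal{K} = [g, Ag, \dots, A^{q-1}g]$ for a single Gaussian start $g$ (adaptivity does not help against a rotationally invariant distribution, up to a constant factor in $q$). So it suffices to show that, with $q = o(\log(n)/\sqrt{\eps})$, the planted direction $u$ is nearly undetectable from $P_{\mathcal{K}} A P_{\mathcal{K}}$: the distribution of this compressed matrix is statistically close whether or not the spike is present, hence no vector output in (or even outside) $\mathcal{K}$ can have the correlation with $u$ needed for a $(1+\eps)$ operator-norm guarantee. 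Concretely, I would show that the best rank-$1$ operator-norm error achievable using only $\mathcal{K}$ is at least $(1+\eps)$ times the true optimum, by bounding the Rayleigh quotient $\|A(I - vv^\top)\|_{\mathrm{op}}$ from below for every $v$ reachable from $\mathcal{K}$ — this uses that a degree-$(q-1)$ polynomial cannot amplify the $\theta$-spike enough relative to the top of the bulk, which is exactly a Chebyshev/polynomial-approximation lower bound: approximating the indicator of the spike eigenvalue against the bulk edge to resolution $\eps$ requires degree $\Omega(1/\sqrt{\eps})$, and the $\log n$ comes from needing this simultaneously across $\log n$ geometrically-spaced scales (a "tensor" or "direct sum" of such instances at scales $\theta, 2\theta, 4\theta, \dots$).

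The main obstacle I anticipate is making the anti-concentration / indistinguishability argument quantitatively tight enough to get the sharp $\sqrt{\eps}$ exponent rather than a weaker bound, and in particular controlling the behavior of the bulk edge (edge eigenvalue fluctuations, Tracy--Widom scale) finely enough that the $\log n$ scales genuinely stack without losing constant or logarithmic factors at each level. Handling adaptivity rigorously — ensuring the Krylov reduction holds with the right constants for an arbitrary randomized adaptive algorithm, not just deterministic ones — is a second delicate point, typically done via an orthogonalization/rotation argument conditioning on the algorithm's transcript. I would also need to verify that outputting a vector $v$ \emph{outside} the Krylov subspace cannot circumvent the bound, which follows because the only information about $u$ available to the algorithm is mediated by $P_{\mathcal{K}} A$, so any $v$ the algorithm commits to has correlation with $u$ governed by what lies in $\mathcal{K}$.
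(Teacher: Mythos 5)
There is a genuine gap, and it sits exactly where the two hard steps of this theorem live. First, your reduction from adaptive algorithms to a \emph{single-vector} Krylov subspace ``up to a constant factor in $q$'' is not a standard result and is not justified: the known simulations of adaptive matrix-vector algorithms against rotationally invariant inputs (the Simchowitz--El Alaoui--Recht style arguments, and the lifting used in this paper) replace $q$ adaptive queries by a \emph{block} Krylov subspace of block size $q$ and depth $q$, i.e.\ a $q^2$-dimensional object, not by $[g, Ag, \dots, A^{q-1}g]$ for one Gaussian $g$. One must then separately prove that block size does not help, and this is where a real obstacle arises: a unit vector in the block Krylov space is a sum $w = w_1 + \cdots + w_s$ of contributions from the different starting vectors, and these can be strongly anti-correlated, so individual $w_i$ may have huge norm and the single-vector correlation bound does not simply add up. The paper has to control this cancellation explicitly (via singular-value bounds on the per-eigenspace coefficient matrices), and nothing in your sketch addresses it; asserting that adaptivity costs only a constant factor skips the entire second half of the argument.

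Second, the planted-spike instance $A = W + \theta uu^\top$ with a Wigner/Wishart bulk is unlikely to support the polynomial-method step you describe, because every bulk eigenvalue has multiplicity one. The Chebyshev extremality bound only gives $|p(\lambda_{\mathrm{spike}})| \le n^{O(c)} \max_j |p(\lambda_j)|$ for a degree-$(c\log n/\sqrt{\varepsilon})$ polynomial $p$, and with multiplicity-one bulk eigenvalues the Gaussian coefficients on the spike and on the bulk eigenvectors are all $\Theta(1)$, so the resulting bound on $\langle w, u\rangle$ for $w = p(A)g$ is of order $n^{O(c)}\sqrt{\log n}$ --- vacuous. An adversarial polynomial can concentrate on a single near-edge bulk eigenvalue and still place $n^{\Omega(c)}$ weight on the spike, so one cannot conclude small correlation. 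The construction used here deliberately departs from the spiked random-matrix ensembles of prior work: the non-top eigenvalues are placed at the extrema of the degree-$q$ Chebyshev polynomial and each is given algebraic multiplicity $t = \Theta(n\sqrt{\varepsilon}/\log n)$, so that the starting vector's mass on each bulk eigenspace is $\Theta(\sqrt{t})$ versus $O(\sqrt{\log n})$ on the spike; it is this multiplicity gap, not edge-fluctuation or indistinguishability considerations, that defeats the $n^{o(1)}$ Chebyshev amplification and produces the $\log n/\sqrt{\varepsilon}$ threshold in one shot. Your proposed ``direct sum over geometrically spaced spike strengths'' for the $\log n$ factor is a different and undeveloped mechanism, and the statistical-indistinguishability route you gesture at is precisely the style of argument that previously only yielded $\Omega(1/\varepsilon^{1/3})$-type bounds or eigenvalue-estimation bounds that do not transfer to spectral LRA.
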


\begin{remark}[On Optimality]
Krylov Iteration~\cite{musco2015randomized} needs  $O\Paren{  \log(n)/\eps^{1/2} }$ matrix-vector products,  and therefore we resolve the matrix-vector complexity of rank-$1$ Spectral low-rank approximation. 
\end{remark}

\begin{remark}[Matrix-Vector vs. RAM]
In the RAM model,  Krylov iteration can be implemented in $O\Paren{\nnz(A)\log(n)/\eps^{1/2} }$ time, and Open Problem 1 in Woodruff's monograph~\cite{woodruff2014sketching} asks whether this can be improved to $O\Paren{\nnz(A) + n\poly(1/\eps) }$. This question has also been restated in several recent papers~\cite{bakshi2021learning,kacham2021reduced,woodruff2022improved}. Our result provides evidence for the lack of algorithmic progress on this problem.    
\end{remark}

\begin{remark}[Comparison to Prior Work]
To the best of our knowledge there is no known matrix-vector lower bound for Spectral LRA. Simchowitz, Alaoui and Recht~\cite{SAR18} obtain a matrix-vector lower bound for estimating the top-$k$ eigenvalues. However this does not translate to any lower bound for Spectral LRA (see Appendix A in~\cite{bakshi2022low} for details). Braverman, Hazan, Simchowitz and Woodworth~\cite{braverman2020gradient} introduce a different hard instance for estimating the top eigenvalue, and Bakshi, Clarkson and Woodruff adapt this instance to a lower bound of $\Omega\Paren{1/\eps^{1/3}}$ for Schatten-$p$ LRA, when $p$ is a fixed constant. In fact, they leave obtaining any lower bound that grows as a function of $1/\eps$, for Spectral LRA as an open problem (see Section 1.2 in~\cite{bakshi2022low}).    
\end{remark}


Next, we state our upper bound for Schatten-$p$ low-rank approximation, for any $p\geq 1$ that is bounded by a fixed constant.

\begin{theorem}[Upper bound for Schatten-$p$ LRA]
\label{thm:upper-bound-schatten-p}
Given a $n\times d$ matrix $A$, $0< \eps<1$, and $1\leq p$, there exists an algorithm that requires $O\Paren{p\log(1/\eps)/\eps^{1/3}}$ matrix-vector products and outputs a unit vector $v$ such that with probability at least $99/100$,
\begin{equation*}
    \Norm{ A\Paren{I - vv^\top} }_{\calS_p} \leq \Paren{1+\eps} \min_{\norm{u}_2=1} \Norm{ A \Paren{I - uu^\top} }_{\calS_p}.
\end{equation*}
\end{theorem}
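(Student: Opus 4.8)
I would run single-vector Krylov iteration (Algorithm~\ref{algo:block-krylov-single-vector}) to depth $q = \Theta\Paren{p\log(1/\eps)/\eps^{1/3}}$: sample a random Gaussian $g \in \mathbb{R}^d$, form $\calK = \mathrm{span}\{g,(A^\top A)g,\dots,(A^\top A)^q g\}$ with $O(q)$ matrix-vector products, orthonormalize to an orthonormal $Q$, and return the Frobenius-best rank-$1$ direction $\hat v$ in $\calK$ (the top right singular vector of $AQ$). Normalize $\sigma_1 = 1$, let $v_1, v_2, \dots$ be the right singular vectors of $A$, and recall $\mathrm{OPT}^p := \min_{\norm{u}_2=1}\Norm{A(I-uu^\top)}_{\calS_p}^p = \sum_{i \ge 2}\sigma_i^p$, attained at $u = v_1$. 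The first ingredient is a \emph{deterministic reduction} showing it suffices that $\calK$ contains a unit vector $w$ which, up to negligible leakage, is supported on a band $\{v_i : \sigma_i^2 \ge \lambda\}$ of near-top singular directions, with $\sigma_1^2 - \norm{Aw}_2^2$ small relative to a budget determined by $\lambda$ and $\ell := |\{i : \sigma_i^2 \ge \lambda\}|$. In one direction this comes from splitting $A(I - \hat v\hat v^\top) = A(I - v_1 v_1^\top) + A(v_1 v_1^\top - \hat v\hat v^\top)$ and bounding the rank-$\le 2$ second summand in $\calS_p$ by $2(\sigma_1^2 - \norm{A\hat v}_2^2)^{1/2}$ (which suffices whenever $\sigma_2$ is bounded away from $\sigma_1$); in the clustered regime, from Cauchy interlacing for $A(I - \hat v\hat v^\top)$ together with the telescoping estimate $\sum_{j=2}^{\ell}\Paren{\widetilde\sigma_j^{\,p} - \sigma_j^{\,p}} \le \sigma_1^{\,p} - \sigma_\ell^{\,p} \le \frac{p}{2}(1 - \sigma_\ell^2)$ on the singular values $\widetilde\sigma_j$ of $A(I - \hat v\hat v^\top)$, which against $\mathrm{OPT}^p = \Omega_p(\ell - 1)$ lands within a $(1+\eps)$-factor once $1 - \lambda \lesssim \frac{\eps}{p}(\ell - 1)$. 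Since $\hat v$ is Frobenius-optimal in $\calK$ it inherits $\norm{A\hat v}_2 \ge \norm{Aw}_2$, and one separately argues that its mass leaking out of the band is negligible.

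Second, I would \emph{construct $w$ by the polynomial method, avoiding any $\log n$ loss}. Expanding $g = \sum_i c_i v_i$ with $c_i$ i.i.d.\ standard Gaussian, $\calK$ contains $\psi(A^\top A)g = \sum_i c_i\psi(\sigma_i^2)v_i$ for every polynomial $\psi$ of degree $\le q$. The key device is to use only $\psi(x) = x \cdot r(x)$: then the leakage $\sum_{i \ge 2}c_i^2\psi(\sigma_i^2)^2 = \sum_{i \ge 2}c_i^2\sigma_i^4\,r(\sigma_i^2)^2 \le \Paren{\max_{[0,\sigma_2^2]}|r|}^2\sum_{i \ge 2}c_i^2\sigma_i^2$, and $\sum_{i \ge 2}c_i^2\sigma_i^2$ is a \emph{single} nonnegative quadratic form in $g$ with mean $\sum_{i \ge 2}\sigma_i^2$, hence $O\Paren{\sum_{i \ge 2}\sigma_i^2}$ with probability at least $0.99$ by Markov --- with no union bound over the $n$ coordinates of $g$; also $c_1^2 = \Omega(1)$ with constant probability, boosted by $O(1)$ independent repetitions. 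This is precisely what turns the ``$\max_i c_i^2 = O(\log n)$'' step behind the $\log n$ factor in~\cite{musco2015randomized, bakshi2022low} into a single concentration bound, leaving only the $\log(1/\eps)$ needed to fix the amplification level of a Chebyshev polynomial. Taking $r$ a rescaled Chebyshev polynomial of degree $O\Paren{p\log(1/\eps)/\eps^{1/3}}$ with $r = 1$ at the top of the spectrum and $|r| \le \eps^{\Theta(p)}$ on $[0, \lambda - \Omega(\eps^{2/3})]$, the normalized vector $w = \psi(A^\top A)g / \norm{\psi(A^\top A)g}_2$ with $\psi = x\,r(x)$ is supported up to an $\eps^{\Theta(p)}$-fraction on $\{v_i : \sigma_i^2 \ge \lambda\}$ and satisfies $\sigma_1^2 - \norm{Aw}_2^2 \le (1 - \lambda) + \eps^{\Theta(p)}\sum_{i \ge 2}\sigma_i^2$, which lies within the budget.

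\emph{The main obstacle} is to choose the band cutoff $\lambda$ (equivalently the cut $\ell$) from the singular values of $A$ so that \emph{simultaneously}: the spectral gap at position $\ell$ is $\gtrsim \eps^{2/3}$ (keeping the Chebyshev degree $O(p\log(1/\eps)/\eps^{1/3})$); $1 - \lambda \lesssim \frac{\eps}{p}(\ell - 1)$ (so the interlacing/telescoping bound, against $\mathrm{OPT}^p = \Omega_p(\ell - 1)$, is within a $(1+\eps)$-factor); and the \emph{actual} output $\hat v$ --- not just the hand-crafted $w$ --- has its mass outside the band controlled, which requires translating $\sigma_1^2 - \norm{A\hat v}_2^2$ through the gap. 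Establishing that such a $\lambda$ always exists, via an averaging argument over $O\Paren{\mathrm{poly}(p)/\eps^{2/3}}$ candidate thresholds, and in particular handling the awkward ``moderately clustered'' regime in which a few singular values lie within $(\eps/p,\ \eps^{2/3})$ of $\sigma_1$ --- where the naive dichotomy ``large spectral gap'' versus ``tightly clustered top'' falls a $\mathrm{poly}(p/\eps)$ factor short --- is where the bulk of the technical work lies.
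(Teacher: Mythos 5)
Your algorithmic choice and your mechanism for killing the $\log n$ factor are essentially the paper's: the paper also runs single-vector Krylov on $AA^\top$ to degree $O(p\,\eps^{-1/3}\log(p/\eps))$, and in its existence lemma (Lemma~\ref{lem:existing-of-good-vectors}) every polynomial it builds has the form ``large power of $x$ times a correction,'' so that the leakage obeys $\sum_{i\ge 2}\hat{\phi}(\lambda_i)^2 \le \frac{\eps^4}{p^2}\sum_{i \ge 2}\lambda_i^{p}$ and a single Markov bound plus a constant-probability lower bound on $g_1$ replaces the union bound over coordinates --- exactly your $\psi(x)=x\,r(x)$ device.

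Where the proposal genuinely falls short is the reduction from ``good vector in $\calK$'' to a $(1+\eps)$ Schatten-$p$ guarantee. Your route needs the \emph{output} $\hat v$ (not just the hand-crafted $w$) to have negligible mass outside a band $\{\sigma_i^2\ge\lambda\}$, which forces a spectral gap of order $\eps^{2/3}$ at the cut, and simultaneously $1-\lambda \le O(\frac{\eps}{p}(\ell-1))$ for the interlacing/telescoping step; you yourself note that in the ``moderately clustered'' regime this dichotomy falls a $\mathrm{poly}(p/\eps)$ factor short and you only gesture at an averaging argument over thresholds without carrying it out. That unexecuted step is the crux of the theorem, not a routine verification, so as written the proof is incomplete. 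The paper sidesteps the entire issue with two ingredients your plan lacks. First, the generalized Pythagorean inequality for Schatten norms (Lemma~\ref{lem:schatten-pythagorean}, from Lemma 5.5 of~\cite{bakshi2022low}) yields Lemma~\ref{lem:existence-implies-ub}: if $w\in\calK$ satisfies $\norm{A^\top w}_2^p \ge (1+\eps)\sigma_1^p - \eps\norm{A}_{\calS_p}^p$, then $v = A^\top w/\norm{A^\top w}_2$ is already a $(1+\eps)$-approximation, and since the algorithm's output maximizes $\norm{A^\top w}_2$ over $\calK$ it inherits the hypothesis automatically --- no band, no gap, no interlacing, and no control of the output's leakage are ever needed. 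Second, in the intermediate regime (tail mass between $\tfrac12$ and $\eps^{-1}$, at most $\eps^{-1/3}+1$ eigenvalues of $AA^\top$ in $[1-\tfrac{1}{2p},1]$), the paper uses the polynomial $x^{t/2}\prod_{i:\,1-\eps/2p\,\ge\,\lambda_i\,\ge\,1-1/2p}(x-\lambda_i)$, which annihilates the medium eigenvalues \emph{exactly}, regardless of any gap structure, at a cost of only $O(\eps^{-1/3})$ extra degree and a value loss at $1$ of $2^{-O(\eps^{-1/3}\log(p/\eps))}$ that the $x^{t/2}$ factor recoups; eigenvalues within $\eps/(2p)$ of the top are simply tolerated because the tail mass is at least $\tfrac12$. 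Unless you either adopt this Pythagorean reduction or actually complete your threshold-averaging argument together with the output-leakage control, the proposal does not yet establish the stated bound.
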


\begin{remark}[On Optimality]
Bakshi, Clarkson and Woodruff~\cite{bakshi2022low} obtained an upper bound of $O\Paren{\log(n/\eps)p^{1/6}/\eps^{1/3}}$ and a lower bound of $\Omega\Paren{1/\eps^{1/3}}$ for a fixed constant $p$. 
When $p$ is a fixed universal constant (which holds for Frobenius and Nuclear LRA), we match the aforementioned lower bound up to a $\log(1/\eps)$ factor. 
\end{remark}

\begin{remark}[Krlov Iteration vs. Block Krylov]
Bakshi, Clarkson and Woodruff run two instantiations of Krylov methods in parallel, one with \textit{block size} $1$ and another with \textit{block size} $O(1/\eps^{1/3})$. Our algorithm only requires one instantiation, with \textit{block size} $1$, which is known to be more numerically stable in practice~\cite{carson2021stability}. Combined with our lower bounds, our results imply that we never need to run Krylov iteration with a \textit{block size} larger than $1$ for rank-$1$ low-rank approximation.
\end{remark}

We believe our algorithm and analysis may be generalizable to rank-$k$ low-rank approximation as well,
but we focus on the rank-$1$ case for ease of exposition.

Finally, we highlight a \textit{lifting theorem} that shows under fairly general conditions, a lower bound against Krylov iteration with large block size translates to an information-theoretic lower bound against adaptive queries in the matrix-vector model, which may be of independent interest.
This lifting result comes from a yet-unpublished paper of~\cite{samplingLB}, whose proof is included in Appendix \ref{appendix:lifting} for completeness.\footnote{%
We have received explicit permission from the authors of~\cite{samplingLB} to reproduce the proof.
}
We provide a simplified description of the lifting result here.

\begin{theorem}[Lifting Block Krylov Lower Bounds, Lemma~\ref{lem:chen_block_krylov} (informal)]
\label{thm:lifting-block-krylov}
    Let $\calA$ be any adaptive algorithm that makes $k$ matrix-vector queries to a $n \times n$ symmetric matrix $A$, where the eigenvectors of $A$ are uniformly (Haar-)random.  
     Then, given $k$ random Gaussians $V = [ v_1, \dots, v_k]$, one can perfectly simulate the distribution of $k$ adaptive queries and responses for $\calA$, given just the Krylov matrix $\calK = [ A V, A^2V , A^3V ,\ldots A^k V]$ and no other knowledge of $A$.
\end{theorem}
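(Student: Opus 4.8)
The plan is to play off two compatible symmetries against each other: the law of $A$ is invariant under conjugation $A \mapsto QAQ^\top$ by \emph{any} fixed orthogonal $Q$ (since its eigenvectors are Haar), while a standard Gaussian vector is rotationally invariant. Exploiting this, I would argue that in each of the $k$ adaptive rounds the \emph{new direction} that $\calA$ queries can be replaced, without changing the distribution of the transcript, by a direction derived from a fresh Gaussian — after which the whole transcript is confined to the block Krylov subspace spanned by the columns of $[\,V, \calK\,]$ and is reconstructible from $(V, \calK)$ alone.

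First I would record the elementary fact that the transcript lives in, and evolves deterministically inside, the explored subspace. Write $R$ for $\calA$'s coins, so that the $i$-th query is $w_i = f_i(R, w_1, Aw_1, \dots, w_{i-1}, Aw_{i-1})$ for fixed functions $f_i$. Let $S_i := \mathrm{span}\{w_1, Aw_1, \dots, w_i, Aw_i\}$, let $\Phi_i$ be an orthonormal basis of $S_i$, and $M_i := \Phi_i^\top A \Phi_i$. Since every transcript vector lies in $S_k$, writing $\hat w_j := \Phi_k^\top w_j$ one gets $\Phi_k^\top A w_j = M_k \hat w_j$, so the entire transcript is a deterministic function of $(R, \Phi_k, M_k)$; equivalently, once one tracks the actual vectors, the response to any query is obtained from the transcript so far by linear algebra together with a single application of $A$. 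This is what will make the simulator's responses computable from $\calK$.

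Second, and this is the heart of the argument, I would prove by induction on $i$ the invariant that the conditional law of $A$ given the round-$i$ transcript is invariant under conjugation by any orthogonal $Q$ fixing $S_i$ pointwise: such a $Q$ leaves every $w_j$ and $Aw_j$ with $j \le i$ unchanged, hence fixes the round-$i$ transcript, while the prior on $A$ is conjugation-invariant. The payoff is that in round $i+1$ I may replace the new-direction component $(I-\Pi_{S_i})w_{i+1}$ of $\calA$'s query by a uniformly random unit vector of $S_i^\perp$ rescaled to the same length, without altering the law of the final transcript; and such a uniform unit vector is a deterministic function of a fresh Gaussian $v_{i+1}$, namely $v_{i+1}$ Gram--Schmidt-orthogonalized against $S_i$ and normalized. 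Iterating this over $i=0,1,\dots,k-1$, and using that a rotation of a Gaussian is again a Gaussian (so that the rotations these couplings induce on $S_i^\perp$ do no harm to later rounds whose new directions are themselves being replaced by Gaussians), I obtain a coupled execution in which round $i$ contributes depth at most $i$ to the transcript while consuming only $v_1, \dots, v_i$; in particular $S_k \subseteq \mathrm{span}[\,V, AV, A^2V, \dots, A^kV\,]$, and each query $\tilde w_{i+1} = \Pi_{S_i}w_{i+1} + \rho_{i+1}\hat v_{i+1}$ and its response $A\tilde w_{i+1}$ are explicit linear combinations of the columns of $[\,V, AV, \dots, A^kV\,]$. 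The simulator then just draws $R$, runs this coupled execution on $(V, \calK)$, and outputs the resulting transcript, which by the above is equidistributed with $\calA$'s transcript on $A$.

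The main obstacle is making the rotation coupling of the second step fully rigorous in the adaptive setting: conjugating $A$ by a Haar-random $Q$ that fixes $S_i$ pointwise must be shown to simultaneously (i) leave the round-$i$ transcript unchanged, (ii) turn the round-$(i+1)$ new direction into a uniform direction on $S_i^\perp$, and (iii) \emph{only} pre-rotate the orthogonal complement of $S_i$, a rotation that is absorbed once all later rounds' new directions have been replaced by rotation-invariant Gaussians — so the replacements must be carried out and propagated simultaneously rather than literally one at a time. Handling degenerate rounds (where the new direction vanishes, so nothing needs coupling) and verifying the depth bookkeeping — that the products $AV, \dots, A^kV$ furnished by $\calK$ are exactly enough to realize all responses — is routine but must be done carefully.
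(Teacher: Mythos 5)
Your proposal follows essentially the same route as the paper's proof: the posterior-invariance statement you plan to establish by induction (the law of $A$ given the transcript is invariant under orthogonal maps fixing the explored subspace) is exactly the paper's conditioning lemma, and your replacement of each new adaptive direction by a Gram--Schmidt-orthogonalized fresh Gaussian, with the induced rotations of the complement absorbed into later rounds, is precisely how the simulator's vectors $\vsim_k$ and rotation matrices $\Usim_k$ are constructed and propagated in the inductive proof of Lemma~\ref{lem:chen_block_krylov}. The one obstacle you flag---making the round-by-round rotation coupling rigorous and simultaneous---is exactly the part the paper carries out via the explicit matrices $U_k(\cdot)$, the auxiliary sequence $\vssim_k$, and the inductive distributional identity, so your approach is correct and matches the paper, though the hardest step is acknowledged rather than executed.
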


At a high-level, this result shows that if there exists any matrix-vector algorithm can solve a given eigenvalue or eigenvector problem with $k$ adaptive queries, then Krylov iteration with block size $k$ can be used to solve the same problem with $k$ iterations. In general, this simulator requires $k^2$ matrix-vector products, as opposed to $k$ matrix-vector products used by the adaptive algorithm, resulting in a quadratic overhead. However, in the setting of Spectral LRA, we demonstrate that starting with block size larger than $1$ adds no value (see Theorem~\ref{thm:lb-against-block-krylov-spectral-lra}), and the lifting technique obtains an optimal lower bound.

\begin{remark}
    We note~\cite{samplingLB} proved a slightly simpler version of \Cref{thm:lifting-block-krylov}, as they only require the lifting for solving a problem that depends on the eigenvalues (specifically, estimating $\Tr(A^{-1})$, for the purpose of generating a sample from $\calN(0, A^{-1})$). However, our lower bound for low-rank approximation is based on identifying good \emph{eigenvectors}, so \Cref{thm:lifting-block-krylov} is in fact a slight generalization of their result.
\end{remark}

\subsection{Open Problems}

We highlight the following open problems stemming from our work:

\begin{openquestion}[Larger Target Rank, refining Open Question 34 in~\cite{bakshi2022algorithms}]
While we focus on rank-$1$ low-rank approximation, the upper bound for Spectral low-rank approximation when the target rank is $k$ is $O\Paren{k\log(n)/\sqrt{\eps} }$ matrix-vector products~\cite{musco2015randomized}. Our lower bound implies that this is optimal for any target rank that is a fixed universal consant. What is the right \textit{matrix-vector} complexity as a function of $k$, $\log(n)$ and $1/\eps$ simultaneously?  
\end{openquestion}

\begin{openquestion}[Phase Transition for large $p$]
Perhaps surprisingly, when $p$ is a fixed universal consatant, the matrix-vector complexity does not grow, even logarithmically, with input size. However, the $\log(n)$ is neccesary for $p=\infty$ (see Theorem~\ref{thm:spectral-lra-lower-bound}). Is it possible to smoothly interpolate between these two regimes and obtain the correct dependence on $p$, $\log(n)$ and $1/\eps$ simultaneously? 
\end{openquestion}

\section{Technical Overview}
\label{sec:tech-overview}

In this section, we begin by describing our hard instance and provide a complete proof of a lower bound against Krylov Iteration (Algorithm~\ref{algo:block-krylov-single-vector}) for Spectral LRA. We then outline the approach to establish a lower bound against Block Krylov (Algorithm~\ref{algo:block-krylov-generalized}) and explain how to lift a lower bound against Block Krylov to a general matrix-vector product lower bound.
Finally, we discuss the new ideas we require to obtain a better upper bound for Schatten-$p$ LRA, when $p$ is a fixed constant.  

\subsection{Hard Instance and Lower Bound against Krylov Methods} \label{subsec:single-krylov-lb}

As a warm up, we provide a lower bound against Krylov Iteration (Algorithm~\ref{algo:block-krylov-single-vector}). At a high level, our proof proceeds by constructing a hard instance, $A$, such that the Krylov subspace $K = [g, Ag, A^2 g, \ldots, A^q g]$, for $q = c\log(n)/\sqrt{\eps}$, for a sufficiently small constant $c$, does not span any vector that has correlation at least $\sqrt{\eps/100}$ with the top eigenvector of $A$ (see Lemma~\ref{lem:krylov-alignment}). We then show that any $(1+\eps)$ relative-error Spectral low-rank approximation must have correlation at least $\sqrt{\eps/100}$ with the top-eigenvector of $A$ (see Lemma~\ref{lem:alignment-to-lra}).

Formally, we obtain the following theorem:

\begin{theorem}[Spectral LRA is hard for Krylov Methods]
\label{thm:lb-against-krylov-spectral-lra}
Given $\eps>0$, let $n = \Omega\Paren{1/\eps^2}$. Then, there exists a distribution over $n \times n$ matrices $A$ such that Algorithm \ref{algo:block-krylov-single-vector} requires $q= \Omega\Paren{ \log n/\sqrt{\eps}} $ matrix-vector products to output a unit vector $v$ such that with probability at least $99/100$,
\begin{equation*}
    \Norm{ A -  A v v^\top  }_{\op} \leq \Paren{1+\eps} \min_{ \norm{u}_2=1 } \Norm{ A - A uu^\top }_{\op}. 
\end{equation*}
\end{theorem}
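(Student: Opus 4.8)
The plan is to prove \Cref{thm:lb-against-krylov-spectral-lra} by combining the two modular ingredients flagged in the overview: that alignment with the top eigenvector is necessary for a good LRA (Lemma~\ref{lem:alignment-to-lra}), and that Krylov subspaces cannot achieve that alignment (Lemma~\ref{lem:krylov-alignment}). For the hard instance I would take the symmetric matrix $A = u_1 u_1^\top + \sum_{i=2}^{n}\mu_i\, u_i u_i^\top$, where $u_1,\dots,u_n$ is an orthonormal basis --- Haar-random, although for the purely Krylov argument one may as well take $A$ diagonal and let the start vector be $g \sim \calN(0, I_n)$ --- the top eigenvalue is $\lambda_1 = 1$, and the $n-1$ ``wall'' eigenvalues $\mu_2,\dots,\mu_n$ are spread out inside the fixed-width interval $I = [\tfrac12,\, 1-2\eps]$ (say, at Chebyshev nodes of $I$). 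Two features are essential: (i) $I$ has width $\Theta(1)$ while $\lambda_1$ sits only $\Theta(\eps)$ above it, which forces Chebyshev-type amplification from $I$ up to $\lambda_1$ to cost $e^{\Theta(q\sqrt\eps)}$ over $q$ degrees rather than $e^{\Theta(q)}$; and (ii) the entire wall lies a multiplicative $\approx 2\eps$ below $\lambda_1$, so that by the Courant--Fischer characterization $\mathrm{OPT} := \min_{\norm{u}_2 = 1}\Norm{A(I-uu^\top)}_{\op} = \sigma_2(A)$ (the second singular value), which is $\le 1-2\eps$, attained at $u = u_1$.

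For Lemma~\ref{lem:alignment-to-lra} the argument is short. For a unit vector $v$, write $\gamma = \langle v, u_1\rangle$ and $v = \gamma u_1 + \sqrt{1-\gamma^2}\, w$ with $w \perp u_1$. The vector $u_1 - \gamma v$ is orthogonal to $v$, has norm $\sqrt{1-\gamma^2}$, and satisfies $(I - vv^\top)(u_1 - \gamma v) = u_1 - \gamma v$, so $A(I - vv^\top)(u_1 - \gamma v) = A(u_1 - \gamma v) = (1-\gamma^2)u_1 - \gamma\sqrt{1-\gamma^2}\, Aw$; since $u_1 \perp Aw$, dividing by $\norm{u_1 - \gamma v}_2$ gives $\Norm{A(I - vv^\top)}_{\op} \ge \sqrt{1-\gamma^2}$. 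Hence if $\langle v, u_1\rangle^2 < \eps/100$ then $\Norm{A(I - vv^\top)}_{\op} > 1 - \eps/100 > (1+\eps)(1-2\eps) \ge (1+\eps)\,\mathrm{OPT}$ for all sufficiently small $\eps$, and therefore every $(1+\eps)$-approximate rank-$1$ spectral LRA of $A$ must satisfy $\langle v, u_1\rangle^2 \ge \eps/100$.

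The heart of the proof is Lemma~\ref{lem:krylov-alignment}. Every unit vector in the output span of Algorithm~\ref{algo:block-krylov-single-vector} lies in $\calK_q = \mathrm{span}\{g, Ag, \dots, A^q g\}$, hence equals $p(A)g/\Norm{p(A)g}_2$ for a polynomial $p$ of degree $\le q$, so its squared correlation with $u_1$ is $\alpha_1^2 p(1)^2 \big/ \big(\alpha_1^2 p(1)^2 + \sum_{i\ge 2}\alpha_i^2 p(\mu_i)^2\big)$, where $\alpha_i := \langle g, u_i\rangle$ are i.i.d.\ standard Gaussians. It therefore suffices to prove that, with probability $1 - n^{-\Omega(1)}$ over $g$,
\begin{equation*}
\sum_{i\ge 2}\alpha_i^2\, p(\mu_i)^2 \;\ge\; \frac{100}{\eps}\,\alpha_1^2\, p(1)^2 \qquad\text{for every polynomial $p$ of degree $\le q$.}
\end{equation*}
A one-dimensional Gaussian tail bound controls $\alpha_1^2 \le O(\log n)$. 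For the left-hand side, and crucially to obtain uniformity over $p$, I would pass to an orthonormal basis $\phi_0,\dots,\phi_q$ of the degree-$\le q$ polynomials for the inner product $\langle p, r\rangle := \sum_{i\ge 2}p(\mu_i)r(\mu_i)$ (well defined since $n-1 > q$), so that in the coordinate vector $c$ of $p$ one has $\sum_{i\ge 2}p(\mu_i)^2 = \norm{c}_2^2$ and $\sum_{i\ge 2}\alpha_i^2 p(\mu_i)^2 = c^\top\Sigma c$ with $\Sigma = \sum_{i\ge 2}\alpha_i^2\psi_i\psi_i^\top$, $\psi_i = (\phi_0(\mu_i),\dots,\phi_q(\mu_i))$, and $\mathbb{E}[\Sigma] = I_{q+1}$. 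Since the $\mu_i$ are spread out, $\max_i\norm{\psi_i}_2^2$ is small (order $q/n$ for Chebyshev spacing), so matrix Chernoff/Bernstein gives $\lambda_{\min}(\Sigma) \ge \tfrac12$ with probability $1 - n^{-\Omega(1)}$ (using $n \gg \poly(q)\log n$, which holds when $n = \Theta(1/\eps^2)$ and $\eps$ is small). It then remains to verify the purely deterministic extremal bound $\max_{\deg p \le q} p(1)^2 \big/ \sum_{i\ge 2}p(\mu_i)^2 \le \eps / \poly(\log n)$. Since $n-1 \gg q$ and the $\mu_i$ are spread out in $I$, a quadrature estimate (of Marcinkiewicz--Zygmund type) makes $\sum_{i\ge 2}p(\mu_i)^2$ comparable, up to constants, to $n$ times a weighted $L^2$-norm of $p$ on $I$, so the extremal ratio is $\Theta(1/n)$ times the diagonal Christoffel--Darboux kernel of that measure at the point $1$; since $1$ lies a distance $\Theta(\eps)$ outside the length-$\Theta(1)$ interval $I$, this kernel is $O(\eps^{-1/2})\, e^{O(q\sqrt\eps)}$ (dominated by its degree-$q$ term, a Chebyshev/Legendre polynomial evaluated just outside $I$). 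Thus $\max_{\deg p\le q} p(1)^2 / \sum_{i\ge 2}p(\mu_i)^2 = O\!\big(\tfrac{1}{n\sqrt\eps}\big)\, e^{O(q\sqrt\eps)}$, and requiring this to be at most $\eps/\poly(\log n)$, taking logarithms, and using $n = \Omega(1/\eps^2)$ (so $\log(n\eps^{3/2}) = \Omega(\log n)$) shows the bound holds exactly for $q \le c\log n/\sqrt\eps$ with $c$ a small absolute constant. For such $q$, Lemmas~\ref{lem:alignment-to-lra} and~\ref{lem:krylov-alignment} together show $\calK_q$ contains no $(1+\eps)$-approximate rank-$1$ LRA of $A$, so Algorithm~\ref{algo:block-krylov-single-vector} fails with probability $1 - n^{-\Omega(1)} > 1/100$, contradicting a success probability of $99/100$ unless $q = \Omega(\log n/\sqrt\eps)$.

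I expect the main obstacle to be exactly the interaction between the uniformity over all degree-$\le q$ polynomials in Lemma~\ref{lem:krylov-alignment} and the sharpness of the extremal estimate. One must rule out that the Gaussian weights $\alpha_i^2$ conspire with an adversarially chosen $p$ that vanishes at up to $q$ of the wall points --- this is precisely why the $\mu_i$ must be spread out so that the design matrix $\Sigma$ is well conditioned --- and one must keep the constant in the Chebyshev/Christoffel estimate under control so that the threshold lands at $\Theta(\log n/\sqrt\eps)$ rather than off by a $\sqrt\eps$ or $\log$ factor. The remaining ingredients --- the one-dimensional Gaussian tail, the Courant--Fischer evaluation of $\mathrm{OPT}$, and converting ``$\calK_q$ has no good vector'' into a lower bound against Algorithm~\ref{algo:block-krylov-single-vector} --- are routine.
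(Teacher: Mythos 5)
Your proposal is correct in outline, but it takes a genuinely different route from the paper, both in the hard instance and in the probabilistic argument. The paper's instance puts the top eigenvalue at $1+2\eps$ and places the ``wall'' at the $q+1$ extrema of the degree-$q$ Chebyshev polynomial on $[-1,1]$, giving each wall eigenvalue algebraic multiplicity $t=\Theta(n\sqrt{\eps}/\log n)$; this high multiplicity is precisely what makes the uniformity-over-polynomials issue disappear, since the Gaussian mass on each distinct eigenvalue concentrates at $\Theta(\sqrt{t})$ and one only needs the scalar extremal property of Chebyshev polynomials (the analogue of your Christoffel bound, applied pointwise at the extrema) together with elementary scalar concentration. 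You instead use multiplicity-one wall eigenvalues at Chebyshev nodes of a constant-width subinterval $[1/2,1-2\eps]$ with $\lambda_1=1$, and you restore uniformity over all degree-$\le q$ polynomials by a random-design argument ($\lambda_{\min}$ of $\sum_i \alpha_i^2\psi_i\psi_i^\top$ via matrix Chernoff) combined with a reproducing-kernel/Christoffel-function extremal estimate. This is executable: if you take the nodes to be exact Gauss--Chebyshev nodes of degree $n-1$, the quadrature comparability you worry about becomes an identity (Gauss quadrature is exact for degree $\le 2q\ll n$), the incoherence bound $\max_i\|\psi_i\|_2^2=O(q/n)$ follows from $|T_j|\le 1$ on the interval, and the chi-squared weights in the matrix Chernoff step are handled by a routine truncation at $O(\log n)$; the arithmetic has plenty of slack since $n\eps^{3/2}\ge n^{1/4}$, so even a crude kernel bound of $O(q)e^{O(q\sqrt{\eps})}/n$ suffices. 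What each approach buys: the paper's is more elementary and self-contained (no matrix concentration or approximation-theoretic kernel estimates), and its high-multiplicity, Haar-rotated instance is reused verbatim for the Block Krylov and adaptive lower bounds later in the paper, whereas your multiplicity-one instance would not plug directly into that lifting argument. Your version handles a spectrum with simple eigenvalues, and your alignment lemma is slightly cleaner than the paper's \Cref{lem:alignment-to-lra}, since lower-bounding $\|A(I-vv^\top)\|_{\op}\ge\sqrt{1-\langle v,u_1\rangle^2}$ needs only the correlation of $v$ itself with $u_1$, not additionally that of $Av$.
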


\begin{mdframed}
  \begin{algorithm}[Krylov Iteration, \cite{musco2015randomized}]
    \label{algo:block-krylov-single-vector}\mbox{}
    \begin{description}
    \item[Input:] An $n \times n$ matrix $A$,  iteration count $q$. 
    
    \begin{enumerate}
    \item Let $g$ be a vector drawn  from $\mathcal{N}(0,I)$. Let $\calK = \left[ A  g ; A^2 g;  A^3 g; \ldots; A^q g\right]$ be the $n \times q$  Krylov matrix obtained by concatenating the vectors $Ag, \ldots, A^q g$. 
    \item Compute an orthonomal basis $Q$ for the column span of $\calK$. Let $M =  Q^\top  A^2 Q$. 
    \item Compute the top left singular vector of $M$, and denote it by $y_1$.
    \end{enumerate}
    \item[Output:] $v  = Q y_1$.  
    \end{description}
  \end{algorithm}
\end{mdframed}

\paragraph{Chebyshev Polynomial Background.} We begin by defining our main protagonist: Chebyshev polynomials.

\begin{definition}[Chebyshev Polynomials of the first kind.]
\label{def:chebyshev-polynomials}
For any $d \in \mathbb{N}$, the $d$-th Chebyshev polynomial is defined as follows:
\begin{equation*}
    T_d(x) = \begin{cases} \cos\Paren{ d \arccos(x) } \hspace{1.8in} \textrm{if  } \abs{x} \leq 1 \\
    \frac{1}{2} \Paren{ \Paren{ x - \sqrt{x^2 - 1 } }^d + \Paren{ x + \sqrt{x^2 - 1 } }^d  } \hspace{0.24in} \textrm{if } \abs{x} \geq 1
    \end{cases}
\end{equation*}
\end{definition}

It is easy to see from the above definition that Chebyshev polynomials are bounded when $\abs{x} \leq 1$ and oscillate between $-1$ and $1$. Therefore, we can define locations where they take extremal values:

\begin{definition}[Extrema of Chebyshev Polynomials]
\label{def:extrema-of-chebyshev}
For all $i \in [d]$,   $T_d(x_i)  \in \{-1, 1\}$ iff $x_i = \cos\Paren{i\cdot \pi / d}$.
\end{definition}






\paragraph{Hard Instance for Spectral LRA.}

Recall, our goal is to construct a hard instance such that a $t$-dimensional Krylov subspace does not span any vector that is $(\sqrt{\eps}/10)$-correlated with the top eigenvector, when $ q = c\log(n)/\sqrt{\eps}$, for a sufficiently small constant $c$. Intuitively, any vector in the Krylov subspace can be written as a random linear combination of a degree-$q$ polynomial applied to the eigenvalues of the input matrix. Therefore, we construct an instance where
\begin{enumerate}
    \item The top eigenvector is along a uniformly random direction and the top eigenvalue has gap of $\eps$ from the second largest eigenvalue (in magnitude).
    \item The location of the distinct eigenvalues (excluding the largest magnitude eigenvalue) is such that any bounded degree-$t$ polynomial must attain a large value (close to $1$ in magnitude) at at least one of the distinct eigenvalues.
    \item The algebraic multiplicity of each eigenvalue (excluding the top eigenvalue) is large enough that any polynomial that places non-trivial weight on a such an eigenvalue must imply that the corresponding vector in Krylov subspace is nearly uncorrelated with the top eigenvector.
\end{enumerate}

As alluded to earlier, we set the locations of the distinct eigenvalues to be the points where the degree-$(\log(n)/\sqrt{\eps})$ Chebyshev polynomial achieves extremal values (see Figure~\ref{fig:my_label}). We then duplicate each eigenvalue sufficiently many times, and set the eigenvectors to be a Haar random orthogonal matrix. 
In contrast to prior hard instances~\cite{SAR18,braverman2020gradient,bakshi2022low}, we design the eigenvalues of our hard instance (as opposed to picking a random matrix from a  Wishart ensemble or a deformed Wigner ensemble). Further, in our analysis, it is crucial for each eigenvalue (except the top one) to have high algebraic multiplicity, whereas in prior hard instances, each eigenvalue appeared with algebraic multiplicity exactly $1$.

We are now ready to formally define the hard instance. 


\begin{definition}[Hard Distribution]
\label{def:hard-distribution-chebyshev-polynomial-intro}
Given $\eps>0$, let $n \geq  1/\eps^2$. 
Let $A = U \Lambda U^\top$ be the eigen-decomposition, where $U$ in a uniformly random $n \times n$ matrix with orthonormal columns (see \Cref{def:haar-random}), and $\Lambda$ is a diagonal matrix of the eigenvalues. To define $\Lambda$, we first choose the top eigenvalue $\lambda_1 := 1+2 \eps$ to have multiplicity $1$. Next, we add $q+1$ additional distinct eigenvalues $\lambda_2, \dots, \lambda_{q+2},$ where $q =c \log(n)/\sqrt{\eps}$, so that $\lambda_{i+2} = \cos\left(\frac{i}{q} \cdot \pi\right)$ for $0 \le i \le q$. 
(Note: this $q$ is the same $q$ that we prove a Krylov iteration lower bound against.)
Each of these eigenvalues will have multiplicity $t$, where $t = \Theta(n\cdot \sqrt{\eps} / \log(n))$.
\end{definition}

\begin{figure}
    \centering
    \begin{tabular}{c | c}
\addheight{\includegraphics[width=75mm]{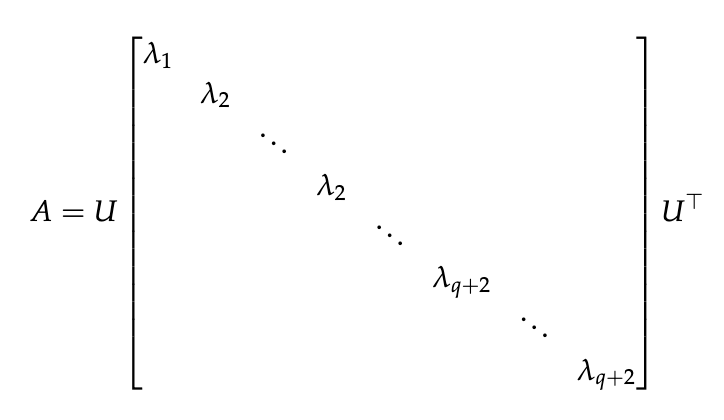}} & 
\addheight{\includegraphics[width=75mm]{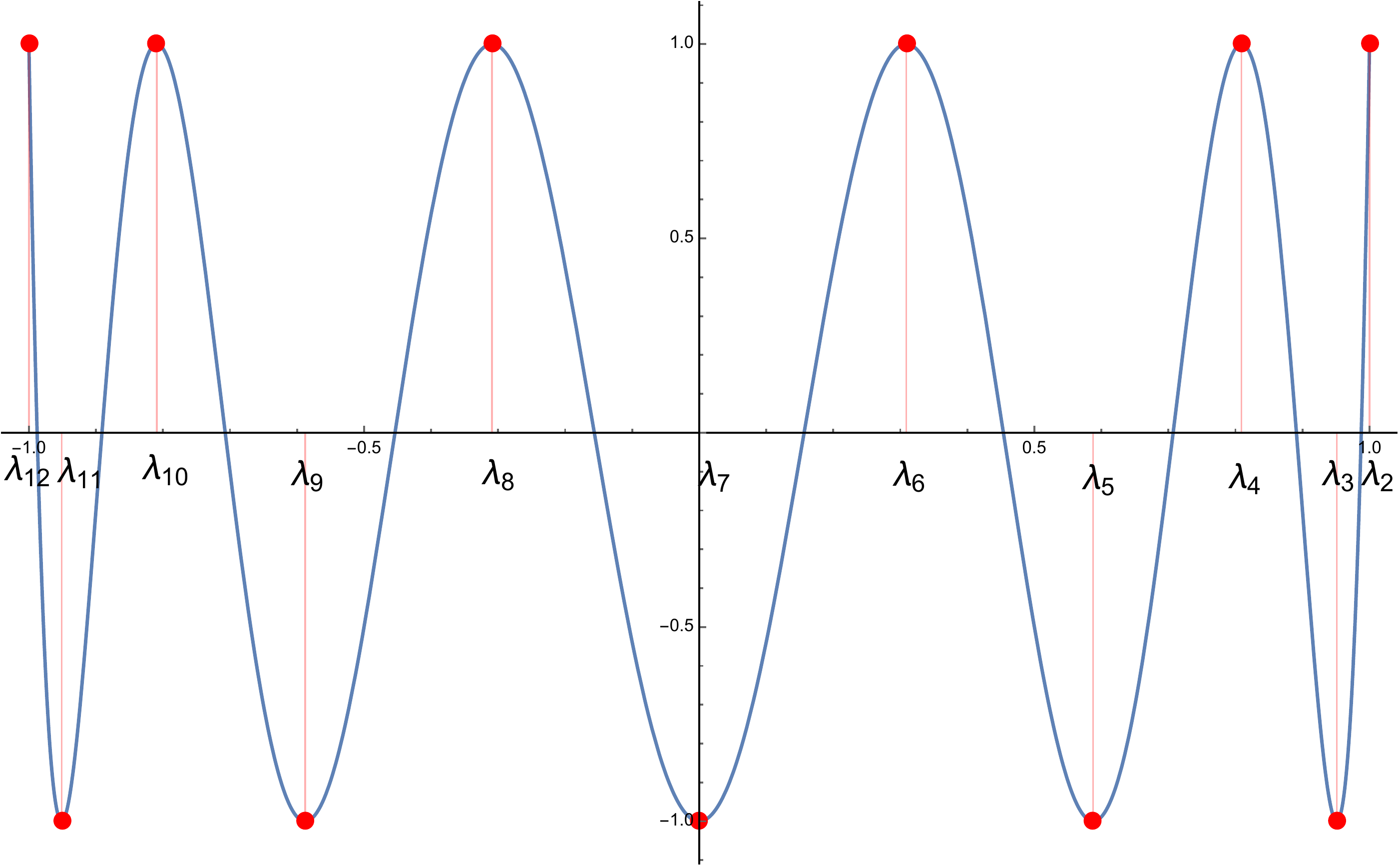}} \\
\end{tabular}
    \caption{ Let $n =\Omega(1/\eps^{2})$ and let $t= \Theta(n\sqrt{\eps}/\log(n)),$ and $q = \Theta(\log(n)/\sqrt{\eps})$. Let $\lambda_2, \lambda_3, \lambda_4,  \ldots, \lambda_{q+2}$ be the points that obtain the extrema of the degree-$q$ Chebyshev polynomial (see Definition~\ref{def:extrema-of-chebyshev}). We create a hard instance where we set $\lambda_1 = 1+2\eps$, and set the remaining eigenvalues to be $\lambda_i$ with algebraic multiplicity $t$, for each $i \in [2,q+2]$. Finally, set $A = U\Lambda U^\top$,  where $U$ is a  uniformly random orthogonal matrix. }
    \label{fig:my_label}
\end{figure}


The key lemma we establish shows that any vector in the Krylov subspace cannot be non-trivially correlated with the top eigenvector of $A$, unless the size of the Krylov subspace is $\Omega(\log(n)/\eps^{1/2})$. 

\begin{lemma}[Alignment of vectors in the Krylov Subspace]
\label{lem:krylov-alignment}
Given $g \sim \calN(0, I)$ and $\eps>0$,  let $A$ be sampled from the hard distribution in Definition \ref{def:hard-distribution-chebyshev-polynomial-intro}, let $\calK = [ g; Ag; A^2 g, \ldots , A^q g ]$ be the Krylov subspace for $q=  c \log(n)/\sqrt{\eps}$, for a sufficiently small constant $c$.  Further, let $u_1, u_2, \ldots , u_n$ be the eigenvectors of $A$, where $u_1$ is the top eigenvector, $u_2, \dots, u_{t+1}$ correspond to the second eigenvalue $\lambda_2$, and so on. Then, with probability at least $1-1/n$, for any vector $w$ in the column span of $\calK$, we have $\Iprod{ w, u_1 }^2 \leq  \eps/100$.  
\end{lemma}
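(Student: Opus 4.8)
The plan is to write any vector $w$ in the column span of $\calK$ as $w = p(A)g$ for some real polynomial $p$ of degree at most $q$, decompose $g$ and $w$ in the eigenbasis $u_1,\dots,u_n$, and then bound the squared correlation $\Iprod{w,u_1}^2 = \Iprod{p(A)g,u_1}^2 / \norm{p(A)g}_2^2$. Since $u_1$ is a single eigenvector with eigenvalue $\lambda_1 = 1+2\eps$, the numerator is $p(\lambda_1)^2\Iprod{g,u_1}^2$. The denominator splits over the distinct eigenvalues: writing $\lambda_2,\dots,\lambda_{q+2}$ for the Chebyshev extrema (each of multiplicity $t$), we have $\norm{p(A)g}_2^2 = p(\lambda_1)^2 \Iprod{g,u_1}^2 + \sum_{j=2}^{q+2} p(\lambda_j)^2 \|g_{(j)}\|_2^2$, where $g_{(j)}$ is the projection of $g$ onto the $t$-dimensional eigenspace of $\lambda_j$. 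So the correlation is at most $p(\lambda_1)^2 \Iprod{g,u_1}^2 / \big(\sum_{j=2}^{q+2} p(\lambda_j)^2 \|g_{(j)}\|_2^2\big)$, and the goal is to show this is $\le \eps/100$ for every degree-$\le q$ polynomial $p$.

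The key deterministic ingredient is a Chebyshev-type extremal inequality: for any polynomial $p$ of degree at most $q$, the value $|p(\lambda_1)| = |p(1+2\eps)|$ at the point just outside $[-1,1]$ is at most roughly $|T_q(1+2\eps)| = O(1)$ times $\max_{j} |p(\lambda_j)|$ (the max over the extremal points). Actually the cleaner statement to aim for is: since $\lambda_2,\dots,\lambda_{q+2}$ are exactly the $q+1$ extrema of $T_q$ in $[-1,1]$, a degree-$q$ polynomial is determined by its values there, and by the extremal property of Chebyshev polynomials, $|p(1+2\eps)| \le T_q(1+2\eps)\cdot \max_{2\le j\le q+2}|p(\lambda_j)|$. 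With $q = c\log(n)/\sqrt{\eps}$ we have $T_q(1+2\eps) = \tfrac12\big((1+2\eps+\sqrt{(1+2\eps)^2-1})^q + \dots\big) \approx e^{q\cdot 2\sqrt{\eps}} = n^{O(c)}$, which is polynomially bounded in $n$ (and can be made $\le n^{1/10}$, say, by taking $c$ small). Hence $p(\lambda_1)^2 \le n^{1/5}\cdot \max_j p(\lambda_j)^2 \le n^{1/5}\sum_{j=2}^{q+2} p(\lambda_j)^2$.

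It remains to handle the Gaussian randomness. We need the denominator's weights $\|g_{(j)}\|_2^2$ not to be too small and $\Iprod{g,u_1}^2$ not to be too large. Conditioned on $U$ (or using that $U$ is Haar random, which makes $g$'s projections onto the eigenspaces behave like independent Gaussians of the appropriate dimension), $\|g_{(j)}\|_2^2$ is a $\chi^2$ with $t = \Theta(n\sqrt\eps/\log n)$ degrees of freedom, so it concentrates around $t$; by a union bound over the $q+1 \le n$ eigenspaces, all of them are $\ge t/2$ with probability $\ge 1 - 1/(2n)$. Meanwhile $\Iprod{g,u_1}^2$ is a single squared Gaussian, so it is $\le 10\log n$ with probability $\ge 1 - 1/(2n)$. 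Combining:
\begin{equation*}
\Iprod{w,u_1}^2 \le \frac{p(\lambda_1)^2\Iprod{g,u_1}^2}{\sum_{j=2}^{q+2}p(\lambda_j)^2\|g_{(j)}\|_2^2} \le \frac{n^{1/5}\cdot\big(\sum_j p(\lambda_j)^2\big)\cdot 10\log n}{(t/2)\cdot \sum_j p(\lambda_j)^2} = \frac{20\, n^{1/5}\log n}{t}.
\end{equation*}
Since $t = \Theta(n\sqrt\eps/\log n)$ and $n \ge 1/\eps^2$, the right-hand side is $O(n^{1/5}\log^2 n/(n\sqrt\eps)) = O(n^{-4/5}\log^2 n \cdot \eps^{3/2}\cdot \eps^{-2}) \le \eps/100$ for $n$ large enough (the $n^{-4/5}$ decay dominates). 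A union bound over the two good events gives the claim with probability $\ge 1 - 1/n$.

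\textbf{Main obstacle.} The crux is the extremal Chebyshev inequality bounding $|p(1+2\eps)|$ by $T_q(1+2\eps)$ times the max of $|p|$ over the in-$[-1,1]$ extrema of $T_q$ — and getting the constant $c$ in $q = c\log(n)/\sqrt\eps$ small enough that $T_q(1+2\eps)$ is a sub-polynomial (or small polynomial) factor in $n$, so that it is swamped by the multiplicity $t$. One must be careful that the inequality is genuinely the Chebyshev extremal/Markov-brothers-type statement at a point outside the interval, and that the relevant asymptotics $T_q(1+2\eps) = \exp(\Theta(q\sqrt\eps))$ are tight; a secondary subtlety is making rigorous the claim that Haar-random $U$ makes the eigenspace projections of the independent Gaussian $g$ behave like independent Gaussians of the right dimensions (this follows from rotational invariance: conditioned on $U$, $U^\top g$ is again $\calN(0,I)$, and its coordinates split into the blocks corresponding to each eigenspace).
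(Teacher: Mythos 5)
Your proposal is correct and follows essentially the same route as the paper: represent any Krylov vector as $p(A)g$ for a degree-$\le q$ polynomial, use the extremal property of Chebyshev polynomials at the extrema $\lambda_2,\dots,\lambda_{q+2}$ to bound $|p(1+2\eps)|$ by $T_q(1+2\eps)\le n^{O(c)}$ times $\max_j|p(\lambda_j)|$, and then beat this factor using the concentration of $\Iprod{g,u_1}^2=O(\log n)$ versus the $\chi^2$-concentrated eigenspace masses $\|g_{(j)}\|_2^2=\Theta(t)$ with $t=\Theta(n\sqrt\eps/\log n)$, exactly as in the paper's proof (which phrases the denominator bound via the maximum over $j$ rather than the sum, an immaterial difference).
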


\begin{proof}
We begin by recalling that we can rewrite $g = \sum_{i \in [n]}  a_i u_i$, for scalars $a_i = \Iprod{u_i, g}$, since the $u_i$'s span the entire space. Further, by rotational invariance of Gaussians, we know that each $a_i \overset{i.i.d.}{\sim} \calN(0,1)$. Now, we can group together all the eigenvectors that correspond to each eigenvalue with multiplicity $t$, where $t$ was set in Definition \ref{def:hard-distribution-chebyshev-polynomial-intro}. Formally, for $j \in [2, q+2]$, let 
\begin{equation*}
     \hat{u}_j = \sum_{ \ell \in [t] } \frac{  a_{(j-2) \cdot t + 1 + \ell} u_{(j-2) \cdot t + 1 +\ell}  } { \sqrt{ \sum_{ \ell \in [t] } a^2_{(j-2) \cdot t  +1+ \ell } } }  , 
\end{equation*}
be the \textit{average eigenvector} corresponding the $j$-th unique eigenvalue. Further, let 
\begin{equation*}
    \hat{a}_j = \sqrt{ \sum_{ \ell \in [t] } a^2_{(j-2)*t  +1+ \ell } }
\end{equation*}
be the corresponding coefficient in the expansion of $g$ in the eigenbasis. 
Then, we can rewrite $g$ as follows: 
\begin{equation*}
    g =  a_1 u_1 +  \sum_{ j \in [2,q+2] } \hat{a}_j \hat{u}_j . 
\end{equation*}
Recall the notation $\lambda_1, \lambda_2, \ldots , \lambda_{q+2}$ to denote the distinct eigenvalues of $A$. 
Observe that in the $r$-th iteration of Algorithm~\ref{algo:block-krylov-single-vector}, we obtain the vector 
\begin{equation}
\begin{split}
    A^r g & = U \Lambda^r U^\top \Paren{  a_1 u_1 +  \sum_{ j \in [2,q+2] } \hat{a}_j \hat{u}_j  }  \\
    & = \lambda_1^r a_1 u_1 + \sum_{j \in [2, q+2]} \lambda_j^r \hat{a}_j \hat{u}_j.
\end{split}
\end{equation}
After $q=c\log(n)/\sqrt{\eps}$ iterations, any vector in the Krylov subspace can be written as a linear combination of the columns and thus admits the following form:
\begin{equation}
\label{eqn:norm-of-w-intro}
\begin{split}
    w & = \sum_{r = 1}^{q}  c_r \Paren{ \lambda_1^r a_1 u_1 + \sum_{j \in [2, q+2]} \lambda_j^r \hat{a}_j \hat{u}_j  }  \\
    & = a_1 p(\lambda_1) u_1 
 + \sum_{j \in [2, q+2]}  p(\lambda_j) \hat{a}_j \hat{u}_j,
\end{split}
\end{equation}
where $p(x) = \sum_{r=1}^{q} c_r x^r$, for arbitrary scalers $c_r$. Further, by orthonormality of the eigenvectors, 

\begin{equation*}
\label{eqn:norm-of-w-2}
\norm{w}_2^2  =  a_1^2 p\Paren{\lambda_1}^2  + \sum_{ j \in [2, q+2]} p\Paren{\lambda_j}^2 \hat{a}_j^2.
\end{equation*}

Observe, $p$ is a degree $q$ polynomial and therefore has at most $q$ roots. By construction, $A$ has $\gg q$ distinct eigenvalues, and thus $p$ must obtain a  non-zero value on all but a small constant fraction of the  eigenvalues of $A$.
The main statement we show here is 
\begin{equation}
\label{eqn:main-inequality-p-at-roots}
     a_1^2 p(\lambda_1)^2 \leq \frac{ \eps}{100} \max_{j \in [2, q+2]}    \hat{a}_j^2 p(\lambda_j)^2, 
\end{equation}
where the $\lambda_j$'s denote the distinct eigenvalues of $A$. We first show how to complete the proof given Equation~\eqref{eqn:main-inequality-p-at-roots}. Consider the inner product of $w$ with the top eigenvector: 
\begin{equation}
\begin{split}
\Iprod{w, u_1}^2 & =   \Iprod{ a_1 p(\lambda_1) u_1 
 + \sum_{j \in [2, q+2]}  p(\lambda_j) \hat{a}_j \hat{u}_j, u_1}^2 \\
 &  = a_1^2 p(\lambda_1)^2 \norm{u_1}^2 \\
 & \leq  \frac{\eps }{100} \cdot \max_{j \in [2, q+2]} \hat{a}_j^2  p(\lambda_i)^2  \cdot \norm{u_1}^2 \\
 & \leq  \frac{\eps }{100}   \norm{w}_2^2,
\end{split}
\end{equation}
where the last inequality follows from Equation~\eqref{eqn:norm-of-w-intro}.

Therefore, it remains to prove Equation \eqref{eqn:main-inequality-p-at-roots}. 
First, consider the case where $\max_{i \in [2:q+2]} |p\Paren{\lambda_i}| = 0$. 
In this case $p$ has $q+1$ roots, but has degree at most $q$, so Equation~\eqref{eqn:main-inequality-p-at-roots} is trivially satisfied. Alternatively, since Equation~\eqref{eqn:main-inequality-p-at-roots} is scale invariant in $p$, without loss of generality, we may assume $\max_{i \in[2:q+2] } \abs{ p(\lambda_i) } = 1$, by scaling $p$ appropriately. By the fact that the degree $q$ Chebyshev polynomial $T_q$ is extremal, i.e. grows faster than any other bounded (at the values $\cos(\frac{i}{q} \cdot \pi)$) degree $q$ polynomial outside the interval $[-1,1]$ (see  Fact~\ref{fact:cheb-poly-extremal}), 
\begin{equation}
\label{eqn:bounding-lambda_1-v1}
\begin{split}
    \abs{ p(\lambda_1 ) } = \abs{ p( 1+2\eps ) }  & \leq T_{q}\Paren{1+2\eps}  \leq e^{C \min\Paren{ \sqrt{\eps} q, \eps q^2 } } \leq n^{0.01} \max_{i \in [2, q+2]} \abs{p(\lambda_i) }, 
\end{split}
\end{equation}
where the second inequality follows from standard bounds on the Chebyshev polynomial (see Fact~\ref{fact:growth-of-chebyshev}) and the last inequality follows from recalling our assumption that $\max_{i \in [2:q+2]} |p(\lambda_i)| = 1$, and that $q = c \log n/\sqrt{\eps}$.

Next, since $a_1\sim\calN(0,1)$, it follows from Fact~\ref{fact:deviation-of-Gaussian} that with probability at least $1-1/n^2$,  $\abs{a_1} \leq O(\sqrt{\log(n)})$. Further, for each $i \in [2, q+2]$, $\hat{a}_i$ is the Euclidean norm of a $t$-dimensional Gaussian vector. Using a standard concentration bound (see~Fact~\ref{fact:Gaussian-norm-conc}), we know that 
 with probability at least $1-1/2n$, for all $j \in [2, q+2]$,  
 \begin{equation*}
     \hat{a}_i = \Theta\Paren{\sqrt{t}} = \Theta\Paren{ \sqrt{ n \sqrt{\eps}/ \log(n)  } } \geq \frac{  c'  n^{1/2} \eps^{1/4} }{ \log(n) }  \abs{ a_1 }, 
 \end{equation*}
for a fixed constant $c'$. 
Union bounding over all $i \in [n]$, with probability at least $1-1/n$, for all $j \in [2, q+2]$, 
\begin{equation}
\label{eqn:bound-a1}
    \abs{ a_1 } \leq \frac{c' \log(n) }{  n^{1/2} \eps^{1/4} }  \hat{a}_j
\end{equation}
Combining \eqref{eqn:bounding-lambda_1-v1} and \eqref{eqn:bound-a1}, we have
\begin{equation*}
     \abs{a_1 p\Paren{\lambda_1 } }\leq \frac{c'  \log(n) n^{0.01} }{ n^{1/2} \eps^{1/4} } \max_{i \in [2,q+2]} \hat{a}_i \abs{ p\Paren{ \lambda_i } } \leq \frac{ \sqrt{ \eps} }{10}  \max_{i \in [2,q+2]} \hat{a}_i \abs{ p\Paren{ \lambda_i } } 
\end{equation*}
where the last inequality follows from recalling that $n \geq 1/\eps^2$, which concludes the proof of Equation~\eqref{eqn:main-inequality-p-at-roots}. 
\end{proof}

Next, we show that if any unit vector has small correlation (squared inner product less than $\eps$) with the top eigenvector, this vector cannot be a good Spectral low-rank approximation to $A$. 

\begin{restatable}{lemma}{uncorrelated}[Alignment to Spectral LRA]
\label{lem:alignment-to-lra}
Given $\eps>0$, let $A$ be a matrix such that $u_1$ is the top eigenvector of $A$. Further, let $\lambda_1 = 1+2 \eps$ and $\lambda_2 = 1$ be the top two eigenvalues of $A$, and suppose $|\lambda_i| \le 1$ for all $i \ge 2$.  Let $w$ be a vector such that $\Iprod{u_1, w}^2, \Iprod{u_1, Aw}^2 \leq \epsilon/2$. Then, 
\begin{equation*}
    \Norm{A\Paren{I - ww^\top} }_{\op} > \Paren{ 1+\eps } \min_{ \norm{u}_2=1 } \Norm{A\Paren{I - u u^\top}  }_{\op}.
\end{equation*}
\end{restatable}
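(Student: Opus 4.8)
The plan is to bound the two sides of the inequality separately; the whole argument is a short calculation. First I would upper bound the right-hand side by exhibiting a single good rank-$1$ direction, namely $u_1$. Since $u_1$ is the top eigenvector, $A\Paren{I - u_1 u_1^\top} = A - \lambda_1 u_1 u_1^\top$ has the same eigenvectors as $A$ with the eigenvalue $\lambda_1$ replaced by $0$, so by the hypotheses $\lambda_2 = 1$ and $\abs{\lambda_i} \le 1$ for all $i \ge 2$ its operator norm is exactly $1$. Hence
\begin{equation*}
\min_{\norm{u}_2 = 1} \Norm{A\Paren{I - uu^\top}}_{\op} \;\le\; \Norm{A\Paren{I - u_1 u_1^\top}}_{\op} \;=\; 1,
\end{equation*}
and it suffices to prove $\Norm{A\Paren{I - ww^\top}}_{\op} > 1 + \eps$.

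For the left-hand side, the key step is to test the operator $A\Paren{I - ww^\top}$ against $u_1$ and lower bound its operator norm by the associated quadratic form. Using $A u_1 = \lambda_1 u_1$ and $\norm{u_1}_2 = 1$, I would expand
\begin{equation*}
u_1^\top A\Paren{I - ww^\top} u_1 \;=\; \lambda_1 - \Paren{u_1^\top A w}\Paren{w^\top u_1} \;=\; \lambda_1 - \Iprod{u_1, Aw}\Iprod{u_1, w}.
\end{equation*}
By Cauchy--Schwarz together with the two assumptions $\Iprod{u_1, w}^2 \le \eps/2$ and $\Iprod{u_1, Aw}^2 \le \eps/2$, the cross term is at most $\eps/2$ in absolute value, so $u_1^\top A\Paren{I - ww^\top} u_1 \ge \lambda_1 - \eps/2 = 1 + 3\eps/2$. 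Since this quantity is positive, one final Cauchy--Schwarz step yields
\begin{equation*}
\Norm{A\Paren{I - ww^\top}}_{\op} \;\ge\; \Norm{A\Paren{I - ww^\top} u_1}_2 \;\ge\; u_1^\top A\Paren{I - ww^\top} u_1 \;\ge\; 1 + 3\eps/2 \;>\; 1 + \eps,
\end{equation*}
which combined with the upper bound on the right-hand side completes the proof.

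I do not expect a real obstacle: the argument is elementary. The one point requiring care is pinning down $\min_{u}\Norm{A\Paren{I-uu^\top}}_{\op} = 1$, which uses both that $\lambda_2 = 1$ is present and that no eigenvalue other than $\lambda_1$ exceeds $1$ in magnitude. The one genuine idea is that one must use \emph{both} hypotheses at once through the bilinear form $u_1^\top A\Paren{I - ww^\top} u_1$: the cruder triangle-inequality estimate $\Norm{A\Paren{I-ww^\top}u_1}_2 \ge \lambda_1 - \abs{\Iprod{u_1,w}}\cdot\norm{Aw}_2 \ge (1+2\eps)\Paren{1 - \sqrt{\eps/2}}$ is below $1$ for small $\eps$ and hence too weak, whereas the cancellation in the cross term --- available precisely because $\Iprod{u_1, Aw}$ is also small --- is what pushes the bound above $1+\eps$.
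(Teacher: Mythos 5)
Your proposal is correct and follows essentially the same route as the paper: both lower bound $\Norm{A\Paren{I-ww^\top}}_{\op}$ by the quadratic form $u_1^\top A\Paren{I - ww^\top} u_1 = \lambda_1 - \Iprod{u_1,Aw}\Iprod{u_1,w} \ge 1+3\eps/2$, using both hypotheses on the cross term, and both use that the optimal rank-$1$ error is $1$ (you only need the direction $\le 1$, which you verify via $u_1$). No gaps.
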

\begin{proof}
By definition of the Operator norm, for any unit vector $v$, we have
\begin{equation*}
    \Norm{A\Paren{I - ww^\top} }_{\op} \geq  v^\top A\Paren{I - ww^\top} v .
\end{equation*}
In particular, for $u_1$, we have
\begin{equation*}
    \begin{split}
        u_1^\top A\Paren{I - ww^\top} u_1 &= u_1^\top A u_1 - \Iprod{u_1, Aw} \cdot \Iprod{ u_1, w } 
         \geq  1 + 2 \eps -\eps/2 > 1 + \eps .
    \end{split}
\end{equation*}
Further, $\min_{\norm{u}_2=1}\norm{A \Paren{I- 
 uu^\top}}_\op = 1$. Therefore, 
\begin{equation*}
    \Norm{A\Paren{I - ww^\top} }_{\op} \geq u_1^\top A (1-ww^\top) u_1 > 1+\eps = \Paren{ 1+\eps } 
 \min_{\norm{u}_2=1}\norm{A \Paren{I- 
 uu^\top}}_\op, 
\end{equation*}
which concludes the proof. 
\end{proof}

We are now ready to complete the proof of Theorem \ref{thm:lb-against-krylov-spectral-lra}.
\begin{proof}[Proof of Theorem~\ref{thm:lb-against-krylov-spectral-lra} ]
Observe, Lemma~\ref{lem:krylov-alignment} implies that with probability at least $99/100$,  any unit vector $w$ that is in the Krylov subspace (and in particular, the vector output by Algorithm~\ref{algo:block-krylov-single-vector}) satisfies $\Iprod{w, u_1}^2 \leq \eps/100$.
Further, $\Iprod{ u_1, A w }^2 = \Paren{1+2\eps}^2 \Iprod{ u_1, w }^2  \leq \eps/10 $. We invoke Lemma~\ref{lem:alignment-to-lra},
\begin{equation*}
    \Norm{  A \Paren{ I - ww^\top } }_{\op}  > \Paren{1 + \eps } \min_{\norm{u}_2=1 } \Norm{ A \Paren{I- uu^\top} }_{\op},
\end{equation*}
which concludes the proof.
\end{proof}

\subsection{Lifting Krylov Lower Bounds to Matrix-Vector Lower Bounds}

In \Cref{subsec:single-krylov-lb}, we described why a Krylov subspace with $c \log n/\sqrt{\eps}$ iterations cannot approximately locate the top eigenvector well enough to perform spectral LRA, for a sufficiently small constant $c$. Perhaps surpsingly, it turns out, that starting with a matrix instead of a single vector in Krylov iteration does not help. In particular, we show that even if we have $c \log n/\sqrt{\eps}$ starting vectors and run Krylov Iteration for $c \log n/\sqrt{\eps}$ iterations on each starting vector (Algorithm~\ref{algo:block-krylov-generalized}), the resulting Krylov subspace does not contain any vector that is even $\eps$-correlated with the top eigenvector of the input matrix.

In the single starting vector Krylov iteration lower bound, we showed that for any unit vector $w$ in the Krylov subspace, $|\langle u_1, w \rangle|$ is very small. Intuitively, this should imply that for block size $S$, a unit vector in the block Krylov subspace should not have norm more than $S \cdot |\langle u_1, w \rangle|$. Indeed, we can write $w = w_1 + w_2 + \cdots + w_S$, where each $w_i$ comes from the Krylov subspace generated by the $i$-th starting vector, so $|\langle u_1, w \rangle| \le \sum_{i=1}^S |\langle u_1, w_i \rangle|$.

The obstacle in excuting such an approach, however, is that the vectors $w_i$ may be anti-correlated and cancel out. As a result, it might be possible for some $w_i$ to have norm much bigger than $1$.  We are able to overcome this obstacle and show in our construction, with high probability, there is very little  anti-correlation between any potential $w_i$ and $w_j$ that are chosen. This insight allows for our single-vector Krylov iteration lower bound to extend to Block Krylov iteration.

Finally, we can apply the lifting result to show that any adaptive algorithm making $\log n/\sqrt{\eps}$ queries can be simulated by a Block Krylov algorithm. Intuitively, for an input instance where the eigenvectors are a uniformly Haar random matrix, the best an adaptive algorithm can do is explore a uniformly random direction in the complement of the subspace explored thus far. We make this intuition precise by showing that the sequence of adaptive queries can be modelled as $\Set{ v_i}_{i \in [k]}$, where $v_i = v_i^{\|} + v_i^{\bot}$, such that $v_i^{\|}$ is the component of $v_i$ in the span of the previous queries, and $v_i^{\bot}$ is orthogonal to this span. Further, the distributon of $v_i^{\bot}$ is uniformly random over the remaining subspace. We then show that Krylov iteration with large block size can simulate such adaptive queries and their responses (see Section~\ref{sec:lowerbounds-against-adaptive-algorithms} for details).



\subsection{Sharper Krylov Subspace Algorithms}

Finally, we show that we can improve the upper bound for Schatten-$p$ low-rank approximation obtained Bakshi, Clarkson and Woodruff~\cite{bakshi2022low}, when $p$ is bounded by a fixed constant. This includes the important special cases of Frobenius and Nuclear low-rank approximation.   
At a high-level, their algorithm instantiates two Krylov subspaces, one with a single starting vector that is iterated $O\Paren{\log(n/\eps)/\eps^{1/3}}$ times and another subspace with a starting block size of $O\Paren{ 1/\eps^{1/3} }$, iterated $O\Paren{\log(n/\eps)}$ times. Their analysis crucially relies on exploiting singular value gaps via large starting block size. 

Instead, we show that starting with a single starting vector, running Krylov Iteration (Algorithm~\ref{algo:block-krylov-single-vector}) for $t = O\Paren{ \log(1/\eps) /\eps^{1/3} }$ iterations converges to a unit vector $v$ such that 

\begin{equation*}
    \Norm{A \Paren{I - vv^\top} }_{\calS_p} \leq \Paren{1+\eps}\min_{\Norm{u}_2=1} \Norm{ A\Paren{ I - uu^\top } }_{\calS_p}.
\end{equation*}

For simplicity, we discuss the case of Frobenius norm low-rank approximation ($p = 2$);  the general case follows by replacing all instantiations of Pythagoreaon theorem with an appropriate generalization obtained in~\cite{bakshi2022low} (see Lemma~\ref{lem:schatten-pythagorean} for details). Further, for the purposes of the overview, we assume that
$\sigma_1 = 1$ (our final proof will never actually require knowledge of the spectral norm).

We perform a case anlysis similar to the one that appears in~\cite{bakshi2022low}. First, we consider the case where the top-$t$ singular values of $A$ are large and do not induce a gap, i.e. $\sum_{i \in [t]} \sigma_i^2 \geq  1/\eps^{1/3}$. In this case, we observe that the cost of the optimal solution itself is large:

\begin{equation*}
    \min_{ \norm{u}_2 = 1 } \norm{ A\Paren{I - uu^\top}  }_F^2 = \sum_{i = 2}^n \sigma_i^2 \ge \eps^{-1/3} -1 . 
\end{equation*}
A $(1+\eps)$ relative-error solution to the above cost corresponds to an \emph{additive} $\eps^{2/3}$ error. The standard analysis of Krylov iteration~\cite{musco2015randomized}, states that after $q= \log(n)/\sqrt{\zeta }$ iterations, for any $0< \zeta<1 $, the algorithm outputs a vector $v$ such that $v^\top A^\top A v \geq \norm{A}_{\textrm{op}}^2 - \zeta \sigma_2^2 $. By Pythagorean theorem, 
\begin{equation*}
    \begin{split}
        \norm{A\Paren{I - vv^\top}}_F^2 = \norm{A}^2_{\textrm{op}} - \norm{Av}_2^2 \leq \min_{ \norm{u}_2=1 } \norm{A\Paren{I - uu^\top} } + \sigma_2^2 \zeta.
    \end{split}
\end{equation*}
Since $\sigma_2^2 \leq 1$, it suffices to set $\zeta =\eps^{2/3}$ and thus $q = O( \log(n)/\eps^{1/3})$ iterations suffice. We strengthen this analysis by showing that a significantly lower dimensional Krylov subspace (corresponding to $O\Paren{\log(1/\eps)/\eps^{1/3}}$ iterations) spans a vector $v$ such that $\norm{A\Paren{I - vv^\top}}_F^2  \leq \min_{\norm{u}_2=1} \norm{A \Paren{I - uu^\top} } + \eps^{2/3}$.  We do this by explicity analyzing the Chebyshev polynomial (as opposed to a polynomial approximation to a threshold function in~\cite{musco2015randomized}) and demonstrate that the output of Algorithm~\ref{algo:block-krylov-single-vector} is at least as good as outputting the aformentioned vector $v$ (see Lemma~\ref{lem:existing-of-good-vectors} for details).

In the complementary case, we deviate significantly from any prior analysis of Krylov iteration, including~\cite{musco2015randomized, bakshi2022low}. Here, we know that the number of singular values in the range $[1/2, 1-\eps]$ is at most $O(\eps^{-1/3})$. We therefore construct an entirely different polynomial, which is no longer based on Chebyshev polynomials. This polynomial is designed to explicitly zero out all singular values in the range $[1/2, 1-\eps]$. We note that the degree of this polynomial, $p_0$, is only $O(\eps^{-1/3})$, and it allows us to remove the contribution of all medium sized singular values, similar to starting with a larger block size. However, it may still be the case that $p_0\Paren{\sigma_1}$ is significantly smaller than $p_0\Paren{\sigma_j}$, for some $\sigma_j$ outside the interval $[1/2, 1-\eps]$. 

To address this issue, we consider the polynomial $p_1(x) = x^q p_0(x)$, where $q= O\Paren{\log(1/\eps)/\eps^{1/3}}$. We show that since $p_1$ powers up the top sigular value significantly, $p_1\Paren{\sigma_1} \gg p_1\Paren{\sigma_j}$ for any $\sigma_j <1/2$. We then prove that the vector $v = p_1(A)g/\norm{p_1(A)g}$ results in a $(1+\eps)$ relative-error low-rank approximation and that Krylov iteration, after  $O\Paren{\log(1/\eps)/\eps^{1/3}}$ iterations, outputs a vector that does at least as well.

\section{Additional Related Work}


In recent years, the matrix-vector product model has recieved considerable attention in the theoretical computer science community, since it was formalized for a number of problems 
in \cite{SunWYZ19,RWZ20}. 
Simultaneously, \cite{SAR18,braverman2020gradient} obtained nearly tight bounds for estimating the top eigenvector and eigenvalue. Next, for the problem of estimating the trace of a positive semidefinite matrix, tight bounds were obtained in \cite{MMMW21} (see, also~\cite{dharangutte2023tight} for estimating the diagonal). For recovering a planted clique
from a random graph, upper and lower bounds were obtained in \cite{woodruff21}. Finally,~\cite{bakshi2022low} studied the low-rank approximation problem and~\cite{needell2022testing} studied testing whether a matrix is PSD in the matrix-vector model.   

A closely related setting to the matrix-vector model is one where the input is accessed in a non-adaptive manner, i.e. $v^1, \ldots, v^q$, are chosen before making any queries to $A$. This model is equivalent to the {\it sketching model}, which is thoroughly studied on its own (see, e.g., \cite{nelson2011sketching,woodruff2014sketching}), and in the context of data streams \cite{M05,LNW14}. Low-rank approximation under Schatten norms has been well-studied in this model~\cite{clarkson2013low, lw20}.

Iterative methods, such as Krylov subspace based methods, are captured by the matrix-vector product framework, whereas linear sketching allows for the choice of a matrix $S \in \mathbb{R}^{t \times n}$, where $t$ is the number of ``queries'', and then observes the product $S \cdot A$ and so on (see~\cite{woodruff2014sketching} and references therein).  The model has important applications to streaming and distributed algorithms and several recent works have focused on estimating spectral norms and the top singular values~\cite{andoni2013eigenvalues,li2014sketching,li2016tight,bakshi2021learning}, estimating Schatten and Ky-Fan norms~\cite{li2016tight, li2017embeddings,li2016approximating, braverman2019schatten} and low-rank approximation~\cite{clarkson2013low,mm13,NN13,BDN15,cohen2016nearly}. 

Finally, the matrix-vector product model is also closely related to sublinear time/query algorithms and quantum-inspired algorithms. There has been a flurry of work on sublinear low-rank approximation under various structural assumptions on the input~\cite{mw17,  bakshi2018sublinear,indyk2019sample, sw19,bakshi2020robust}  and in quantum-inspired models~\cite{kerenidis2016quantum,  chia2018quantum, tang2019quantum, gilyen2018quantum,   gilyen2019quantum,chepurko2020quantum,bakshi2023improved}. 


\section{Preliminaries}

Given an $n \times d$ matrix $A$ with rank $r$, and $n\geq d$, we can compute its 
singular value decomposition, denoted by ${SVD}(A) = U \Sigma V^{\top}$, such that $U$ is an $n \times r$ matrix with 
orthonormal columns, $V^{\top}$ is an $r \times d$ matrix with orthonormal 
rows and $\Sigma$ is an $r \times r$ diagonal matrix. The entries 
along the diagonal are the singular values of $A$, denoted by 
$\sigma_1, \sigma_2 \ldots \sigma_r$. Given an integer $k \leq r$, we 
define the truncated singular value decomposition of $A$ that zeros out 
all but the top $k$ singular values of $A$, i.e.,  $A_k = U 
\Sigma_k V^{\top}$, where $\Sigma_k$ has only $k$ non-zero 
entries along the diagonal. It is well-known that the truncated SVD 
computes the best rank-$k$ approximation to $A$ under any unitarily invariant norm, but in particular for any Schatten-$p$ norm (defined below), we have $A_k = \min_{ \rank(X)=k  } \| A - X \|_{\calS_p}$.
More generally, for any matrix $M$, we use the notation $M_k$ and 
$M_{\setminus k}$ to denote the first $k$ components and all but the 
first $k$ components respectively. 
We use $M_{i,*}$ and $M_{*,j}$ to 
refer to the $i^{th}$ row and $j^{th}$ column of $M$ respectively. 

We use the notation $I_k$ to denote a \emph{truncated identity matrix}, that is, a square matrix with its top $k$ diagonal entries equal to one, and all other entries zero. The dimension of $I_k$ will be determined by context.

 \paragraph{Schatten Norms.} We recall some basic facts for Schatten-$p$ norms. We also require the following trace and operator inequalities.

\begin{definition}[Schatten-$p$ Norm]
\label{def:schatten}
Given a matrix $A \in \mathbb{R}^{n \times d}$, let $\sigma_1 \geq \sigma_2 \geq \ldots \geq \sigma_d $ be the singular values of $A$. Then, for any $p \in [0, \infty)$, the Schatten-$p$ norm of $A$ is defined as 
\[
\norm{A }_{\calS_p} =\tr\Paren{ \Paren{A^\top A}^{p/2} }^{1/p} = \Paren{ \sum_{i \in [d]} \sigma_i^p(A) }^{1/p}.
\]
\end{definition}

\begin{fact}[Schatten-$p$ norms are Unitarily Invariant]
\label{fact:unitary_inv}
Given an $n\times d$ matrix $M$, for any $m\times n$ matrix $U$ with orthonormal columns, a norm $\|\cdot \|_X$ is defined to be unitarily invariant if $\|U M \|_X = \| M \|_X$. The Schatten-$p$ norm is unitarily invariant for all $p\geq1$. 
\end{fact}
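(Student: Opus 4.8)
The plan is to reduce everything to the single identity $(UM)^\top (UM) = M^\top M$. Since $U$ has orthonormal columns we have $U^\top U = I$ (the identity of dimension equal to the number of columns of $U$), so associativity gives $(UM)^\top (UM) = M^\top \Paren{U^\top U} M = M^\top M$ as $d \times d$ matrices. Both sides are positive semidefinite, so the fractional power $(\cdot)^{p/2}$ is well defined by applying $x \mapsto x^{p/2}$ to the (nonnegative) eigenvalues via the spectral theorem, and equal matrices have equal fractional powers. Plugging into Definition~\ref{def:schatten} and taking traces and $p$-th roots, $\norm{UM}_{\calS_p} = \tr\Paren{\Paren{(UM)^\top (UM)}^{p/2}}^{1/p} = \tr\Paren{\Paren{M^\top M}^{p/2}}^{1/p} = \norm{M}_{\calS_p}$, which is the claim.

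An equivalent and perhaps more transparent route is through the singular value decomposition. Writing $M = U_M \Sigma V_M^\top$, we have $UM = \Paren{U U_M}\,\Sigma\, V_M^\top$, and $U U_M$ still has orthonormal columns because $\Paren{U U_M}^\top \Paren{U U_M} = U_M^\top \Paren{U^\top U} U_M = I$. Hence, after padding with zero singular values to complete it to a full decomposition, this is a valid SVD of $UM$, so $M$ and $UM$ have exactly the same multiset of (nonzero) singular values. In particular the $\ell_p$ norm of the singular values is unchanged for every $p \geq 1$, which is precisely the statement that $\norm{\cdot}_{\calS_p}$ is unitarily invariant.

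There is no genuine obstacle here: the whole argument is the cancellation $U^\top U = I$, which is exactly the hypothesis, combined with the fact that $\norm{M}_{\calS_p}$ depends on $M$ only through $M^\top M$ (equivalently, only through the singular values). The one point that deserves a word of care is that ``orthonormal columns'' yields $U^\top U = I$ but in general not $U U^\top = I$ when $U$ is strictly rectangular; fortunately only the former is used above. I would present the two-line trace computation of the first paragraph as the actual proof.
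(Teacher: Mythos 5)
Your proof is correct: the identity $(UM)^\top (UM) = M^\top \Paren{U^\top U} M = M^\top M$ together with Definition~\ref{def:schatten} immediately gives $\norm{UM}_{\calS_p} = \norm{M}_{\calS_p}$, and your SVD argument is an equally valid alternative. The paper states this as a fact without proof, and your two-line trace computation is exactly the standard justification one would supply, so there is nothing to compare beyond noting that your care about $U^\top U = I$ versus $UU^\top = I$ for rectangular $U$ is the right point to flag.
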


There exists a closed-form expression for the low-rank approximation problem under Schatten-$p$ norms:

\begin{fact}[Schatten-$p$ Low-Rank Approximation]
\label{fact:schtatten_lra}
Given a matrix $A \in \mathbb{R}^{n \times d}$ and an integer $k \in \mathbb{N}$, 
\begin{equation*}
    A_k = \arg\min_{\rank( X) \leq k} \norm{A - X}_{\calS_p},
\end{equation*}
where $A_k$ is the truncated SVD of $A$.
\end{fact}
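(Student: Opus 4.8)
The plan is to prove this via Weyl's inequalities for singular values, the standard route to the Eckart--Young--Mirsky theorem. The key point is that subtracting a rank-$k$ matrix can shift the singular value sequence by at most $k$ positions, so the ``tail'' $\sigma_{k+1}, \sigma_{k+2}, \dots$ of $A$'s singular values is a lower bound for the residual of \emph{every} rank-$k$ approximant.

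First I would establish the singular value shift bound: for any $X$ with $\rank(X) \leq k$ and any $i \geq 1$,
\[
\sigma_i(A - X) \geq \sigma_{i+k}(A).
\]
This follows from the subadditive form of Weyl's inequality, $\sigma_{a+b-1}(M + N) \leq \sigma_a(M) + \sigma_b(N)$: write $A = (A - X) + X$, take $a = i$ and $b = k+1$, and use $\sigma_{k+1}(X) = 0$ (since $X$ has rank at most $k$) to conclude $\sigma_{i+k}(A) \leq \sigma_i(A - X)$.

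Second, I would sum the $p$-th powers. Since $t \mapsto t^p$ is monotone nondecreasing on $[0,\infty)$ for $p \geq 1$,
\[
\norm{A - X}_{\calS_p}^p = \sum_{i \geq 1} \sigma_i(A - X)^p \geq \sum_{i \geq 1} \sigma_{i+k}(A)^p = \sum_{i > k} \sigma_i(A)^p = \norm{A - A_k}_{\calS_p}^p,
\]
where the final equality uses that the nonzero singular values of $A - A_k$ are exactly $\sigma_{k+1}, \sigma_{k+2}, \dots$ Since $A_k$ itself has rank at most $k$, this lower bound is attained, which establishes optimality. For the limiting case $p = \infty$ the same argument applies using only the single inequality $\sigma_1(A - X) \geq \sigma_{k+1}(A) = \norm{A - A_k}_{\op}$.

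The only nontrivial ingredient is Weyl's inequality; everything else is bookkeeping. If one prefers to avoid it, one can instead invoke Mirsky's theorem, which states that the truncated SVD minimizes $\norm{A - X}$ over all $X$ of rank at most $k$ for \emph{every} unitarily invariant norm; the Schatten-$p$ norms are unitarily invariant by Fact~\ref{fact:unitary_inv}, so the claim follows immediately. As this is a classical result, directly citing Mirsky (or Eckart--Young for $p \in \{2, \infty\}$) would also suffice in lieu of reproducing the argument.
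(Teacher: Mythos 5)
Your argument is correct. Note, however, that the paper does not prove this statement at all: it is recorded as a ``Fact'' in the preliminaries, with the surrounding text explicitly appealing to the classical Eckart--Young--Mirsky theorem (``the truncated SVD computes the best rank-$k$ approximation under any unitarily invariant norm''), which is exactly the fallback you mention in your last paragraph. So the only genuine difference is that you supply the classical derivation in full: the Weyl step $\sigma_{i+k}(A) \le \sigma_i(A-X) + \sigma_{k+1}(X) = \sigma_i(A-X)$ for $\rank(X)\le k$ is applied correctly, the summation of $p$-th powers is valid since $t\mapsto t^p$ is nondecreasing, and the bound is attained by $A_k$ because the nonzero singular values of $A-A_k$ are precisely $\sigma_{k+1},\sigma_{k+2},\dots$ (with the usual caveat that the minimizer need not be unique when $\sigma_k=\sigma_{k+1}$, which is also implicit in the paper's statement). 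Your self-contained route buys independence from citing Mirsky at the cost of invoking Weyl's inequality; the paper's route is simply to treat the whole statement as classical, which is what your closing remark already licenses.
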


\paragraph{Chebyshev Polynomials.} Next, we recall some basic facts about Chebyshev polynomials.



\begin{definition}[Extrema of Chebyshev Polynomial~\cite{mason2002chebyshev}]
\label{def:extrema-chebyshev-polynomials}
The local extrema of the $d$-th Chebyshev polynomial, $T_d(x)$, are in the range $[-1, 1]$ and are given by 
\begin{equation*}
    x_i = \cos\Paren{ \frac{i}{d} \cdot \pi   },
\end{equation*}
for all $i \in [d-1]$. In addition, for every extrema $x_i$, $T_d(x_i) \in \{-1, 1\}$. Finally, the set of solutions to $T_d(x) \in \{-1, 1\}$ are given by
\begin{equation*}
    x_i = \cos\Paren{ \frac{i}{d} \cdot \pi   },
\end{equation*}
for all $i \in \{0, 1, \dots, d\}$.
\end{definition}

\begin{fact}[Chebyshev Polynomials are extremal~\cite{mason2002chebyshev}]
\label{fact:cheb-poly-extremal}
Let $q(x)$ be any degree-$d$ polynomial such that for all $x_i = \cos\left(\frac{i}{d} \cdot \pi\right)$ for $i \in \{0, 1, \dots, d\}$, $\abs{q(x_i)}\leq 1$. Then, for any $x > 1$, $\abs{q(x)}\leq T_d(x)$.  
\end{fact}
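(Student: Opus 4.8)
The plan is the classical argument by contradiction that exploits the equioscillation of $T_d$. Suppose there were a point $z > 1$ with $\abs{q(z)} > T_d(z)$. Since $z>1$, Definition~\ref{def:chebyshev-polynomials} gives $T_d(z) > 1 > 0$, and because both the hypothesis $\abs{q(x_i)}\le 1$ and the desired conclusion are invariant under $q \mapsto -q$, we may assume without loss of generality that $q(z) > T_d(z) > 0$. I would then set $\lambda := T_d(z)/q(z) \in (0,1)$ and introduce the auxiliary polynomial $h(x) := T_d(x) - \lambda\, q(x)$, which has degree at most $d$.

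The crux is to show that $h$ has at least $d+1$ distinct real roots. At each point $x_i = \cos(i\pi/d)$ we have $T_d(x_i) = \cos(i\pi) = (-1)^i$, while $\abs{\lambda\, q(x_i)} \le \lambda < 1$; hence $h(x_i)$ has the same sign as $(-1)^i$. Since $1 = x_0 > x_1 > \cdots > x_d = -1$, the values $h(x_0), \ldots, h(x_d)$ alternate in sign, so by the intermediate value theorem $h$ has a root in each of the $d$ pairwise disjoint open intervals $(x_{i+1}, x_i) \subset (-1,1)$. In addition, $h(z) = T_d(z) - \lambda\, q(z) = 0$ by the choice of $\lambda$, and this root $z > 1$ is distinct from the $d$ interior roots. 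A nonzero polynomial of degree at most $d$ cannot have $d+1$ distinct roots, so $h \equiv 0$.

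But $h \equiv 0$ forces $q = T_d/\lambda$, and then $\abs{q(x_0)} = \abs{T_d(1)}/\lambda = 1/\lambda > 1$, contradicting $\abs{q(x_i)}\le 1$. Hence no such $z$ exists, which is exactly the claim $\abs{q(x)} \le T_d(x)$ for all $x>1$. I do not expect any serious obstacle here; the only point that needs care is confirming that the $d$ roots produced by the sign changes together with the root $z$ really are $d+1$ distinct zeros of $h$, which is immediate because the intervals $(x_{i+1},x_i)$ are pairwise disjoint and contained in $(-1,1)$ whereas $z>1$.
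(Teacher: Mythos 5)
Your proof is correct: the equioscillation-plus-root-counting argument is the classical proof of this extremal property, and every step checks out (the sign of $h(x_i)$ is pinned down because $|\lambda q(x_i)| \le \lambda < 1 = |T_d(x_i)|$, the $d$ interior roots lie in pairwise disjoint subintervals of $(-1,1)$ and are therefore distinct from the root $z>1$, and the final evaluation at $x_0=1$ yields the contradiction). The paper does not prove this statement at all --- it is quoted as a known fact from the Mason--Handscomb reference --- so your argument simply supplies the standard proof that the paper omits.
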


\begin{fact}[Growth of Chebyshev Polynomial~\cite{mason2002chebyshev}]
\label{fact:growth-of-chebyshev}
For any $0<\eps<0.5$,
\begin{equation*}
    e^{c \cdot \min(\sqrt{\eps} \cdot d, \eps \cdot d^2)} \le T_d(1+\eps)\leq e^{C \cdot \min(\sqrt{\eps} \cdot d, \eps \cdot d^2)},
\end{equation*}
for some fixed constants $C > c > 0$.
\end{fact}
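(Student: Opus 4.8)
The plan is to reduce the estimate to elementary inequalities for $\cosh$, using the ``$|x|\ge 1$'' branch of the Chebyshev polynomial from Definition~\ref{def:chebyshev-polynomials}. Set
\begin{equation*}
\gamma \;:=\; 1+\eps+\sqrt{(1+\eps)^2-1}\;=\;1+\eps+\sqrt{\eps(2+\eps)}\;\ge\;1 .
\end{equation*}
A direct computation gives $\gamma\cdot\bigl(1+\eps-\sqrt{\eps(2+\eps)}\bigr)=(1+\eps)^2-\eps(2+\eps)=1$, so the two quantities appearing in Definition~\ref{def:chebyshev-polynomials} are exactly $\gamma$ and $\gamma^{-1}$, and hence
\begin{equation*}
T_d(1+\eps)\;=\;\tfrac12\bigl(\gamma^{d}+\gamma^{-d}\bigr)\;=\;\cosh\!\bigl(d\,\theta\bigr),\qquad \theta:=\log\gamma .
\end{equation*}

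First I would pin down the size of $\theta$. Since $\sqrt{\eps(2+\eps)}\in\bigl[\sqrt{2\eps},\,\sqrt{2.5\,\eps}\,\bigr]$ and $\eps\le\sqrt\eps$ for $0<\eps<\tfrac12$, we get $\gamma-1\in\bigl[\sqrt{2\eps},\,3\sqrt\eps\,\bigr]$; combining this with $\tfrac{u}{2}\le\log(1+u)\le u$ for $u\in[0,1]$ (used with $u=\sqrt{2\eps}\le 1$ for the lower bound, and $\log(1+u)\le u$ with $u=\gamma-1$ for the upper bound) yields
\begin{equation*}
\sqrt{\eps/2}\;\le\;\theta\;\le\;3\sqrt\eps .
\end{equation*}
Thus $x:=d\theta$ satisfies $x\asymp\sqrt\eps\,d$ and $x^{2}\asymp\eps\,d^{2}$, so $\min(x,x^{2})\asymp\min(\sqrt\eps\,d,\ \eps\,d^{2})$ with absolute constants.

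Next I would finish with a two-sided bound on $\cosh x$. For the upper bound, $\cosh x\le e^{x}$ for $x\ge 0$, and $\cosh x=\sum_{k\ge0}\tfrac{x^{2k}}{(2k)!}\le\sum_{k\ge0}\tfrac{(x^{2}/2)^{k}}{k!}=e^{x^{2}/2}$, so $T_d(1+\eps)=\cosh x\le\exp\!\bigl(\min(x,\ x^{2}/2)\bigr)\le\exp\!\bigl(C\min(\sqrt\eps\,d,\ \eps\,d^{2})\bigr)$. For the lower bound, split on $x$: if $x\le 1$ then $\cosh x\ge 1+\tfrac{x^{2}}{2}\ge e^{x^{2}/4}$ (using $e^{t}\le 1+2t$ for $t=x^{2}/4\le\tfrac14$), while if $x\ge 1$ then $\cosh x\ge \tfrac12 e^{x}\ge e^{x/4}$; in either case $\cosh x\ge\exp\!\bigl(\tfrac14\min(x,\ x^{2})\bigr)\ge\exp\!\bigl(c\min(\sqrt\eps\,d,\ \eps\,d^{2})\bigr)$.

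The one place that needs care is the constant bookkeeping: one must track the $\asymp$ in $\theta\asymp\sqrt\eps$ so that $\min(x,x^{2})$ and $\min(\sqrt\eps\,d,\,\eps\,d^{2})$ stay within fixed factors of each other in \emph{both} the regime $\sqrt\eps\,d\le 1$ (where the active term is $\eps d^{2}$) and the regime $\sqrt\eps\,d\ge 1$ (where it is $\sqrt\eps\,d$); this is precisely what forces the slightly lossy $e^{x/4}$ in the $x\ge1$ case rather than $e^{x/2}$. Given this, the constants $C>c>0$ can be extracted explicitly.
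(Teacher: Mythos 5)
Your proof is correct: the identity $T_d(1+\eps)=\tfrac12(\gamma^d+\gamma^{-d})=\cosh(d\log\gamma)$, the two-sided estimate $\sqrt{\eps/2}\le\log\gamma\le 3\sqrt{\eps}$ for $0<\eps<1/2$, and the elementary bounds $\exp\bigl(\tfrac14\min(x,x^2)\bigr)\le\cosh x\le\exp\bigl(\min(x,x^2/2)\bigr)$ all check out, yielding explicit constants (e.g.\ $c=1/8$, $C=9$). The paper does not prove this statement itself --- it is quoted as a standard fact from the Chebyshev polynomial literature --- and your argument is exactly the standard derivation one would give, so there is nothing to flag beyond noting that your constant bookkeeping across the two regimes $\sqrt{\eps}\,d\le 1$ and $\sqrt{\eps}\,d\ge 1$ is handled correctly.
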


The following is a well-known corollary of \Cref{fact:growth-of-chebyshev}

\begin{corollary} \label{cor:magic-polynomial}
    There exists a constant $c > 0$ such that for any $\eps < 0.5$, There exists a polynomial $T_{d, \eps}$ of degree $d$, such that $|T_{d, \eps}(x)| \le e^{-c \cdot \min(\sqrt{\eps} \cdot d, \eps \cdot d^2)}$ for all $0 \le x \le 1-\eps$, and $T_{d, \eps}(1) = 1$.
\end{corollary}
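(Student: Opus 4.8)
The plan is the standard Chebyshev rescaling trick: take the degree-$d$ Chebyshev polynomial of the first kind and rescale its argument so that the interval $[0,1-\eps]$ lands inside $[-1,1]$, where $T_d$ is bounded by $1$, while the point $x=1$ is pushed into $[1,\infty)$, where $T_d$ grows exponentially by Fact~\ref{fact:growth-of-chebyshev}. Concretely, I would set
\[
T_{d,\eps}(x) := \frac{T_d\!\left(\tfrac{x}{1-\eps}\right)}{T_d\!\left(\tfrac{1}{1-\eps}\right)}.
\]
Since $T_d(1/(1-\eps)) \ge T_d(1) = 1 > 0$, this is well defined; it is a polynomial of degree exactly $d$ (dividing by a nonzero constant does not change the degree), and $T_{d,\eps}(1) = 1$ by construction.

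For the bound on $[0,1-\eps]$: if $0 \le x \le 1-\eps$ then $\tfrac{x}{1-\eps} \in [0,1] \subseteq [-1,1]$, and $|T_d(y)| = |\cos(d\arccos y)| \le 1$ there (Definition~\ref{def:chebyshev-polynomials}), so $|T_{d,\eps}(x)| \le 1/T_d(1/(1-\eps))$. It then remains to lower bound the denominator. Substituting $x = \cosh\theta$ for $x \ge 1$ shows $T_d(\cosh\theta) = \cosh(d\theta)$, so $T_d$ is monotonically increasing on $[1,\infty)$; combined with $\tfrac{1}{1-\eps} - (1+\eps) = \tfrac{\eps^2}{1-\eps} \ge 0$, i.e.\ $\tfrac{1}{1-\eps} \ge 1+\eps \ge 1$, this gives $T_d(\tfrac{1}{1-\eps}) \ge T_d(1+\eps)$. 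Applying the lower bound of Fact~\ref{fact:growth-of-chebyshev} (valid since $\eps < 0.5$) yields $T_d(1+\eps) \ge e^{c \cdot \min(\sqrt{\eps}\,d,\, \eps d^2)}$, and hence $|T_{d,\eps}(x)| \le e^{-c \cdot \min(\sqrt{\eps}\,d,\,\eps d^2)}$ for all $0 \le x \le 1-\eps$, which is exactly the claim, with $c$ the constant supplied by Fact~\ref{fact:growth-of-chebyshev}.

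The argument is routine and I do not expect any real obstacle. The one point worth flagging is that one should \emph{not} try to invoke Fact~\ref{fact:growth-of-chebyshev} directly at the argument $\tfrac{1}{1-\eps} = 1 + \tfrac{\eps}{1-\eps}$, since $\tfrac{\eps}{1-\eps}$ can exceed $0.5$ even when $\eps < 0.5$ (e.g.\ $\eps = 0.4$); the monotonicity step, which reduces the estimate to $T_d(1+\eps)$, is precisely what sidesteps this and lets us apply the fact within its stated hypothesis.
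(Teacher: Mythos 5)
Your proof is correct, and it is essentially the paper's approach: normalize a Chebyshev polynomial so it equals $1$ at the target point, is bounded on the interval, and invoke Fact~\ref{fact:growth-of-chebyshev} for the denominator. The only difference is the affine map: the paper uses the shift $T_{d,\eps}(x) := T_d(x+\eps)/T_d(1+\eps)$, so the denominator is already $T_d(1+\eps)$ and Fact~\ref{fact:growth-of-chebyshev} applies immediately, whereas your rescaling $x \mapsto x/(1-\eps)$ needs the extra (correct) monotonicity step $T_d\bigl(\tfrac{1}{1-\eps}\bigr) \ge T_d(1+\eps)$ via $T_d(\cosh\theta)=\cosh(d\theta)$ to stay within the hypothesis $\eps<0.5$ of the fact --- a subtlety you rightly flagged, and which the shift construction sidesteps entirely.
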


\begin{proof}
    Let $T_d$ be the degree $d$ Chebyshev polynomial. We define $T_{d, \eps}(x) := T_d(x+\eps)/T_d(1+\eps)$. Then, $T_{d, \eps}(1) = 1$. Moreover, for any $x \in [0, 1-\eps],$ $|T_{d, \eps}(x)| \le |T_d(x+\eps)|/|T_d(1+\eps)|$. However, $|T_d(x+\eps)| \le 1$ by definition of the Chebyshev polynomial (as $0 \le x \le 1-\eps$) and $|T_d(1+\eps)| \ge e^{c \cdot \min(\sqrt{\eps} \cdot d, \eps \cdot d^2)},$ where $c$ is the same constant as in \Cref{fact:growth-of-chebyshev}. So, $|T_{d, \eps}(x)| \le e^{-c \cdot \min(\sqrt{\eps} \cdot d, \eps \cdot d^2)}$ for all $0 \le x \le 1-\eps$.
\end{proof}

Also, as a direct corollary of Facts \ref{fact:cheb-poly-extremal} and \ref{fact:growth-of-chebyshev}, we have the following.

\begin{corollary} \label{cor:polynomial-growth}
    For any $0 < \eps < 0.5,$ and any polynomial $P$ of degree at most $d$,
\[P(1+\eps) \le e^{C \cdot \min(\sqrt{\eps} \cdot d, \eps \cdot d^2)} \cdot \max\limits_{i \in \{0, 1, \dots, d\}} \abs{P\Big(\cos\Big(\frac{i}{d} \cdot \pi\Big)\Big)}.\]
\end{corollary}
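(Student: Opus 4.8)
The plan is to reduce directly to the two cited facts by renormalizing $P$ at the Chebyshev extrema, since the corollary is essentially Fact~\ref{fact:cheb-poly-extremal} followed by Fact~\ref{fact:growth-of-chebyshev}. First I would set $M := \max_{i \in \{0,1,\dots,d\}} \abs{P(\cos(i\pi/d))}$ and dispose of the degenerate case $M = 0$ separately: in that case $P$ vanishes at the $d+1$ distinct points $\cos(i\pi/d)$ while having degree at most $d$, so $P \equiv 0$ and the claimed inequality holds with both sides equal to $0$.

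Assuming $M > 0$, I would consider the rescaled polynomial $Q := P/M$, which still has degree at most $d$ and now satisfies $\abs{Q(\cos(i\pi/d))} \le 1$ for every $i \in \{0,\dots,d\}$ — exactly the hypothesis of Fact~\ref{fact:cheb-poly-extremal}. Since $1+\eps > 1$, that fact applies at $x = 1+\eps$ and gives $P(1+\eps)/M = Q(1+\eps) \le \abs{Q(1+\eps)} \le T_d(1+\eps)$. Finally I would invoke the upper bound in Fact~\ref{fact:growth-of-chebyshev}, namely $T_d(1+\eps) \le e^{C \cdot \min(\sqrt{\eps}\,d,\, \eps d^2)}$ (valid because $0 < \eps < 0.5$), and multiply through by $M$ to get $P(1+\eps) \le e^{C \cdot \min(\sqrt{\eps}\,d,\, \eps d^2)} \cdot M$, which is the statement.

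There is no genuine obstacle here; this is a one-line deduction once the pieces are in place. The only points that need a moment of care are performing the normalization so that Fact~\ref{fact:cheb-poly-extremal} is invoked in its hypothesis-satisfied form, and handling the $M = 0$ edge case, which otherwise would make the division ill-defined.
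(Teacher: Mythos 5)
Your proof is correct and is essentially identical to the paper's own argument: handle the degenerate case where $P$ vanishes at all $d+1$ extrema, otherwise normalize by the maximum value and apply Fact~\ref{fact:cheb-poly-extremal} followed by Fact~\ref{fact:growth-of-chebyshev}. No differences worth noting.
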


\begin{proof}
    Let $x_i := \cos\left(\frac{i}{d} \cdot \pi\right)$. First, if $P\left(x_i\right) = 0$ for all $i \in \{0, 1, \dots, d\}$, then $P$ has at least $d+1$ roots so $P \equiv 0$. 
    Alternatively, if $\max_{i \in \{0, 1, \dots, d\}} \abs{P(x_i)} = \kappa > 0$, then $q(x) = \frac{P(x)}{\kappa}$ satisfies $|q(x_i)| \le 1$ for all $i$, which means by \Cref{fact:cheb-poly-extremal}, $|q(1+\eps)| \le T_d(1+\eps)$. So, $|P(1+\eps)| \le \kappa \cdot T_d(1+\eps) \le \kappa \cdot e^{C \cdot \min(\sqrt{\eps} \cdot d, \eps \cdot d^2)}$, using \Cref{fact:growth-of-chebyshev}.
\end{proof}

\paragraph{Probability Background.}

We also require the following basic facts about probability distributions:

\begin{fact}[Deviation of a Gaussian]
\label{fact:deviation-of-Gaussian}
Let $g \sim \calN(0,1)$. Then, for any $\delta>0$, with probability at least $1-1/\delta$, $\abs{g} \leq \sqrt{\log(1/\delta)}$.
\end{fact}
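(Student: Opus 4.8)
The plan is to prove the standard Gaussian tail bound and then invert it. As the statement is actually invoked in this paper (e.g.\ with $\delta = n^2$ to conclude $\abs{a_1} \le O(\sqrt{\log n})$ in the proof of \Cref{lem:krylov-alignment}), the content we need is: for $\delta$ larger than an absolute constant, $\abs{g} = O(\sqrt{\log \delta})$ except with probability at most $1/\delta$. So it suffices to upper bound $\Pr\Paren{\abs{g} \ge t}$ and then choose $t$ appropriately.

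First I would bound the one-sided tail by a Chernoff argument: since $\mathbb{E}\Paren{e^{sg}} = e^{s^2/2}$ for all $s \in \mathbb{R}$, Markov's inequality gives $\Pr\Paren{g \ge t} \le e^{s^2/2 - st}$ for every $s \ge 0$, and taking $s = t$ yields $\Pr\Paren{g \ge t} \le e^{-t^2/2}$. By symmetry of the standard normal, $\Pr\Paren{\abs{g} \ge t} \le 2 e^{-t^2/2}$. If a sharper prefactor is ever needed, one can instead use the elementary estimate $\int_t^\infty e^{-x^2/2}\,dx \le \tfrac1t \int_t^\infty x\,e^{-x^2/2}\,dx = \tfrac1t e^{-t^2/2}$, valid for $t \ge 1$, but the cruder bound already suffices everywhere it is used.

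Then I would set $t = \sqrt{2\log(2\delta)}$, so that $2e^{-t^2/2} = 1/\delta$; hence with probability at least $1 - 1/\delta$ we get $\abs{g} \le \sqrt{2\log(2\delta)} = O(\sqrt{\log\delta})$, which is exactly the claimed bound up to the universal constant that the downstream uses absorb into their $O(\cdot)$ notation (and up to correcting the harmless typo $\log(1/\delta) \leadsto \log\delta$, without which the statement is vacuous once $\delta > 1$). There is essentially no obstacle: the only care needed is to restrict to $\delta$ bounded away from $1$ from above so the bound is meaningful, and to track the constant. I would simply state the lemma as ``for all $\delta \ge 2$, $\Pr\Paren{\abs{g} \le \sqrt{2\log(2\delta)}} \ge 1 - 1/\delta$'' and note that every later invocation only uses it through an $O(\sqrt{\log(\cdot)})$ bound.
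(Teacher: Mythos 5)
Your proof is correct and is the standard Chernoff/MGF tail bound ($\Pr(|g|\ge t)\le 2e^{-t^2/2}$, then $t=\sqrt{2\log(2\delta)}$); the paper states this Fact without proof, so there is no paper argument to compare against. You are also right that the statement as printed contains a typo (for $\delta>1$ the quantity $\sqrt{\log(1/\delta)}$ is not even real, and the intended bound is $\sqrt{2\log\delta}$ up to a constant), and that every invocation in the paper only uses the fact through an $O(\sqrt{\log n})$-type bound with failure probability $1/\poly(n)$, so your corrected formulation suffices for all downstream uses.
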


\begin{fact}[Hanson-Wright]
\label{fact:Gaussian-norm-conc}
Let $g \sim \calN(0, I_d)$. Then, 
\begin{equation*}
\Pr\left[ \abs{ \norm{g}^2 - d } > t \right]  \leq 2 \exp\Paren{- c t}. 
\end{equation*}
\end{fact}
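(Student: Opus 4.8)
The plan is to prove this by the standard Chernoff (moment generating function) method, using that $\norm{g}_2^2 = \sum_{i=1}^{d} g_i^2$ is a chi-squared random variable with $d$ degrees of freedom when $g \sim \calN(0, I_d)$.

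First I would record the MGF of a single squared standard Gaussian: $\mathbb{E}\big[e^{s g_i^2}\big] = (1-2s)^{-1/2}$ for every $s < 1/2$, and hence by independence $\mathbb{E}\big[e^{s \norm{g}_2^2}\big] = (1-2s)^{-d/2}$. For the upper tail, Markov's inequality applied to $e^{s \norm{g}_2^2}$ gives, for $s \in (0, 1/2)$,
\[
\Pr\left[ \norm{g}_2^2 - d > t \right] \;\le\; e^{-s(d+t)} (1-2s)^{-d/2} \;=\; \exp\!\Paren{ -s(d+t) - \tfrac{d}{2}\ln(1-2s) }.
\]
Choosing $s = \tfrac{t}{2(d+t)}$ and bounding $\tfrac{t}{d} - \ln\!\Paren{1 + \tfrac{t}{d}}$ from below by elementary estimates on $\ln(1+x)$ yields an upper-tail bound of the form $\exp\big(-c_1 \min(t^2/d,\, t)\big)$. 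The lower tail is handled symmetrically: taking $s < 0$ and optimizing (the event is empty once $t \ge d$, so one may assume $t < d$) produces $\Pr\left[ \norm{g}_2^2 - d < -t \right] \le \exp\big(-c_2\, t^2/d\big) \le \exp\big(-c_2 \min(t^2/d,\, t)\big)$. Adding the two tail probabilities and setting $c = \min(c_1, c_2)$ gives $\Pr\left[\, |\norm{g}_2^2 - d| > t\, \right] \le 2\exp\big(-c \min(t^2/d,\, t)\big)$, which implies the stated inequality; in every application in this paper the relevant deviation satisfies $t = \Omega(d)$, so the minimum is simply $t$ and we recover exactly the form $2\exp(-ct)$ (for the remaining range $t = O(1)$ the right-hand side exceeds $1$ and the bound is vacuous).

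Alternatively, the claim follows immediately from the general Hanson-Wright inequality applied to the quadratic form $g^\top I_d\, g$, but the self-contained chi-squared computation above avoids invoking any black box. There is no real obstacle here: the only mildly delicate bookkeeping is the optimization over $s$ and tracking the two regimes --- the sub-Gaussian bound $t^2/d$ for moderate deviations and the sub-exponential bound $t$ for large ones --- which is routine.
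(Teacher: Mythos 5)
Your argument is correct, and it is the standard chi-squared tail bound via the moment generating function; the paper itself offers no proof of this Fact, stating it as a known (Hanson--Wright type) concentration inequality, so there is nothing to diverge from --- your self-contained Chernoff computation is a perfectly valid substitute for the implicit citation. One point worth making explicit: what your calculation actually yields is the correct general bound $\Pr\left[\,\left|\|g\|_2^2 - d\right| > t\,\right] \le 2\exp\left(-c\min\left(t^2/d,\,t\right)\right)$, and this does \emph{not} imply the literal inequality $2\exp(-ct)$ in the intermediate regime $1 \ll t \ll d$, where the sub-Gaussian term $t^2/d$ is the binding one and the Fact as printed is simply false (take $t = \sqrt{d}$: the left side is a constant while $2e^{-c\sqrt{d}} \to 0$). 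You partially flag this by noting that every application in the paper has $t = \Omega(d)$ (the paper uses it to show $\hat{a}_j = \Theta(\sqrt{t})$, i.e.\ constant-factor multiplicative deviations of a $t$-dimensional Gaussian norm), which is exactly the regime where $\min(t^2/d, t) = \Theta(t)$ and the stated form is recovered; but the sentence ``which implies the stated inequality'' should be tempered accordingly, since the implication holds only for $t = \Omega(d)$ rather than for all $t$. This is an imprecision in the paper's statement rather than a gap in your proof, and your derivation, restricted to the regime in which the Fact is invoked, is complete and correct.
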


\begin{fact}[Singular Values of a Gaussian Matrix~\cite{vershynin2010introduction}]
\label{fact:singular-values-of-Gaussian}
Let $G\in \mathbb{R}^{n \times d}$ be such that for all $i\in [n], j\in[d]$, $G_{i,j} \sim\calN\Paren{0,1}$. Let $\sigma_1 \geq \sigma_2 \geq \ldots \geq \sigma_d$ be the singular values of $G$. Then, with probability at least $1-2\exp\Paren{-d/2}$, 
\begin{equation*}
    \sqrt{n} - 2\sqrt{d} \leq \sigma_d \leq \sigma_1 \leq \sqrt{n} + 2\sqrt{d}.
\end{equation*}
\end{fact}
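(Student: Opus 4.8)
The plan is to prove this by the classical two-step argument for the extreme singular values of a Gaussian matrix (this is Corollary~5.35 of~\cite{vershynin2010introduction} specialized to $t=\sqrt d$): first pin down the \emph{expectations} $\mathbb{E}[\sigma_1(G)]$ and $\mathbb{E}[\sigma_d(G)]$ up to the right constants via a Gaussian comparison inequality, and then upgrade to a high-probability bound via Gaussian concentration of measure. Throughout I use the variational identities
\[
\sigma_1(G) = \max_{x \in S^{d-1},\, y \in S^{n-1}} y^\top G x, \qquad \sigma_d(G) = \min_{x \in S^{d-1}} \max_{y \in S^{n-1}} y^\top G x = \min_{x \in S^{d-1}} \|Gx\|_2,
\]
which realize $\sigma_1(G)$ and $\sigma_d(G)$ as a maximum and a min--max, respectively, of the centered Gaussian process $\Phi_{x,y} := y^\top G x$ indexed by pairs of unit vectors (I tacitly assume $n \ge d$, as otherwise the lower bound is vacuous).

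For the expectation bounds I would compare $\Phi$ with the much simpler ``split'' Gaussian process $\Psi_{x,y} := \langle g, x\rangle + \langle h, y\rangle$, where $g \sim \calN(0,I_d)$ and $h \sim \calN(0,I_n)$ are independent. A direct second-moment computation shows that, writing $a := \langle x,x'\rangle$ and $b := \langle y,y'\rangle$,
\[
\mathbb{E}\left[(\Psi_{x,y} - \Psi_{x',y'})^2\right] - \mathbb{E}\left[(\Phi_{x,y} - \Phi_{x',y'})^2\right] = (4 - 2a - 2b) - (2 - 2ab) = 2(1-a)(1-b) \ge 0,
\]
with equality whenever $x=x'$ (the ``within-block'' case). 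Hence $\Phi$ has pointwise no-larger increments than $\Psi$, and equal increments within each block $\{x\}\times S^{n-1}$, so the Sudakov--Fernique inequality gives $\mathbb{E}[\sigma_1(G)] = \mathbb{E}\max_{x,y}\Phi_{x,y} \le \mathbb{E}\max_{x,y}\Psi_{x,y} = \mathbb{E}\|g\|_2 + \mathbb{E}\|h\|_2 \le \sqrt d + \sqrt n$, and Gordon's min--max comparison inequality (with $\Psi$ in the ``smaller'' role and $\Phi$ in the ``larger'' role) gives $\mathbb{E}[\sigma_d(G)] = \mathbb{E}\min_x\max_y\Phi_{x,y} \ge \mathbb{E}\min_x\max_y\Psi_{x,y} = \mathbb{E}\|h\|_2 - \mathbb{E}\|g\|_2 \ge \sqrt n - \sqrt d$; the last inequality uses that $k \mapsto \sqrt k - \mathbb{E}\|\calN(0,I_k)\|_2$ is non-increasing, so the Jensen gaps satisfy $\sqrt n - \mathbb{E}\|h\|_2 \le \sqrt d - \mathbb{E}\|g\|_2$.

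For the concentration step I would use that singular values are $1$-Lipschitz with respect to the operator norm, hence with respect to the Frobenius norm: $|\sigma_i(G) - \sigma_i(G')| \le \|G-G'\|_{\mathrm{op}} \le \|G-G'\|_F$. Thus $G \mapsto \sigma_1(G)$ and $G \mapsto \sigma_d(G)$ are $1$-Lipschitz functions of the vector of $nd$ i.i.d.\ standard Gaussian entries of $G$, so the Gaussian concentration inequality yields the one-sided tails $\Pr[\sigma_1(G) > \mathbb{E}\sigma_1(G) + t] \le e^{-t^2/2}$ and $\Pr[\sigma_d(G) < \mathbb{E}\sigma_d(G) - t] \le e^{-t^2/2}$ for every $t > 0$. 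Taking $t = \sqrt d$, substituting the expectation bounds from the previous step, and union-bounding the two one-sided failure events gives $\sqrt n - 2\sqrt d \le \sigma_d(G) \le \sigma_1(G) \le \sqrt n + 2\sqrt d$ with probability at least $1 - 2e^{-d/2}$, which is the claim.

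The step I expect to require the most care is the Gaussian comparison for $\mathbb{E}[\sigma_d(G)]$: one has to invoke Gordon's min--max inequality in its increment-only (Sudakov--Fernique-type) form and, crucially, place $\Phi$ and $\Psi$ in the correct roles so that the across-block condition $\mathbb{E}(\Psi_{x,y}-\Psi_{x',y'})^2 \ge \mathbb{E}(\Phi_{x,y}-\Phi_{x',y'})^2$ for $x \ne x'$ points the right way -- which is exactly the content of the identity $2(1-a)(1-b)\ge 0$ above, together with the fact that the within-block increments are equal. Everything else (the Lipschitz property of singular values, the Gaussian concentration inequality, the chi-distribution Jensen-gap monotonicity, and the bookkeeping with $t=\sqrt d$) is routine. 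A more self-contained but lossier alternative would replace the comparison inequality with an $\varepsilon$-net argument on the spheres together with $\chi^2$ tail bounds; this delivers $\sigma_1(G) \le \sqrt n + C\sqrt d$ only for some unspecified absolute constant $C$, so the comparison-inequality route (as in~\cite{vershynin2010introduction}) is what is actually needed to obtain the constant $2$ in the statement.
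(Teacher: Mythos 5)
Your proof is correct and is exactly the argument behind the cited result (Corollary~5.35 in~\cite{vershynin2010introduction}): the paper itself offers no proof beyond the citation, and your reconstruction---Sudakov--Fernique/Gordon comparison with the split process $\langle g,x\rangle+\langle h,y\rangle$ for the expectations, then Gaussian concentration of the $1$-Lipschitz singular-value maps with $t=\sqrt{d}$ and a union bound---is the standard route Vershynin takes, yielding the stated constants and the $1-2e^{-d/2}$ probability.
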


We also require the following definitions:
\begin{definition}[Haar Random Matrix]
\label{def:haar-random}
Let $\mathbb{O}(n)$ be the orthogonal group on $n \times n$ matrices. There is a unique rotation invariant probability measure (Haar measure) $\mu$ on $\mathbb{O}(n)$. A Haar random matrix $U$ is a (matrix valued) sample from $\mu$.
\end{definition}
\section{Lower Bound against Block Krylov Methods}

In this section, we show that increasing the block size in Krylov method based algorithms does not help to solve our hard instance from Section~\ref{sec:tech-overview}. In particular, we show that starting with a block size of $c\log(n)/\sqrt{\eps}$, for a small fixed constant $c$, and running $c\log(n)/\sqrt{\eps}$ iterations does not suffice to obtain a $(1+\eps)$-Spectral low-rank approximation. 


\vspace{0.1in}
\begin{mdframed}
  \begin{algorithm}[Block Krylov ``Algorithm'', generalization of \cite{musco2015randomized}]
    \label{algo:block-krylov-generalized}\mbox{}
    \begin{description}
    \item[Input:] An $n \times n$ matrix $A$, iteration count $r$, block size $s$. 
    
    \begin{enumerate}
    \item Let $g_1, \dots, g_s$ be vectors drawn i.i.d. from $\mathcal{N}(0,I)$. Let $\calK = \{A^t g_j\}_{0 \le t \le r, 1 \le j \le s}$ be the set of vectors. 
    \item Choose a unit vector $v \in \Span(\calK)$ to minimize $\left\|A (I - vv^\top) \right\|_{\op}$.
    \end{enumerate}
    \item[Output:] $A vv^\top$.  
    \end{description}
  \end{algorithm}
\end{mdframed}

\begin{theorem}[Spectral LRA is hard for Block Krylov Methods]
\label{thm:lb-against-block-krylov-spectral-lra}
Given $0<\eps<\frac{1}{2}$, let $n \ge \Omega\Paren{1/\eps^{2.01}}$. Then, there exists a distribution over $n \times n$ matrices $A$ and some small absolute constant $1>c > 0$ such that with probability at least $9/10$, for each vector $v$ in the Krylov subspace generated with $s =\frac{c \log n}{\sqrt{\eps}}$ random unit vectors, for $r =\frac{c \log n}{\sqrt{\eps}}$ iterations (see \Cref{algo:block-krylov-generalized}),
\begin{equation*}
    \Norm{ A \Paren{  I -  v v^\top } }_{\op} \geq (1+\eps) \min_{ \norm{u}_2=1 } \Norm{ A \Paren{I  -  uu^\top } }_{\op}.
\end{equation*}
\end{theorem}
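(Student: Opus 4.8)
The plan is to lift the single-vector argument from Section~\ref{subsec:single-krylov-lb} to the block setting by controlling the potential anti-correlation between the contributions of distinct starting vectors. Let $g_1,\dots,g_s$ be the $s$ starting Gaussians, and write each $g_j = a_{1,j} u_1 + \sum_{\ell=2}^{q+2} \hat a_{\ell,j} \hat u_{\ell,j}$ in the eigenbasis grouped by the multiplicity-$t$ blocks, exactly as in the proof of Lemma~\ref{lem:krylov-alignment}. Any unit vector $w$ in the block Krylov subspace decomposes as $w = \sum_{j=1}^s w_j$, where $w_j = a_{1,j} p_j(\lambda_1) u_1 + \sum_{\ell=2}^{q+2} p_j(\lambda_\ell)\hat a_{\ell,j}\hat u_{\ell,j}$ for some degree-$r$ polynomials $p_j$. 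The goal is to show $\langle u_1, w\rangle^2 \le \eps$, which by Lemma~\ref{lem:alignment-to-lra} (applied with the observation $\langle u_1, Aw\rangle^2 = (1+2\eps)^2\langle u_1,w\rangle^2 \le \eps/2$ once $\langle u_1,w\rangle^2 \le \eps/10$) suffices.

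First I would show that $1 = \|w\|_2^2 \ge c' \sum_{j=1}^s \|w_j\|_2^2$ for an absolute constant $c'$, i.e. there is almost no cancellation among the $w_j$. The key point is that the ``bulk'' components $\sum_{\ell\ge 2} p_j(\lambda_\ell)\hat a_{\ell,j}\hat u_{\ell,j}$ live in the blocks spanned by the $u$'s associated to $\lambda_2,\dots,\lambda_{q+2}$; each such block has dimension $t = \Theta(n\sqrt\eps/\log n) \gg s^2$, and restricted to a single block the $s$ vectors $\{\hat u_{\ell,j}\}_{j\in[s]}$ are (projections of) independent Gaussian directions in a huge-dimensional space, hence by Fact~\ref{fact:singular-values-of-Gaussian} they are nearly orthonormal with high probability, with Gram matrix $I_s \pm O(s/\sqrt t)$. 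Summing over the $\le q+1$ blocks and using $u_1 \perp$ everything else, the cross terms in $\|w\|_2^2 = \sum_{i,j}\langle w_i,w_j\rangle$ are dominated by the diagonal, giving $\|w\|_2^2 \ge \tfrac12\sum_j\|w_j\|_2^2$ after a union bound (this is where $n \ge \Omega(1/\eps^{2.01})$ buys the extra slack, since we need $s/\sqrt t = \Theta(\log n \cdot \sqrt\eps / \sqrt{n\sqrt\eps/\log n}) = o(1/s)$, i.e. $n \gg \log^4 n/\eps^{2}$, so taking $c$ small and $n = \Omega(1/\eps^{2.01})$ works).

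Next, for each individual $j$, the single-vector analysis already gives $a_{1,j}^2 p_j(\lambda_1)^2 \le \tfrac{\eps}{100}\max_{\ell}\hat a_{\ell,j}^2 p_j(\lambda_\ell)^2 \le \tfrac{\eps}{100}\|w_j\|_2^2$ on the high-probability event that $|a_{1,j}| \le O(\sqrt{\log n})$ and all $\hat a_{\ell,j} = \Theta(\sqrt t)$ — exactly Equation~\eqref{eqn:main-inequality-p-at-roots} applied to $p_j$ (this uses $r = c\log n/\sqrt\eps$ and Corollary~\ref{cor:polynomial-growth}/Fact~\ref{fact:cheb-poly-extremal}). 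Hence $|a_{1,j} p_j(\lambda_1)| \le \tfrac{\sqrt\eps}{10}\|w_j\|_2$, and by Cauchy--Schwarz,
\begin{equation*}
\langle u_1, w\rangle^2 = \Big(\sum_{j=1}^s a_{1,j}p_j(\lambda_1)\Big)^2 \le s \sum_{j=1}^s a_{1,j}^2 p_j(\lambda_1)^2 \le \frac{\eps s}{100}\sum_{j=1}^s \|w_j\|_2^2 \le \frac{\eps s}{100 c'} \|w\|_2^2.
\end{equation*}
This is where the naive bound breaks: the factor $s$ is too lossy. So I would instead avoid Cauchy--Schwarz on the $u_1$-coordinate and argue directly: group the index set and note $\sum_j a_{1,j}p_j(\lambda_1) = \langle u_1, w\rangle$ is a single scalar, and combine the per-$j$ bound with the \emph{same} near-orthogonality structure used for the bulk. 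Concretely, one shows the full vector $w$ itself satisfies $a_1^2 p(\lambda_1)^2 \le \tfrac{\eps}{100}\|w\|_2^2$ by treating $w$'s $u_1$-coefficient $\sum_j a_{1,j}p_j(\lambda_1)$ against $w$'s bulk norm $\sum_\ell \|\sum_j p_j(\lambda_\ell)\hat a_{\ell,j}\hat u_{\ell,j}\|_2^2$, and the near-orthogonality of the $\hat u_{\ell,j}$ within each block means this bulk norm is $\ge c'\sum_j\sum_\ell p_j(\lambda_\ell)^2\hat a_{\ell,j}^2 \ge c'\sum_j\|w_j\|_2^2 - (\text{bulk of }w_j)$; feeding in $|a_{1,j}p_j(\lambda_1)|\le \tfrac{\sqrt\eps}{10}\sqrt{\sum_\ell p_j(\lambda_\ell)^2\hat a_{\ell,j}^2}$ and Cauchy--Schwarz with matching weights gives $|\langle u_1,w\rangle| \le \tfrac{\sqrt\eps}{10}\sqrt{\sum_j\sum_\ell p_j(\lambda_\ell)^2\hat a_{\ell,j}^2} \le \tfrac{\sqrt\eps}{10}\cdot\tfrac{1}{\sqrt{c'}}\|w\|_2$, losing no factor of $s$. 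The main obstacle is precisely this last maneuver — pairing the per-$j$ Chebyshev-extremality bound with the per-block near-orthogonality so the Cauchy--Schwarz weights align and the $s$-loss is avoided; everything else (the eigenbasis expansion, the Gaussian concentration, the Chebyshev growth estimate, and the reduction to Lemma~\ref{lem:alignment-to-lra}) is a direct reprise of the single-vector proof. Finally I would union bound the $O(s\cdot q) = O(\log^2 n/\eps)$ concentration events and the block-Gram-matrix events to get overall probability $\ge 9/10$.
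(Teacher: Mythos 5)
Your overall architecture is the same as the paper's: decompose any $w$ in the block Krylov space into per-starting-vector contributions given by degree-$r$ polynomials applied to the eigenvalues, use the high multiplicity of each non-top eigenvalue so that the $s$ Gaussian directions inside each eigenspace are nearly orthonormal (the paper's Lemma \ref{prop:concentration}, Part 3, via Fact \ref{fact:singular-values-of-Gaussian}) to rule out cancellation, apply Chebyshev extremality (Corollary \ref{cor:polynomial-growth}) to each polynomial, and finish with Lemma \ref{lem:alignment-to-lra}. Your no-cancellation claim $\|w\|_2^2 \ge c'\sum_j \|w_j\|_2^2$ is correct and corresponds to the paper's bound $\|x_t\|_2^2 \ge \frac{k}{64}\sum_j \phi_j(\lambda_t)^2$. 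The genuine gap is the step you yourself flag as the main obstacle: the ``matching-weights Cauchy--Schwarz'' that is supposed to give $|\langle u_1, w\rangle| \le \frac{\sqrt{\eps}}{10\sqrt{c'}}\|w\|_2$ with no factor of $s$. From the per-$j$ bounds $|a_{1,j}p_j(\lambda_1)| \le \frac{\sqrt{\eps}}{10}\sqrt{B_j}$ with $B_j = \sum_{\ell\ge 2} p_j(\lambda_\ell)^2 \hat a_{\ell,j}^2$, the only conclusion available is $\bigl|\sum_j a_{1,j}p_j(\lambda_1)\bigr| \le \frac{\sqrt{\eps}}{10}\sqrt{s\sum_j B_j}$; the inequality you assert (a sum of square roots bounded by the square root of the sum) is false in general --- take all $B_j$ equal and all terms $a_{1,j}p_j(\lambda_1)$ of the same sign, which the adversary choosing the $p_j$'s is free to arrange. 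So the decisive inequality does not follow as written, and since $\eps s = c\sqrt{\eps}\log n$ can greatly exceed $\eps$, the $s$-lossy version of your relative per-$j$ bound really is insufficient.

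The paper closes this differently: it does not avoid the $\sqrt{s}$ loss, it makes the per-$j$ information absolute rather than relative to $\|w_j\|$. From $\sum_{t\ge 2}\|x_t\|_2^2 \le 1$ and the singular-value bound it deduces $\sum_j \phi_j(\lambda_1)^2 \le \frac{64}{k}e^{4C\sqrt{\eps}r}$ (Equation \eqref{eq:square_p1_sum}), and then pays Cauchy--Schwarz: $|\langle v,u_1\rangle| \le 5\sqrt{\log n}\,\sqrt{s\sum_j \phi_j(\lambda_1)^2} = O\bigl(\sqrt{\log n\cdot s\cdot r/n}\cdot n^{0.001}\bigr)$. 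The hypothesis $n \ge \Omega(1/\eps^{2.01})$ is precisely what absorbs this $\sqrt{s\log n}$ loss --- not the Gram-matrix condition $t \gg s^2$ you cite, which would already hold for $n$ around $\eps^{-1.5}$ up to polylogarithmic factors. To repair your argument: keep the near-orthogonality step, but replace the relative bound $|a_{1,j}p_j(\lambda_1)| \le \frac{\sqrt{\eps}}{10}\|w_j\|_2$ by the absolute bound on $\sum_j p_j(\lambda_1)^2$ coming from the unit-norm constraint, accept the $\sqrt{s}$ factor, and check the final numerics against $n \ge \eps^{-2.01}$; this is exactly the paper's Lemma \ref{lem:lb-against-krylov-main}. (Your handling of $\langle u_1, Aw\rangle$ via the eigenvalue identity is fine and consistent with the paper.)
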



Our hard distribution is essentially the same as before, but we restate it here for completeness. We construct a matrix $A = U D U^\top$, where $U$ is a uniformly random $n \times n$ orthogonal matrix, and $D$ is a fixed diagonal matrix of the eigenvalues. $D$ has  $O(\log(n)/\sqrt{\eps})$ distinct eigenvalues, where the top eigenvalue, $\lambda_1 = 1+2\eps$, has multiplicity $1$, and the remaining distinct eigenvalues, $\lambda_2, \lambda_3, \ldots , \lambda_{r+2}$ have multiplicity $O(n\sqrt{\eps}/\log(n))$. We set the remaining eigenvalues to be the distinct locations where the degree-$r$ Chebyshev polynomial $T_r$ equals $1$ or $-1$.

\begin{definition}[Hard Distribution]
\label{def:hard-distribution-chebyshev-polynomial}
Given $\eps>0$, let $n \in \mathbb{N}$ be such that $n= \Omega\Paren{1/\eps^{2.01}}$ and let $r = c \log(n)/\sqrt{\eps}$, for some fixed small constant $c$. Further, assume $k=\frac{n-1}{r+1}$ is an integer. 
For $t \in [2, r+2]$, let $\lambda_{t} := \cos\left(\frac{t-2}{R} \cdot \pi\right)$. Then, we define  
$D$ as the diagonal matrix with $D_{1, 1} = \lambda_1 = 1+2\eps$, and $D_{i, i} = \lambda_{1 + \lceil \frac{i-1}{k} \rceil}$ for each $2 \le i \le n$, so that every $\lambda_t$ for $2 \le t \le r+2$ is repeated exactly $k$ times. Finally, we define $A = U D U^\top$, where $U$ is a uniformly random $n \times n$ orthogonal matrix, and $D$ is the  fixed diagonal matrix defined above. 

\end{definition}

We note that $\lambda_2, \dots, \lambda_r$ are all in the range $[-1, 1]$, and in fact $\lambda_2 = 1$ and $\lambda_{r+2} = 1$.

The following fact is immediate by the Eckardt-Young theorem and the fact that $\max_{2 \le i \le r+2} |\lambda_i| = 1$.

\begin{fact}
    We have that $\min_{\|u\|_2 = 1} \norm{A-Auu^\top}_\op = 1$.
\end{fact}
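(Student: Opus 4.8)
The plan is to establish the two matching inequalities $\min_{\|u\|_2=1}\norm{A-Auu^\top}_\op \le 1$ and $\min_{\|u\|_2=1}\norm{A-Auu^\top}_\op \ge 1$ separately, using nothing more than the structure of the spectrum of $A$ fixed in \Cref{def:hard-distribution-chebyshev-polynomial} together with \Cref{fact:schtatten_lra} (i.e.\ the Eckart--Young theorem, specialized to the operator norm and rank $k=1$).

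For the upper bound, I would simply exhibit a good choice of $u$. Let $u_1$ denote the unit eigenvector of $A$ with eigenvalue $\lambda_1 = 1+2\eps$. Since $Au_1 = \lambda_1 u_1$, we have $Au_1 u_1^\top = \lambda_1 u_1 u_1^\top$, and writing $A$ in its eigenbasis $U$ gives $A - A u_1 u_1^\top = \sum_{i \ge 2}\lambda_i u_i u_i^\top$, whose operator norm is $\max_{i \ge 2}|\lambda_i|$. By construction every eigenvalue other than $\lambda_1$ is of the form $\cos\left(\frac{j}{r}\pi\right)$ for some $0 \le j \le r$, so these all lie in $[-1,1]$ and the value $1$ is attained (at $j=0$, i.e.\ $\lambda_2 = 1$). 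Hence $\norm{A-Au_1u_1^\top}_\op = 1$, giving the upper bound.

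For the lower bound, note that for any unit vector $u$ the matrix $Auu^\top$ has rank at most $1$, so $A - Auu^\top$ is a rank-at-most-$1$ perturbation of $A$. By \Cref{fact:schtatten_lra} applied with the Schatten-$\infty$ (operator) norm and $k=1$, we get $\norm{A - B}_\op \ge \norm{A - A_1}_\op = \sigma_2(A)$ for every matrix $B$ of rank at most $1$. Since $A$ is symmetric, its singular values are the absolute values of its eigenvalues sorted in decreasing order; because $1+2\eps > 1 \ge |\lambda_i|$ for all $i \ge 2$, we have $\sigma_1(A) = 1+2\eps$ and $\sigma_2(A) = \max_{i\ge2}|\lambda_i| = 1$. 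Therefore $\norm{A - Auu^\top}_\op \ge 1$ for every unit $u$, and combining with the previous paragraph finishes the proof. The ``main obstacle'' is essentially nonexistent: the only points needing care are confirming that $1+2\eps$ really is the top singular value (so that the Eckart--Young optimum equals $\sigma_2$, not $\sigma_1$) and that $\max_{i\ge2}|\lambda_i| = 1$, both of which are immediate from the Chebyshev-extrema construction of the spectrum. If one prefers to avoid invoking Eckart--Young, the identical bound follows from the Courant--Fischer identity $\sigma_2(A) = \min_{\dim S = n-1}\,\max_{x \in S,\, \|x\|_2=1}\|Ax\|_2$ applied to $S = u^\perp$, using that $A(I-uu^\top)x = Ax$ for $x \perp u$.
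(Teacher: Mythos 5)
Your proof is correct and follows essentially the same route as the paper, which dismisses this fact as immediate from Eckart--Young together with $\max_{2 \le i \le r+2}|\lambda_i| = 1$: your lower bound is exactly the Eckart--Young bound $\sigma_2(A)=1$ (valid since $Auu^\top$ has rank at most one and $\sigma_1(A)=1+2\eps$), and your upper bound just makes explicit the witness $u=u_1$. The only cosmetic caveat is that the paper's \Cref{fact:schtatten_lra} is phrased for finite Schatten-$p$, so for the operator norm you should cite Eckart--Young (or your Courant--Fischer alternative) directly, but this changes nothing substantive.
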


We also recall the following result.
\uncorrelated*

We begin with the following lemma:

\begin{lemma}[Concentration Properties] \label{prop:concentration}
Let $A = V \Sigma V^\top$ be a sample from the hard instance in Definition~\ref{def:hard-distribution-chebyshev-polynomial}, such that $r=s=c\log(n)/\sqrt{\eps}$ and $k=(n-1)/(r+1)$. Let $g_1, g_2, \ldots, g_s \sim \calN(0, I)$ such that for each $j\in[s]$,  $g_j$ admits the following decomposition in the eigenbasis of $A$: $g_j = a_{j, 1}  v_{1} + \sum_{t \in [2, r+2]} a_{j, t} v_{t}$, where $v_1 = V_{*,1}$ and for any $t \in [2, r+2]$, $v_t = \sum_{i \in [k]} V_{*,(t-2)k +i +1}$.   With probability at least $0.9$, each of the following hold, assuming $k \ge 100 s$.
\begin{enumerate}
    \item For all $j \in [s]$, $a_{j, 1} \le 5 \sqrt{\log n}$.
    \item For all $j \in [s]$ and all $2 \le t \le r+2$, $0.5 \sqrt{k} \leq a_{j, t} \leq  2 \sqrt{k}$.  
    \item For all $2 \le t \le r+2$, define $V^{(t)} \in \BR^{n \times s}$ be the matrix with row $j$ equal to $v_{j, t}$. Then, for all $t \le r+2$, $V^{(t)}$ has all singular values between $\frac{1}{4}$ and $4$.
\end{enumerate}
\end{lemma}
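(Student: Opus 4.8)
\textbf{Proof plan for Lemma~\ref{prop:concentration}.} The strategy is to treat the three claims independently and apply standard Gaussian concentration to each, then union bound. The key observation throughout is that, since $U$ is orthogonal, the coefficients $a_{j,i} = \langle g_j, U_{*,i}\rangle$ in the eigenbasis are themselves i.i.d.\ $\calN(0,1)$ over all $j \in [s]$ and $i \in [n]$; moreover the grouped quantities $a_{j,t}$ (for $t \ge 2$) are Euclidean norms of disjoint blocks of $k$ such coordinates, and the vectors $v_t$ for $t \ge 2$ are sums of $k$ orthonormal columns (hence have norm exactly $\sqrt{k}$), while these blocks are mutually orthogonal across $t$. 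Hence each of the three claims reduces to a concentration statement about a handful of independent Gaussians or Gaussian matrices.

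\textbf{Claim 1.} There are $s$ variables $a_{j,1}$, each $\calN(0,1)$. By \Cref{fact:deviation-of-Gaussian} with $\delta = n^{10}$ (say), each satisfies $|a_{j,1}| \le \sqrt{10 \log n} \le 5\sqrt{\log n}$ except with probability $n^{-10}$; union bounding over the $s = c\log(n)/\sqrt{\eps} \le n$ values of $j$ gives failure probability at most $n^{-9}$.

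\textbf{Claim 2.} For fixed $j$ and $t \ge 2$, $a_{j,t}^2 = \sum_{i\in[k]} a_{(j,\cdot)}^2$ is the squared norm of a $k$-dimensional standard Gaussian vector, so by \Cref{fact:Gaussian-norm-conc} (Hanson--Wright), $\Pr[|a_{j,t}^2 - k| > k/2] \le 2e^{-ck}$; on this event $0.5\sqrt{k} \le a_{j,t} \le 2\sqrt{k}$ (in fact even tighter). Since $k = (n-1)/(r+1) = \Theta(n\sqrt{\eps}/\log n) = \Omega(\eps^{0.505}/\log n \cdot 1/\eps) \to \infty$, and there are only $s(r+1) \le n^2$ pairs $(j,t)$, a union bound gives failure probability $2n^2 e^{-ck} = o(1)$ once $k \ge C\log n$, which holds for our parameter regime $n = \Omega(\eps^{-2.01})$.

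\textbf{Claim 3.} For fixed $t \ge 2$, the matrix $V^{(t)} \in \BR^{s \times k}$ (I will follow the row/column convention the lemma sets up — it has $s$ rows, one per $g_j$, each a length-$k$ slice of i.i.d.\ Gaussians, after the orthogonal change of basis) is, conditioned on $U$, an $s \times k$ matrix of i.i.d.\ $\calN(0,1)$ entries \emph{scaled by} $1/\sqrt{k}$: indeed row $j$ of $V^{(t)}$ records the $k$ coefficients of $g_j$ on the block of eigenvectors for $\lambda_t$, and we normalize by $\sqrt{k}$ so that these are the components of $v_t/\sqrt{k}$, a unit-norm direction. Applying \Cref{fact:singular-values-of-Gaussian} to the $k \times s$ Gaussian matrix $\sqrt{k}\cdot (V^{(t)})^\top$: its singular values lie in $[\sqrt{k} - 2\sqrt{s}, \sqrt{k}+2\sqrt{s}]$ except with probability $2e^{-s/2}$. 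Dividing by $\sqrt{k}$, the singular values of $V^{(t)}$ lie in $[1 - 2\sqrt{s/k}, 1 + 2\sqrt{s/k}]$, and since we assume $k \ge 100s$ this interval is contained in $[0.8, 1.2] \subset [\tfrac14, 4]$. Union bounding over the $r+1 \le n$ values of $t$ gives failure probability $2ne^{-s/2}$, which is $o(1)$ since $s = c\log(n)/\sqrt{\eps} \ge c\log n$, so $e^{-s/2} \le n^{-c/2}$ and we take $c$ so that this beats the $n$ factor — or more simply note $s = \omega(\log n)$ as $\eps \to 0$, or absorb a slightly larger constant into $c$ relative to $n \ge \Omega(\eps^{-2.01})$.

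\textbf{Conclusion.} A final union bound over the three events (each failing with probability $o(1)$, and in particular with total probability below $0.1$ for the stated parameter ranges) yields that all three claims hold simultaneously with probability at least $0.9$. \emph{Expected main subtlety:} the only thing requiring care is bookkeeping the dimensions and normalizations in Claim 3 — making sure $V^{(t)}$ is the $1/\sqrt{k}$-normalized Gaussian matrix so that \Cref{fact:singular-values-of-Gaussian} gives singular values near $1$ rather than near $\sqrt{k}$ — together with checking that the parameter regime $n = \Omega(\eps^{-2.01})$ (rather than $n = \Omega(\eps^{-2})$) is exactly what makes $k = \Theta(n\sqrt\eps/\log n) \ge 100 s = \Theta(\log n/\sqrt\eps)$ hold, i.e.\ $n\sqrt\eps/\log n \gtrsim \log n/\sqrt\eps \iff n \gtrsim (\log n)^2/\eps$, which the slack in the exponent $2.01$ versus $2$ absorbs.
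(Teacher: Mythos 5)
Your proof is correct and follows essentially the same route as the paper: a standard Gaussian tail bound plus union bound for Part 1, concentration of the norm of a $k$-dimensional Gaussian with a union bound over the at most $n^2$ pairs $(j,t)$ for Part 2, and Fact~\ref{fact:singular-values-of-Gaussian} together with the hypothesis $k \ge 100s$ for Part 3. The only cosmetic divergence is in Part 3: the paper's $V^{(t)}$ has columns $v_{j,t}$ normalized by the random norms $a_{j,t}$ rather than by $\sqrt{k}$, so the paper writes $V^{(t)} = G^{(t)}\,\diag(a_t)^{-1}$, applies the Gaussian-matrix bound to $G^{(t)}$, and uses Part 2 to control the diagonal factor; your deterministic $1/\sqrt{k}$ normalization gives singular values in $[0.8,1.2]$, and passing to the paper's unit-column matrix via Part 2 costs only an extra diagonal factor with entries in $[1/2,2]$, which still lands inside $[\tfrac14,4]$.
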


\begin{proof}
    Since the top eigenvector has multiplicity $1$, $a_{j, 1}$ is the absolute value of a standard Gaussian $\calN(0, 1)$. So, Part 1 holds with probability at least $0.99$ by a union bound over $s \le n$ choices of $j$.

    For any $2 \le t \le r+2$, the $t^{\text{th}}$ eigenvector has multiplicity $k$, so the distribution of $a_{j, t}$ is the norm of a $k$-dimensional Gaussian. Since $k= \Theta\left(\frac{n}{r}\right)$ and $r \le \frac{\log n}{\sqrt{\eps}},$ if $n \gg \eps^{-1} \log^2 \eps^{-1}$ then $k \ge \sqrt{n}$. Hence, the probability that $0.5 \sqrt{k} \le a_{s, t} \le 2 \sqrt{k}$ is at least $1-e^{-\Omega(k)} \ge 1-e^{-\Omega(\sqrt{n})}$ by standard concentration inequalities. Taking a union bound over at most $n^2$ choices of $s$ and $t$, Part 2 holds with probability at least $0.99$.


    Since each $v_{j, t}$ has been normalized as a unit vector, we can write $V^{(t)} = G^{(t)} \cdot \diag(a_t)^{-1}$, where $G^{(t)}$ has column $s$ as $a_{j, t} v_{j, t}$, and $a_t$ is the vector $\{a_{j, t}\}_{1 \le j \le s}$. Assuming Part 2, $\diag(a_t)^{-1}$ has all singular values in the range $\left[\frac{0.5}{\sqrt{k}}, \frac{2}{\sqrt{k}}\right].$ Also, we can project each column of $G^{(t)}$ onto the $\lambda_t$-eigenspace, and then each column will be a random Gaussian vector in $k$ dimensions. Assuming that $k \ge 100 s,$ it follows from Fact~\ref{fact:singular-values-of-Gaussian} that all of the singular values of $G^{(t)}$ must lie in the range $\left[0.5 \sqrt{k}, 2 \sqrt{k}\right]$, with probability at least $1-\frac{1}{n^2}$. Hence, with at least $0.98$ probability, $V^{(t)}$ has all singular values between $\frac{1}{4}$ and $4$, for all $t$.
\end{proof}

From now on, we assume the three events in \Cref{prop:concentration} all hold, and use no other properties about the Gaussian vectors $g_s$. Due to \Cref{lem:alignment-to-lra}, it suffices to show the following lemma.

\begin{lemma} \label{lem:lb-against-krylov-main}
    Suppose the three events in \Cref{prop:concentration} hold. Then, for any unit vector $v \in \Span(\calK)$, $|\langle v, u_1 \rangle| < \sqrt{\frac{\eps}{10}}$.
\end{lemma}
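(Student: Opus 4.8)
\textbf{Proof proposal for Lemma~\ref{lem:lb-against-krylov-main}.}
The plan is to mimic the single-vector argument from \Cref{subsec:single-krylov-lb}, but handle the cancellation issue across the $s$ blocks using the singular value bound in Part 3 of \Cref{prop:concentration}. First I would observe that any vector $v \in \Span(\calK)$ can be written as $v = \sum_{j=1}^s p_j(A) g_j$ for polynomials $p_j$ of degree at most $r$. Expanding each $g_j$ in the eigenbasis grouped by eigenvalue as in \Cref{prop:concentration}, write $v = \left(\sum_j a_{j,1} p_j(\lambda_1)\right) u_1 + \sum_{t \in [2,r+2]} \left(\sum_j a_{j,t} p_j(\lambda_t)\right) v_t$ (up to the normalization of $v_t$; I will track this carefully). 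The coefficient of $u_1$ is $c_1 := \sum_j a_{j,1} p_j(\lambda_1)$, and I want to bound $|c_1|$ relative to $\|v\|_2$.

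The key steps, in order: (1) Bound $|c_1|$ from above by $|p(\lambda_1)|$ in an appropriate sense, where $p$ is a ``combined'' degree-$r$ polynomial. Concretely, using Part 1 ($a_{j,1}\le 5\sqrt{\log n}$) and Cauchy–Schwarz, $|c_1| \le 5\sqrt{\log n}\,\sqrt{s}\,\big(\sum_j p_j(\lambda_1)^2\big)^{1/2}$. (2) Bound $\|v\|_2^2$ from below: by orthogonality across distinct eigenspaces, $\|v\|_2^2 \ge \sum_{t\in[2,r+2]} \big\|\sum_j a_{j,t} p_j(\lambda_t) v_t / \|v_t\|\big\|^2$; here is where Part 3 enters — writing the vector of coefficients $\{a_{j,t} p_j(\lambda_t)\}_j$ and using that the normalized block matrix $V^{(t)}$ has least singular value $\ge 1/4$, together with Part 2 ($a_{j,t} \ge 0.5\sqrt{k}$), I get $\|v\|_2^2 \ge \Omega(1) \cdot \max_{t\in[2,r+2]} \sum_j p_j(\lambda_t)^2$. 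This is the step that replaces the ``no anti-correlation'' intuition from the technical overview: the singular value lower bound is exactly the statement that the block directions cannot catastrophically cancel. (3) Apply Chebyshev extremality (Fact~\ref{fact:cheb-poly-extremal} / Corollary~\ref{cor:polynomial-growth}) to each $p_j$ individually: $|p_j(1+2\eps)| \le T_r(1+2\eps) \cdot \max_{i\in[2,r+2]} |p_j(\lambda_i)| \le n^{0.01} \max_i |p_j(\lambda_i)|$ since $r = c\log n/\sqrt\eps$. Squaring and summing over $j$: $\sum_j p_j(\lambda_1)^2 \le n^{0.02} \sum_j \max_i p_j(\lambda_i)^2 \le n^{0.02} \sum_j \sum_i p_j(\lambda_i)^2 \le n^{0.02}\,(r+1)\max_i \sum_j p_j(\lambda_i)^2$.

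Combining, $c_1^2 \le 25\log n \cdot s \cdot n^{0.02}(r+1) \cdot \max_{t} \sum_j p_j(\lambda_t)^2 \le O(\log^2 n \cdot \sqrt\eps^{-1} \cdot n^{0.02}) \cdot \|v\|_2^2 / \Omega(1)$. Since $\|v\|_2 = 1$ and $n \ge \Omega(\eps^{-2.01})$, the factor $n^{0.02} \cdot \mathrm{poly}(\log n, \eps^{-1})$ is dominated by a suitable negative power of $n$ coming from $1/n^{?}$... wait — I should be more careful: I actually need $c_1^2 \le \eps/10$, and I have $c_1^2 \le \mathrm{poly}(\log n)\cdot n^{0.02} \cdot \eps^{-1/2} \cdot O(1)$, which is \emph{not} small. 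The resolution, as in \Cref{lem:krylov-alignment}, is step (4): I must gain a factor of roughly $1/k \approx r/n \approx \sqrt\eps/(n\log n)$ from comparing $a_{j,1} = \Theta(1)$ against $a_{j,t} = \Theta(\sqrt k)$ — i.e.\ in step (2) the lower bound should really read $\|v\|_2^2 \ge \Omega(k) \cdot \max_t \sum_j (p_j(\lambda_t)/\sqrt{k}\cdot a_{j,t})^2 \cdot (\text{but } v_t \text{ normalized})$; tracking the $\sqrt k$ from $a_{j,t}$ versus the $O(\sqrt{\log n})$ from $a_{j,1}$ gives the extra $k/\log n = \Omega(n/(\log^2 n \sqrt\eps))$ factor, and then $c_1^2 \le n^{0.02}\cdot\mathrm{poly}(\log n,\eps^{-1})/k < \eps/10$ using $n\ge\eps^{-2.01}$. \textbf{The main obstacle} is precisely this bookkeeping across the three scales — the Gaussian coefficients $a_{j,1}$ vs $a_{j,t}$, the block-matrix conditioning from Part 3, and the Chebyshev blow-up $n^{0.01}$ — making sure they combine so that the $n^{0.02}$ loss from two applications of extremality (one per "side" of the Cauchy–Schwarz) is beaten by the $1/k$ gain, which forces the slightly stronger hypothesis $n = \Omega(\eps^{-2.01})$ rather than $\Omega(\eps^{-2})$. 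Everything else is a direct lift of the single-vector proof.
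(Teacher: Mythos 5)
Your overall route is the same as the paper's: decompose $v=\sum_j p_j(A)g_j$, group by distinct eigenvalue, use Part 1 plus Cauchy--Schwarz to bound the $u_1$-coefficient by $5\sqrt{\log n}\sqrt{s\sum_j p_j(\lambda_1)^2}$, use Parts 2 and 3 to lower bound the norm contribution of the $\lambda_t$-eigenspaces by $\Omega(k)\sum_j p_j(\lambda_t)^2$ (this is indeed exactly how the paper formalizes the ``no catastrophic cancellation'' intuition), and apply Chebyshev extremality (\Cref{cor:polynomial-growth}) to each $p_j$ separately. (A minor notational slip: each $g_j$ has its own within-eigenspace direction $v_{j,t}$, so the grouped term is $x_t=\sum_j a_{j,t}p_j(\lambda_t)v_{j,t}$ rather than a common $v_t$; your step (2) shows you understand this, since Part 3 is precisely about the matrix of the $v_{j,t}$'s.)

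There is, however, a concrete quantitative gap in your bookkeeping that makes the final inequality fail under the stated hypothesis $n=\Omega(\eps^{-2.01})$. In step (2) you keep only $\|v\|_2^2\ge\Omega(k)\max_{t}\sum_j p_j(\lambda_t)^2$, and then in step (3) you must convert $\sum_j\max_t p_j(\lambda_t)^2$ into $\max_t\sum_j p_j(\lambda_t)^2$, paying an extra factor $(r+1)\approx \log n/\sqrt{\eps}$. Tracking your own factors, this yields $\langle v,u_1\rangle^2\lesssim \log^4 n\cdot n^{o(1)}/(n\,\eps^{1.5})$, and at $n=\Theta(\eps^{-2.01})$ this is roughly $\eps^{0.51}\cdot\mathrm{polylog}$, which is far larger than $\eps/10$; your chain would need something like $n\gtrsim \eps^{-2.5}$. (The $n^{0.02}$ is not the problem -- that is tunable via the constant $c$ in $r$ -- but the $(r+1)$ factor is not.) The fix is exactly the paper's move: since the $x_t$ lie in distinct eigenspaces they are orthogonal, so $1=\|v\|_2^2\ge\sum_{t\ge 2}\|x_t\|_2^2\ge \frac{k}{64}\sum_{t\ge2}\sum_j p_j(\lambda_t)^2\ge\frac{k}{64}\sum_j\max_{t\ge2} p_j(\lambda_t)^2$, i.e.\ keep the \emph{sum} over $t$ rather than the max. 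Then Chebyshev extremality per $j$ gives $\sum_j p_j(\lambda_1)^2\le O(n^{o(1)}/k)$ with no $(r+1)$ loss, and the combination $\langle v,u_1\rangle^2\lesssim \log n\cdot s\cdot n^{o(1)}/k\approx\log^3 n\cdot n^{o(1)}/(n\eps)$ does satisfy the required bound for $n\ge\Omega(\eps^{-2.01})$. With that one repair your argument coincides with the paper's proof.
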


\begin{proof}
We compute vectors $A^\ell g_j = \sum_{t = 1}^{r+2} \lambda_t^\ell \cdot a_{j, t} v_{j, t}$ for $j \le s, \ell  \le r+2$. Now, any $v \in \Span(\calK)$ must be a linear combination of $A^\ell g_j$ across $0 \le \ell \le r$ and $1 \le j \le s$. This means any such vector must be expressible in the form
\[\sum_{j = 1}^{s} \sum_{\ell = 0}^{r} p_{j, \ell } \cdot \sum_{t = 1}^{r+2} \lambda_t^\ell \cdot a_{j, t} v_{j, t} = \sum_{t = 1}^{r+2} \sum_{j = 1}^{s} a_{j, t} v_{j, t} \cdot \sum_{\ell = 0}^{r} p_{j, \ell} \lambda_t^\ell = \sum_{t = 1}^{r+2} \underbrace{\sum_{j = 1}^{s} a_{j, t} \phi_j(\lambda_t) v_{j, t}}_{x_t},\]
for some polynomials $\phi_1, \dots, \phi_s$ each of degree at most $r$. Also, note that $v_{j, 1} = \pm u_1$ for all $j$, as the eigenspace for $\lambda_1$ has dimension $1$.

First, consider some fixed $t \ge 2$. We will bound the norm of $x_t := \sum_{j = 1}^{s} a_{s, t} \phi_j(\lambda_t) v_{j, t}$.
Recalling that $V^{(t)} \in \BR^{n \times s}$ is the matrix with column $j$ equal to $v_{j, t}$, Part 3 of \Cref{prop:concentration} tells us that all singular values of $V^{(t)}$ are between $\frac{1}{4}$ and $4$. We can write $x_t = V^{(t)} \cdot w_t,$ where $w_t \in \BR^s$ has $j^\text{th}$ coordinate equal to $a_{j, t} \phi_j(\lambda_t)$. Therefore,
\begin{equation*}
    \|x_t\|_2^2 \ge \frac{1}{16} \cdot \sum_{j = 1}^{s} (a_{j, t} \phi_j(\lambda_t))^2 \ge \frac{k}{64} \cdot \sum_{j = 1}^{s} \phi_j(\lambda_t)^2.
\end{equation*}
The first inequality follows by Part 3 of \Cref{prop:concentration}, and the second inequality follows by Part 2 of \Cref{prop:concentration}.

In addition, we know that each $X_t$ is an eigenvector of eigenvalue $\lambda_t$, as every $v_{j, t}$ is. Therefore, they are orthogonal. This means that if $v = \sum_{t=1}^{r+2} X_t$ has norm $1$, then $\sum_{t=1}^{r+2} \|X_t\|_2^2 = 1$. However, note that
\begin{equation} \label{eq:Xnorm-bound}
    \sum_{t=2}^{r+2} \|X_t\|_2^2 \ge \frac{k}{64} \cdot \sum_{t=2}^{r+2} \sum_{j = 1}^{s} \phi_j(\lambda_t)^2 \ge \frac{k}{64} \cdot \sum_{j = 1}^{s} \max_{2 \le t \le r+2} \phi_j(\lambda_t)^2.
\end{equation}
    Next, we apply \Cref{cor:polynomial-growth} on $\phi_j$, which has degree at most $r$, to say that \eqref{eq:Xnorm-bound} is at least
\begin{align*}
    \frac{j}{64} \cdot \sum_{j = 1}^{s} \max_{0 \le i \le r} \phi_j\Big(\cos\Big(\frac{i}{r} \cdot \pi\Big)\Big)^2 &\ge \frac{k}{64} \cdot \sum_{j = 1}^{s} \phi_j(1+2 \eps)^2 \cdot e^{-2C \cdot \min(\sqrt{2 \eps} \cdot r, 2 \eps \cdot r^2)} \\
    &\ge \frac{j}{64} \cdot e^{-4C \cdot \sqrt{\eps} \cdot r} \cdot \sum_{j = 1}^{s} \phi_j(\lambda_1)^2.
\end{align*}
Observe, $\sum_{t=1}^{r+2} \|x_t\|_2^2 = 1$, which implies 
\begin{equation} \label{eq:square_p1_sum}
    \sum_{j=1}^s \phi_j(\lambda_1)^2 \le \frac{64}{k} \cdot e^{4C \cdot \sqrt{\eps} \cdot  r }.
\end{equation}

    Conversely, since $v_{s, 1} = \pm u_1$ for all $s$, $x_1 := \sum_{j = 1}^{s} a_{j, 1} \phi_j(\lambda_1) v_{s, 1}$ has norm at most $5 \sqrt{\log n} \cdot \sum_{j = 1}^{s} |\phi_j(\lambda_1)|$, by Triangle inequality and Part 1 of \Cref{prop:concentration}.
    Therefore,
\begin{align*}
    |\langle v, u_1 \rangle| = \|X_1\|_2 &\le 5 \sqrt{\log n} \cdot \sum_{j=1}^s |\phi_j(\lambda_1)| \\
    &\le 5 \sqrt{\log n} \cdot \sqrt{s \cdot \sum_{j=1}^s \phi_j(\lambda_1)^2} \\
    &\le 40 \sqrt{\frac{\log n \cdot s}{k}} \cdot e^{2C \cdot \sqrt{\eps} \cdot r}.
\end{align*}
    Above, the second inequality follows by Cauchy-Schwarz, and the final inequality follows by \eqref{eq:square_p1_sum}.

    Since $r = \frac{c \log n}{\sqrt{\eps}}$ for a sufficiently small constant $c$, $e^{2 C \cdot \sqrt{\eps} \cdot r} = e^{2cC \log n} \le n^{0.001}$. Also, assuming $n$ is at least a sufficiently large constant, $k \ge \frac{n}{2r}$, so
\begin{align*}
    |\langle v, u_1 \rangle| &\le 40 \cdot \sqrt{\frac{\log n \cdot s \cdot 2r }{n}} \cdot n^{0.001} \\
    &\le 80 \cdot \frac{(\log n)^{1/2}}{n^{0.499}} \cdot \frac{c \log n}{\eps^{1/2}} \\
    &\le \frac{1}{10 \cdot n^{0.498} \cdot \eps^{1/2}}.
\end{align*}
So, if $n \ge \Omega(\eps^{-2.01})$, this is at most $\sqrt{\frac{\eps}{10}}$, as desired.
\end{proof}

\begin{proof}[Proof of \Cref{thm:lb-against-block-krylov-spectral-lra}]
To see why \Cref{lem:lb-against-krylov-main} implies \Cref{thm:lb-against-block-krylov-spectral-lra}, suppose there existed $v \in \Span(\calK)$ such that $\|A(I-vv^\top)\|_{op} \le (1+\eps) \min_{\|u\|_2 = 1} \|A(I-uu^\top)\|_{op}$. Then, by \Cref{lem:alignment-to-lra}, either $\langle v, u_1 \rangle^2 > \frac{\eps}{2}$ or $\langle Av, u_1 \rangle^2 > \frac{\eps}{2}$. However, $\|A\|_{\textrm{op}} \le 1+2\eps$, which means that either $\langle v, u_1 \rangle^2 > \frac{\eps}{2}$ or $\langle \frac{Av}{\|Av\|_2}, u_1 \rangle^2 > \frac{\eps}{10}$. By increasing the number of power method iterations by $1$, we may assume both $v$ and $\frac{Av}{\|Av\|_2}$ are in $\calK$. Hence, we have obtained a contradiction with \Cref{lem:lb-against-krylov-main}, which shows that \Cref{lem:lb-against-krylov-main} implies \Cref{thm:lb-against-block-krylov-spectral-lra}.
\end{proof}
\section{Lower bound against Arbitrary Adaptive algorithms}
\label{sec:lowerbounds-against-adaptive-algorithms}
In this section, we show that our lower bound against block Krylov algorithms extends to a lower bound against arbitrary adaptive algorithms.
The main technical result we will utilize is a general reduction theorem which shows that for a wide class of matrix-vector problems, it suffices to prove a lower bound against block Krylov algorithms. This will lead to an optimal lower bound for rank-1 spectral low-rank approximation, against an arbitrary adaptive algorithm.

\subsection{Reducing Arbitrary Adaptive Queries to Block Krylov}

Let $A = U^\top D U$, where $D$ is a diagonal matrix, $U$ is a Haar-random orthogonal matrix (see Definition~\ref{def:haar-random}), and $U$ and $D$ are independent. We consider the following model, which is a strengthening of the matrix-vector product model:

\begin{definition}[Extended Oracle Model]
\label{def:extended-oracle-model}
Given $K \in \mathbb{N}$, for all  $k \in [K]$, the algorithm chooses a new query point $v_k$, and receives the information $\{A^i v_j\}_{(i, j) \in H_k},$ where $H_k := \{(i, j): i+j \le k+1, i \ge 0, 1 \le j \le k\}$ is a set of ordered pairs of nonnegative integers. We use the following notation $\{A^i v_j\}_{S}$ for any set $S$ to denote $\{A^i v_j\}_{(i, j) \in S}$.  
\end{definition}

Note that this is clearly a stronger oracle model than the usual matrix-vector oracle, so a lower bound against algorithms in the extended oracle model implies a lower bound against algorithms in the original matrix-vector model.

\begin{definition}[Adaptive Deterministic Algorithm]
\label{def:adaptive-deterministic-algorithm}
An \textit{adaptive deterministic algorithm} $\calA$ that makes $K$ extended oracle queries (see Def~\ref{def:extended-oracle-model}) is given by a deterministic collection of functions $v_1,v_2\Paren{\cdot},\dotsc,v_{K}\Paren{\cdot}$, where $v_1$ is constant and each $v_k \Paren{\cdot}$ is a function of $\frac{k\,(k+1)}{2}-1$ inputs.
This corresponds to a sequence of queries where the $k$-th query $v_k(\{A^i v_j\}_{H_{k-1}})$ is chosen adaptively based on the information available to the algorithm at the start of iteration $k$.
(Note that $v_1$ has no inputs.)
When the choice of the inputs is clear from context, we may simply write $v_k = v_k(\{A^i v_j\}_{H_{k-1}})$ for simplicity.
\end{definition}

The main result we require, which is closely based on a recent result of~\cite{samplingLB}, is the following.

\begin{lemma}[reduction to block Krylov]\label{lem:chen_block_krylov}
    Let $v_1, v_2(\cdot), \dotsc, v_{K}(\cdot)$ be an adaptive deterministic algorithm that makes $K$ queries, where $K^2 < d$. Let $\valg_1, \valg_2, \dotsc, \valg_{K}$ be recursively defined as follows: $\valg_1 = v_1$, and $\valg_k = v_k(\{A^i \valg_j\}_{H_{k-1}})$ for $k \ge 2$. Let $z_1, \dotsc, z_{K}$ be i.i.d.\ standard Gaussian vectors. Then, from the collection $\{A^i z_j\}_{i+j \le K}$, we can construct a set of unit vectors $\vsim_1,\vsim_2, \dotsc, \vsim_{K}$, and a set of rotation matrices $\Usim_1, \Usim_2, \dotsc, \Usim_{K}$, with the following properties.
\begin{enumerate}
    \item $\vsim_k$ and $\Usim_k$ only depend on $\{A^i z_j\}_{i+j \le k}$. Moreover, $\vsim_k \in \Span(\{A^i z_j\}_{i+j \le k}).$
    \item $\big((\Usim_{1:K})^\top A (\Usim_{1:K}), \{(\Usim_{1:K})^\top A^i \vsim_j\}_{H_{K}}\big) \eqdist \big(A, \{ A^i \valg_j \}_{H_{K}}\big)$, where $\Usim_{1:K} \deq \Usim_1 \dotsm \Usim_{K}$. Here, the equivalence in distribution is over the randomness of $A$ and $\{z_i\}_{i \le K}$. 
\end{enumerate}
\end{lemma}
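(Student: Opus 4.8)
My plan is to construct the simulated vectors $\vsim_k$ and rotations $\Usim_k$ inductively, maintaining at each step $k$ the invariant that the joint distribution of the simulated transcript $\big((\Usim_{1:k})^\top A (\Usim_{1:k}), \{(\Usim_{1:k})^\top A^i \vsim_j\}_{H_k}\big)$ matches the distribution of the true transcript $\big(A, \{A^i \valg_j\}_{H_k}\big)$. The key structural fact I would exploit is the rotational invariance of the Haar-random $U$ in $A = U^\top D U$: conditioned on the responses seen so far (which by induction lie in a $(\le \binom{k}{2})$-dimensional subspace), the conditional law of $A$ restricted to the orthogonal complement is again of the form (rotation)$^\top D'$(rotation) with a fresh Haar-random rotation on the complement, where $D'$ is the appropriate conditional spectral data. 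This is exactly the phenomenon invoked in the block-Krylov analysis of Section~\ref{sec:tech-overview} and in~\cite{samplingLB}; the main work is to package it into a clean conditioning/coupling argument for \emph{adaptive} queries rather than a fixed query.

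The concrete construction at step $k$: given $\{A^i z_j\}_{i+j \le k}$, I would first form the span $W_k$ of all responses generated so far; note $\dim W_k = O(k^2) < d$, so $W_k$ is a proper subspace. The new Gaussian $z_k$ (together with the new powers it generates) contributes a direction orthogonal to $W_{k-1}$ whose distribution, after projecting off $W_{k-1}$ and normalizing, is uniform on the sphere of the complement — this is precisely the distribution that an adaptive algorithm's ``fresh exploration'' direction has in the true process, as described in the overview (the decomposition $v_i = v_i^{\|} + v_i^{\bot}$ with $v_i^{\bot}$ uniform on the remaining subspace). I would then define $\vsim_k$ to be the function $v_k(\cdot)$ applied to the \emph{simulated} responses $\{(\Usim_{1:k-1})^\top A^i \vsim_j\}_{H_{k-1}}$, pulled back appropriately, and define $\Usim_k$ to be the rotation that aligns the newly revealed orthogonal direction of the real process with the corresponding direction in the simulation (acting as the identity on the already-fixed subspace). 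Property~1 (measurability: $\vsim_k, \Usim_k$ depend only on $\{A^i z_j\}_{i+j\le k}$, and $\vsim_k \in \Span(\{A^i z_j\}_{i+j\le k})$) is then immediate from the construction, since everything is built from the vectors revealed through step $k$. Property~2 follows from the inductive invariant by checking that the one-step update — appending query $k$, revealing $\{A^i \vsim_j\}$ for the new pairs $(i,j) \in H_k \setminus H_{k-1}$, and composing with $\Usim_k$ — preserves the distributional match; this uses that $v_k(\cdot)$ is a deterministic function (so the new query is determined by the matched transcript) and that the conditional law of the not-yet-revealed part of $A$ is rotationally invariant.

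The main obstacle I anticipate is the bookkeeping around the \emph{triangular} index set $H_k = \{(i,j) : i+j \le k+1,\, i \ge 0,\, 1 \le j \le k\}$: because query $j$ reveals more and more powers $A^i v_j$ as $k$ grows, the span $W_k$ grows by more than one dimension per step, and I must verify that at step $k$ the total revealed dimension is still $< d$ (which is where the hypothesis $K^2 < d$ enters, since $|H_K| = \Theta(K^2)$) and, more delicately, that revealing $A^i \vsim_j$ for a \emph{previously-queried} $j$ at the new threshold $i + j = k+1$ is consistent — i.e., that $\Usim_k$ can be chosen to simultaneously align all these newly-revealed directions with the true process. The resolution is that all of these newly revealed vectors, in the true process, lie in the span of $A$ applied to the already-revealed subspace plus one fresh Haar direction (the one contributed by $z_k$), so a single rotation on the complement suffices; making this precise requires carefully identifying which part of each new response is ``already determined'' by the conditioning and which part is the genuinely new randomness. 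I would handle the case $K^2 < d$ throughout so that the complement is always nonempty and the Haar conditioning argument goes through verbatim; the reader can consult~\cite{samplingLB} and Appendix~\ref{appendix:lifting} for the detailed verification of the one-step coupling.
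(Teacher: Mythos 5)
Your plan follows the same architecture as the paper's proof: simulate the adaptive transcript inductively, using a fresh direction extracted from each $z_k$ (uniform on the sphere of the orthogonal complement of the revealed span), an alignment rotation acting as the identity on the revealed subspace, and a conditioning argument based on the Haar-invariance of $U$ in $A=U^\top D U$. However, there is a genuine gap: the entire technical content of the lemma is the conditioning step, and you assert it rather than prove it. Your stated form of the posterior --- that conditioned on the revealed responses, $A$ restricted to the complement is again ``$(\text{rotation})^\top D'(\text{rotation})$ with conditional spectral data $D'$'' --- is not correct as stated and is not what is needed; the revealed responses $A^i v_j$ entangle the action of $A$ across the revealed subspace and its complement, so the posterior does not factor that way. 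What the argument actually requires, and what the paper proves, is the weaker but subtler statement that the joint law of (queries, $U$) is invariant under right-multiplication of $U$ by a rotation fixing the revealed vectors, \emph{even when that rotation is a deterministic function of the revealed data} (it is used with $U_{k+1}^{\msf{alg}}$, which depends on the transcript). Plain rotational invariance of the Haar measure does not give this directly; the paper's proof is an induction over the revelation order (with separate treatment of the ``triangular'' steps where a new query is issued versus the steps where only a new power $A^i v_j$ is revealed), first for a Haar-random rotation fixing $W_m$ and then upgraded to a data-dependent fixed rotation. You acknowledge that ``making this precise requires carefully identifying which part of each new response is already determined,'' but you resolve it only by citing~\cite{samplingLB} and the appendix --- which is circular here, since that is precisely the proof being reproduced.

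Two smaller issues. First, your two descriptions of $\vsim_k$ are in tension: for Property~1 to hold, $\vsim_k$ must be \emph{defined} as the normalized component of $z_k$ orthogonal to $\Span(\{A^i z_j\}_{H_{k-1}})$; its relation to the simulated query ($\vsim_k = \Usim_{1:k}\vssim_k$) is a \emph{consequence} of how $\Usim_k$ is chosen, not a definition, and defining $\vsim_k$ as a pulled-back simulated query would not obviously place it in the Krylov span. Second, for the alignment rotation to exist (fixing all revealed vectors while mapping the fresh direction to the simulated query), one needs the reduction that each adaptive query may be taken orthogonal to the span of previously revealed responses; you gesture at the $v^{\|}+v^{\perp}$ decomposition but do not carry out this WLOG step, which the paper proves separately before the main induction.
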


Importantly, Property 2 of \Cref{lem:chen_block_krylov} roughly says that the knowledge of $A^i \tilde{v}_j$ is sufficient to reconstruct the distribution of the adaptive algorithm's queries and responses.

\begin{remark}
    While the reduction lemma is written against \emph{deterministic} algorithms, it will turn out to be quite simple to remove this assumption, as the matrix $A$ has already been randomized.
\end{remark}

We defer the proof of \Cref{lem:chen_block_krylov} to \Cref{appendix:lifting}.

\subsection{General Matrix-Vector Lower Bound}

We are now ready to combine the information-theoretic lower bound against Block Krylov algorithms and our lifting statement that shows Block Krylov can simulate an Adaptive  Detereministic Algorithm (see Def~\ref{def:adaptive-deterministic-algorithm}) to obtain an information-theoretic lower bound on the number of matrix-vector products.  

\begin{theorem}[Matrix-Vector Lower Bound for Spectral LRA]
    Fix $D$ as in Definition \ref{def:hard-distribution-chebyshev-polynomial}, and suppose $0 < \eps < 0.5$. Any (potentially randomized) adaptive algorithm on the distribution $A = U^\top D U$ cannot output a unit vector $v$ such that $\langle v, u_1(A) \rangle \ge \sqrt{\frac{\eps}{10}},$ with more than $0.2$ probability over the randomness of both $A$ and the algorithm.
    Here, $u_1(\cdot)$ represents the maximum eigenvector of a symmetric matrix.
    Hence, by \Cref{lem:alignment-to-lra}, outputting a $(1+\eps)$-approximate rank-1 approximation to $A$ with more than $0.2$ probability needs $\Omega\left(\frac{\log n}{\sqrt{\eps}}\right)$ adaptive queries.
\end{theorem}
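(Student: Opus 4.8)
The plan is to combine the two ingredients developed in the excerpt: the information-theoretic lower bound against Block Krylov methods (\Cref{thm:lb-against-block-krylov-spectral-lra}, more precisely \Cref{lem:lb-against-krylov-main}) and the lifting/reduction statement (\Cref{lem:chen_block_krylov}), which shows that the joint distribution of queries and responses of any adaptive deterministic algorithm can be perfectly simulated from the collection $\{A^i z_j\}_{i+j \le K}$ for i.i.d.\ Gaussian vectors $z_1,\dots,z_K$. First I would reduce from randomized to deterministic algorithms: since $A = U^\top D U$ is already random and the desideratum $\langle v, u_1(A)\rangle \ge \sqrt{\eps/10}$ is a statement only about the algorithm's output and $A$, we can condition on the algorithm's internal randomness and it suffices to rule out every deterministic adaptive algorithm making $K = c\log n/\sqrt{\eps}$ queries (for a sufficiently small constant $c$), provided $K^2 < d = n$, which holds when $n \ge \Omega(\eps^{-2.01})$ and $c$ is small.

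Next, fix such a deterministic algorithm $\calA$ given by functions $v_1, v_2(\cdot), \dots, v_K(\cdot)$ and let $v$ be its (deterministic) output, a function of $\{A^i v_j\}_{H_K}$. By \Cref{lem:chen_block_krylov}, there exist unit vectors $\vsim_1,\dots,\vsim_K$ and rotations $\Usim_1,\dots,\Usim_K$ built only from $\{A^i z_j\}_{i+j \le K}$ such that
\begin{equation*}
    \big((\Usim_{1:K})^\top A \Usim_{1:K},\ \{(\Usim_{1:K})^\top A^i \vsim_j\}_{H_K}\big) \eqdist \big(A,\ \{A^i \valg_j\}_{H_K}\big),
\end{equation*}
where $\valg_j$ are the queries $\calA$ actually makes on input $A$. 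Applying the same output function to the simulated responses produces a unit vector $\vsimout \in \Span(\{A^i z_j\}_{i+j \le K}) \subseteq \Span(\calK)$, where $\calK$ is the Block Krylov space with block size $s$ and depth $r$ equal to (a constant times) $K$; and the pair $\big((\Usim_{1:K})^\top A \Usim_{1:K}, \vsimout\big)$ has the same distribution as $(A, v)$. Since $u_1(\cdot)$ is equivariant under rotation, $\langle v, u_1(A)\rangle$ has the same distribution as $\langle \vsimout, u_1((\Usim_{1:K})^\top A \Usim_{1:K})\rangle = \langle \Usim_{1:K}\vsimout, u_1(A)\rangle$; and $\Usim_{1:K}\vsimout$ is a unit vector lying in $\Usim_{1:K}\Span(\{A^i z_j\}_{i+j\le K})$. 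The one subtlety here — and the step I expect to be the main technical point to get right — is arguing that this rotated simulated vector lands (in distribution) inside a genuine Block Krylov subspace of the form in \Cref{algo:block-krylov-generalized} generated by i.i.d.\ Gaussian starting vectors, so that \Cref{lem:lb-against-krylov-main} applies; this requires tracking how the rotations $\Usim_i$ interact with the Krylov structure, which is exactly what property 2 of \Cref{lem:chen_block_krylov} is designed to deliver.

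Granting that, \Cref{lem:lb-against-krylov-main} (whose hypotheses, the three events of \Cref{prop:concentration}, hold with probability at least $0.9$ over the Gaussians when $n \ge \Omega(\eps^{-2.01})$) gives that \emph{every} unit vector in the relevant Krylov span has squared correlation with $u_1$ strictly below $\eps/10$. Hence $\langle v, u_1(A)\rangle \ge \sqrt{\eps/10}$ can happen only on the failure event of \Cref{prop:concentration}, so its probability is at most $0.1 < 0.2$; undoing the conditioning on internal randomness, the same bound holds for randomized algorithms. Finally, to conclude the LRA statement, suppose some algorithm with $o(\log n/\sqrt{\eps})$ queries outputs a $(1+\eps)$-approximate rank-1 approximation $Avv^\top$ with probability $> 0.2$. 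By increasing the query count by one (absorbed into the constant), we may assume both $v$ and $Av/\norm{Av}_2$ are available; by \Cref{lem:alignment-to-lra}, a good approximation forces $\langle v, u_1\rangle^2 > \eps/2$ or $\langle Av, u_1\rangle^2 > \eps/2$, and since $\norm{A}_{\op} \le 1+2\eps$ the latter implies $\langle Av/\norm{Av}_2, u_1\rangle^2 > \eps/10$. Either way the algorithm would produce a unit vector correlated with $u_1$ beyond $\sqrt{\eps/10}$ with probability $> 0.2$, contradicting the preceding paragraph. Therefore $\Omega(\log n/\sqrt{\eps})$ adaptive queries are necessary.
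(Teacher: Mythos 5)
Your overall strategy---combine the lifting lemma (\Cref{lem:chen_block_krylov}) with the Block Krylov bound (\Cref{lem:lb-against-krylov-main}) and finish with \Cref{lem:alignment-to-lra}---is the same as the paper's, but the step you flag as a ``subtlety'' and then grant is exactly where the argument has a genuine gap, and it hides a second unjustified claim. First, the output of an adaptive algorithm is an arbitrary function of its observations; nothing forces it to lie in the span of $\{A^i \valg_j\}_{H_K}$, and \Cref{lem:lb-against-krylov-main} only constrains unit vectors \emph{inside} a Krylov span. The paper handles this by spending one extra query: it lets $v_{K+1}(\cdot)$ be the would-be output, so the output vector is contained in $\Span(\{A^i\valg_j\}_{i+j\le K+1})$, and then runs the whole argument with $K+1$ in place of $K$ (absorbed into the constant). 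Your proposal instead asserts that applying the output map to the simulated responses yields a vector in $\Span(\{A^i z_j\}_{i+j\le K})$, which is false for a general output rule; without the extra-query trick (or a separate posterior rotational-symmetry argument) you cannot rule out that components outside the explored subspace help.

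Second, even for outputs that are linear combinations of the responses, your rotation bookkeeping is inverted. The simulated responses are $\{(\Usim_{1:K})^\top A^i\vsim_j\}_{H_K}$, so the simulated output lies in $(\Usim_{1:K})^\top \Span(\{A^i z_j\}_{i+j\le K})$, not in $\Span(\{A^i z_j\}_{i+j\le K})$ itself; since $u_1\bigl((\Usim_{1:K})^\top A \Usim_{1:K}\bigr) = (\Usim_{1:K})^\top u_1(A)$, multiplying the correlated simulated vector by $\Usim_{1:K}$ lands exactly in the genuine Krylov span with the same correlation against $u_1(A)$, which is the desired contradiction with \Cref{lem:lb-against-krylov-main}. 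In your writeup the rotated vector instead ends up in $\Usim_{1:K}\Span(\{A^i z_j\}_{i+j\le K})$, and the transfer to a genuine Block Krylov subspace---the heart of the deduction---is left unproved (``Granting that\ldots''). Once these two points are repaired, the rest of your argument (randomized-to-deterministic via independence of the seed, and the reduction from $(1+\eps)$-LRA to correlation with $u_1$) matches the paper; note only that the paper avoids your extra $Av/\|Av\|_2$ step by using $\Iprod{u_1, Av}^2=(1+2\eps)^2\Iprod{u_1,v}^2$, which collapses both cases of \Cref{lem:alignment-to-lra} to $\Iprod{u_1,v}^2\ge \eps/10$ directly.
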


\begin{proof}
    First we assume that the algorithm is deterministic, so its behavior is characterized by functions $v_1, v_2(\cdot), \dotsc, v_K(\cdot)$, as in Lemma~\ref{lem:chen_block_krylov}.
    Using one additional query $v_{K+1}(\cdot)$, the algorithm can ensure that some linear combination of $v_{K+1}$ and $\{A^i v_j\}_{H_K}$ contains such a vector $v$ with at least $0.2$ probability. So, our goal is to show that $\Span(\{A^i \valg_j\}_{i+j \le K+1})$ cannot contain such a vector $v$ with high probability.

    By Lemma \ref{lem:chen_block_krylov}, a block Krylov algorithm that receives $\{A^i z_j\}_{i+j \le K+1}$ can generate $\tilde{v}_1, \dots, \tilde{v}_{K+1},$ where each $\tilde{v}_k \in \Span(\{A^i z_j\}_{i+j \le k})$, along with $\Usim_1, \dots, \Usim_{K+1}$, so that 
\[\big((\Usim_{1:(K+1)})^\top A (\Usim_{1:(K+1)}), \{(\Usim_{1:(K+1)})^\top A^i \vsim_j\}_{H_{K+1}}\big) \eqdist \big(A, \{ A^i \valg_j \}_{H_{K+1}}\big).\]
    Then, if with at least $0.2$ probability there exists a unit vector $v \in \Span(\{A^i \valg_j\}_{i+j \le K+1})$ with $\langle v, u_1(A) \rangle \ge \sqrt{\frac{\eps}{10}}$, then with the same probability there exists a unit vector 
\[v' \in \Span(\{U_{1:(K+1)}^\top A^i \tilde{v}_j\}_{i+j \le K+1}) = U_{1:(K+1)}^\top \cdot \Span(\{A^i z_j\}_{i+j \le K+1})\]
    with $\big\langle v', u_1\big((\Usim_{1:(K+1)})^\top A (\Usim_{1:(K+1)})\big)\big\rangle \ge \sqrt{\frac{\eps}{10}}$. However, the top eigenvector of $(\Usim_{1:(K+1)})^\top A (\Usim_{1:(K+1)})$ is $(\Usim_{1:(K+1)})^\top u_1(A)$, so this means $\langle v', (\Usim_{1:(K+1)})^\top u_1(A) \rangle \ge \sqrt{\frac{\eps}{10}}.$ In turn, this implies that the span of $\{A^i z_j\}_{i+j \le K+1}$ has a unit vector $v''$ with $\langle v'', u_1(A) \rangle \ge \sqrt{\frac{\eps}{10}}$ with at least $0.2$ probability, which contradicts \Cref{lem:lb-against-krylov-main}. Hence, no deterministic algorithm can succeed with more than $0.2$ probability.

    If the algorithm is randomized, then it uses a random seed $\xi$ that is independent of $A$. So conditional on the random seed, the algorithm will not be able to succeed with more than $0.2$ probability, which means the overall probability that the randomized algorithm successfully finds such a vector is also at most $0.2$.

    To finish, by \Cref{lem:alignment-to-lra}, a $(1+\eps)$-approximate rank-1 approximation $Avv^\top$ to $A$ requires either $\langle u_1, v \rangle^2 \ge \eps/2$ or $\langle u_1, A v \rangle^2 \ge \eps/2$. Moreover, $\langle u_1, A v \rangle^2 = \langle A u_1, v \rangle^2 = (1+2\eps)^2 \langle u_1, v \rangle^2$. Assuming $\eps \le 0.5$, this means we must have $\langle u_1, v \rangle^2 \ge \eps/10$. This concludes the proof.
\end{proof}

\section{Upper Bounds for Schatten-$p$ LRA}


In this section, we show that just a direct application of Krylov methods with a single starting vector suffices for all $p =O(1)$. The query complexity will only be $ O(\eps^{-1/3} \log(1/\eps))$ and does not scale with input size.  More formally, we prove the following theorem:

\begin{theorem}[Sharper Algorithm for LRA]\label{thm:rank_1_frobenius}
    Given $A \in \mathbb{R}^{n\times d}$, $\eps>0$ and $p = O(1)$, there exists an algorithm that uses $O\Paren{ \log(1/\eps)/\eps^{1/3}}$ matrix-vector products, and outputs a unit vector $w$ such that with probability at least $0.99$, 
    \begin{equation*}
        \norm{ A\Paren{I - ww^\top} }_{\calS_p}^p \leq \Paren{1+\eps} \min_{ \norm{u}=1} \norm{A\Paren{I - uu^\top} }_{\calS_p}^p.
    \end{equation*}
\end{theorem}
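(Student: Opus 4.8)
The plan is to reduce to the Frobenius case $p=2$ (general $p$ follows by the Schatten–Pythagorean inequality of~\cite{bakshi2022low}, i.e.\ Lemma~\ref{lem:schatten-pythagorean}) and to exhibit, for every spectrum, a \emph{single} degree-$O(\log(1/\eps)/\eps^{1/3})$ polynomial $p$ such that $v := p(A^\top A)A^\top g / \|p(A^\top A)A^\top g\|_2$ is a $(1+\eps)$-approximate rank-$1$ solution with constant probability; since Krylov iteration with $t$ iterations optimizes over the whole Krylov space, its output is at least as good as any such $v$ (this is where Lemma~\ref{lem:existing-of-good-vectors} is invoked). WLOG normalize $\sigma_1 = 1$ (the algorithm never needs to know this). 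I would split into two cases exactly as in the technical overview. \textbf{Case 1: $\sum_{i \le t}\sigma_i^2 \ge \eps^{-1/3}$ with $t = \Theta(\log(1/\eps)/\eps^{1/3})$.} Here $\mathrm{OPT}^2 = \sum_{i\ge 2}\sigma_i^2 \ge \eps^{-1/3}-1$, so it suffices to find $v$ with $\|A(I-vv^\top)\|_F^2 \le \mathrm{OPT}^2 + \eps^{2/3}$, an additive guarantee. Take $p$ to be the shifted-and-scaled Chebyshev polynomial $T_{t,\eps^{2/3}}$ of Corollary~\ref{cor:magic-polynomial}: it satisfies $p(1)=1$ and $|p(x)| \le e^{-c\sqrt{\eps^{2/3}}\cdot t} = e^{-c\eps^{1/3}t}$ on $[0,1-\eps^{2/3}]$, which is $\le \eps$ (say) for a suitable constant in $t$. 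Expanding $g$ in the right singular basis and writing $v = p(A^\top A)A^\top g/\|\cdot\|$, one shows $\langle v, u_1\rangle^2 \ge 1-O(\eps)$ after accounting for the Gaussian coefficients (using Facts~\ref{fact:deviation-of-Gaussian},~\ref{fact:Gaussian-norm-conc} to control $a_1$ from below and the tail mass from above — this is where the $\log(1/\eps)$, rather than $\log n$, enters, since we only need the bound to beat the \emph{additive} $\eps^{2/3}$ target rather than to win against all $n$ directions). Then Pythagoras gives $\|A(I-vv^\top)\|_F^2 = \|A\|_F^2 - \|Av\|_2^2 \le \mathrm{OPT}^2 + O(\sigma_2^2\cdot\eps) \le \mathrm{OPT}^2 + \eps^{2/3}$.

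\textbf{Case 2: $\sum_{i\le t}\sigma_i^2 < \eps^{-1/3}$.} Then the number of singular values in $[1/2,1-\eps]$ is $O(\eps^{-1/3})$. I would build a bespoke (non-Chebyshev) polynomial $p_0$ of degree $O(\eps^{-1/3})$ that vanishes at each such $\sigma_i$ (one linear factor $(x-\sigma_i^2)$ per medium singular value, renormalized so $p_0(1)=1$) — this mimics the effect of a large starting block without paying for it. The danger is that $p_0(\sigma_1)$ could be far smaller than $p_0(\sigma_j)$ for some $\sigma_j < 1/2$. To fix this, set $p_1(x) = x^q\,p_0(x)$ with $q = O(\log(1/\eps)/\eps^{1/3})$, so that the $x^q$ factor boosts $\sigma_1^2 = 1$ relative to any $\sigma_j^2 \le 1/4$ by a factor $\ge 4^{q}$, swamping the ratio $|p_0(\sigma_j)|/|p_0(\sigma_1)|$ (which is only $\mathrm{poly}(1/\eps)$-ish since $p_0$ has degree $O(\eps^{-1/3})$ and each factor is bounded on $[0,1]$). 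The total degree of $p_1$ is still $O(\log(1/\eps)/\eps^{1/3})$. One then checks that $v = p_1(A^\top A)A^\top g/\|\cdot\|$ has essentially all its mass on $u_1$ and on the $\sigma_i \in (1-\eps,1)$ block, so $\|A(I-vv^\top)\|_F^2 \le (1+\eps)\mathrm{OPT}^2$, and invoking Lemma~\ref{lem:existing-of-good-vectors} transfers this to the actual Krylov output.

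\textbf{Main obstacle.} The crux is Case 2: controlling the ratio $\max_{\sigma_j}|p_1(\sigma_j^2)| \big/ |p_1(\sigma_1^2)|$ over the small singular values $\sigma_j \le 1/2$ after $p_0$ has been fixed — one must verify that the $x^q$ multiplier's geometric gain dominates whatever (polynomially large) distortion $p_0$ introduces, and simultaneously that $p_0$'s zeros genuinely kill the medium-range contribution without $p_0$ itself blowing up near $x=1$. Making the counting argument (``$O(\eps^{-1/3})$ medium singular values'') interface cleanly with the degree budget, and handling the interaction with the random Gaussian coefficients in both cases with only a $\log(1/\eps)$ loss (not $\log n$), is the delicate part; everything else (the Pythagorean/Schatten-Pythagorean bookkeeping, the reduction from the Krylov output to the polynomial-induced vector) is routine given the cited lemmas.
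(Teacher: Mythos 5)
Your proposal follows essentially the same route as the paper: reduce via the correlated-vector/Schatten--Pythagorean lemma to exhibiting one good degree-$O(\log(1/\eps)/\eps^{1/3})$ polynomial applied to $AA^\top$ acting on $g$, use a shifted Chebyshev polynomial when the tail is large (so an additive $\eps^{2/3}$ guarantee suffices), and otherwise use a product of linear factors annihilating the $O(\eps^{-1/3})$ medium singular values multiplied by a high power $x^q$ --- this is exactly the paper's construction, whose formal proof merely refines your two cases into four (handling general $p$ directly and the very-small-tail regime, where the suppression must be proportional to the tail itself so that the guarantee stays relative rather than additive). One small correction: in your Case 1 the intermediate claim $\langle v,u_1\rangle^2 \ge 1-O(\eps)$ is false when many singular values lie in $[1-\eps^{2/3},1]$ (precisely the regime of that case); what the Chebyshev argument actually gives, and all the Pythagorean step needs, is that the mass of $v$ concentrates on singular directions with value at least $1-\eps^{2/3}$, hence $\|Av\|_2^2 \ge \sigma_1^2 - O(\eps^{2/3})$.
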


It is well-known that the the optimum rank-$1$ approximation of $A$, in any Schatten-$p$ norm, is of the form $Auu^\top$, for some $u$ with norm $1$. (In fact, the optimum $u$ is precisely the top singular vector of $A$).
Hence, $A ww^\top$ is a $(1+\eps)$-approximate rank-$1$ approximation of $A$.

In order to prove Theorem \ref{thm:rank_1_frobenius}, we note the following intermediate lemmas.

\begin{lemma}[Pythagorean Theorem for Matrices] \label{prop:pythagorean_theorem}
    For any unit vector $w,$ $\|A ww^\top\|_F^2 + \|A (I-ww^\top)\|_F^2 = \|A\|_F^2$. More generally, for any projection matrix $P$, $\|A P\|_F^2 + \|A (I-P)\|_F^2 = \|A\|_F^2$.
\end{lemma}

More generally, we require the following two lemmas that generalize the Pythagorean theorem to Schatten-$p$ spaces, for any $p\geq1$.

\begin{lemma}[Generalized Pythagorean Inequality Lemma 5.5,~\cite{bakshi2022low}] \label{lem:schatten-pythagorean-initial}
    Let $A \in \BR^{n \times d}$, and $P, Q$ be projection matrices of equal rank in $\BR^{n \times n}$. Then, for any $p \ge 1$,
\[\|A\|_{\calS_p}^p \ge \|P A Q\|_{\calS_p}^p + \|(I-P) A (I-Q)\|_{\calS_p}^p.\]
\end{lemma}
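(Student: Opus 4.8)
The plan is to reduce the claim to two clean facts: first, that $PAQ$ and $(I-P)A(I-Q)$ together form a ``block diagonal'' matrix whose singular values are the disjoint union of the singular values of the two pieces; and second, that this block diagonal matrix is an average of $A$ with a two-sided orthogonal rotation of $A$. Throughout I will write $M_1 := PAQ$, $M_2 := (I-P)A(I-Q)$, and $M := M_1 + M_2$.

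First I would show $\|M\|_{\calS_p}^p = \|M_1\|_{\calS_p}^p + \|M_2\|_{\calS_p}^p$ for every $p$. Since $P(I-P) = 0$ we get $M_1^\top M_2 = Q^\top A^\top P (I-P) A (I-Q) = 0$, and symmetrically $M_1 M_2^\top = 0$; hence $M^\top M = M_1^\top M_1 + M_2^\top M_2$. Moreover $M_1^\top M_1$ is supported on $\mathrm{range}(Q)$ while $M_2^\top M_2$ is supported on $\mathrm{range}(I-Q) = \mathrm{range}(Q)^{\perp}$, so these two positive semidefinite matrices have orthogonal supports and the eigenvalue multiset of their sum is the concatenation of the two eigenvalue multisets. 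Taking square roots, the singular values of $M$ are the union of those of $M_1$ and $M_2$, which gives the desired additivity of $\|\cdot\|_{\calS_p}^p$.

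Next I would verify the algebraic identity $M = \tfrac12\bigl(A + (2P-I)\,A\,(2Q-I)\bigr)$. Expanding, $(2P-I)A(2Q-I) = 4PAQ - 2PA - 2AQ + A$, so the right-hand side equals $A + 2PAQ - PA - AQ$, which matches $PAQ + (I-P)A(I-Q)$ after expanding the latter. Setting $U := 2P-I$ and $V := 2Q-I$, the orthogonal-projection identities $P^2 = P = P^\top$ give $U^\top U = 4P^2 - 4P + I = I$, so $U$ is orthogonal, and likewise $V$.

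Finally, since $p \ge 1$ the Schatten-$p$ norm obeys the triangle inequality and is unitarily invariant (Fact~\ref{fact:unitary_inv}), so
\[
\|M\|_{\calS_p} = \tfrac12\bigl\|A + UAV\bigr\|_{\calS_p} \le \tfrac12\bigl(\|A\|_{\calS_p} + \|UAV\|_{\calS_p}\bigr) = \|A\|_{\calS_p}.
\]
Raising to the $p$-th power and combining with the first step yields $\|A\|_{\calS_p}^p \ge \|PAQ\|_{\calS_p}^p + \|(I-P)A(I-Q)\|_{\calS_p}^p$. The role of the hypothesis $p \ge 1$ is precisely to license the triangle inequality in this last display; the only real (and minor) subtlety is the orthogonal-support argument of the first step, which is what upgrades a one-sided bound into the exact additivity of the two Schatten-$p$ norms.
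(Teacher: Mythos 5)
Your proof is correct, and it is worth noting that the paper itself does not prove this statement at all: it is imported wholesale as Lemma 5.5 of~\cite{bakshi2022low}, so you have supplied a self-contained argument where the paper only gives a citation. Your route is the classical two-block pinching argument: the cross terms $M_1^\top M_2$ and $M_1 M_2^\top$ vanish and $M_1^\top M_1$, $M_2^\top M_2$ act on the orthogonal complementary subspaces $\mathrm{range}(Q)$ and $\mathrm{range}(I-Q)$, giving exact additivity $\|M\|_{\calS_p}^p = \|M_1\|_{\calS_p}^p + \|M_2\|_{\calS_p}^p$ for $M = PAQ + (I-P)A(I-Q)$; then the identity $M = \tfrac12\left(A + (2P-I)A(2Q-I)\right)$ with the reflections $2P-I$, $2Q-I$ orthogonal, combined with the triangle inequality and two-sided unitary invariance of $\|\cdot\|_{\calS_p}$ (this is where $p \ge 1$ enters), gives $\|M\|_{\calS_p} \le \|A\|_{\calS_p}$. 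Two small remarks. First, your argument never uses the hypothesis that $P$ and $Q$ have equal rank, so you actually prove a slightly more general statement than the one quoted; that hypothesis is harmless but inessential here. Second, you do implicitly use that $P$ and $Q$ are \emph{orthogonal} projections (i.e., symmetric idempotents): this is needed both for $P^\top(I-P)=0$ in the cross-term computation and for $2P-I$ to be orthogonal, and the lemma would indeed be false for oblique projections. Since the paper only ever instantiates the lemma with $P = ww^\top$ and $Q = vv^\top$, this is the intended reading, but it would be worth stating explicitly. (There is also a benign dimension mismatch in the paper's statement, where $Q$ should be $d \times d$ when $A$ is $n \times d$; your proof treats $Q$ as acting on the right, which is what the application requires.)
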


\vspace{0.2in}
\begin{mdframed}

  \begin{algorithm}[Krylov Iteration (rectangular matrices)]
    \label{algo:simul_power_iter}\mbox{}
    \begin{description}
    \item[Input:] An $n \times d$ matrix $A$, target rank $k$, targest accuracy $\eps>0$, and iteration count $t = O\Paren{ p^{-1} \cdot \log(1/\eps)/\eps^{1/3} }$. 
    
    \begin{enumerate}
    \item Let $g \sim \mathcal{N}(0,I)$. Let $\calK = \left[g ; (A A^\top) g; (A A^\top)^2 g; \ldots; (A A^\top)^t g\right]$ be the resulting $d \times (t+1)$  Krylov matrix . 
    \item Compute an orthonomal basis $Q$ for the column span of $\calK$. Let $M = Q^\top A A^\top Q$. 
    \item Compute the top eigenvector vector of $M$, and denote it by $y_1$.
    \item Let $v = A^\top Q y_1/\|A^\top Q y_1\|_2$
    \end{enumerate}
    \item[Output:] $A vv^\top$.  
    \end{description}
  \end{algorithm}
\end{mdframed}

We will importantly use the following corollary of \Cref{lem:schatten-pythagorean-initial}.

\begin{lemma}[Corollary of Lemma 5.5 in~\cite{bakshi2022low}] \label{lem:schatten-pythagorean}
Given any unit vector $w$ and matrix $A \in \mathbb{R}^{n \times d}$, let $v = A^\top w/\norm{A^\top  w}_2$. Then,
\begin{equation*}
    \norm{A}_{\calS_p}^p \geq  \norm{ w w^\top  A }_{\calS_p}^p + \norm{ A\Paren{I - v v^\top} }_{\calS_p}^p
\end{equation*}
\end{lemma}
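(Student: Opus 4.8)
The plan is to derive this from \Cref{lem:schatten-pythagorean-initial} by choosing the right pair of projection matrices $P$ and $Q$. Let $w$ be the given unit vector and set $v = A^\top w / \|A^\top w\|_2$ (assuming $A^\top w \ne 0$; the degenerate case is handled separately since then $ww^\top A = 0$ and the inequality is immediate from $\|A\|_{\calS_p}^p \ge \|A(I-vv^\top)\|_{\calS_p}^p$, which holds for any unit vector by monotonicity of singular values under projection). Take $P = ww^\top$ and $Q = vv^\top$, both rank-$1$ projection matrices, so the hypothesis of \Cref{lem:schatten-pythagorean-initial} is satisfied. Then \Cref{lem:schatten-pythagorean-initial} gives
\begin{equation*}
    \|A\|_{\calS_p}^p \ge \|ww^\top A vv^\top\|_{\calS_p}^p + \|(I - ww^\top) A (I - vv^\top)\|_{\calS_p}^p.
\end{equation*}

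The remaining work is to massage the two terms on the right into the desired form. For the second term, I would show $\|(I - ww^\top) A (I - vv^\top)\|_{\calS_p}^p = \|A(I - vv^\top)\|_{\calS_p}^p$. The key observation is that $A(I - vv^\top)$ and $w$ are orthogonal: indeed $w^\top A (I - vv^\top) = w^\top A - (w^\top A v) v^\top$, and since $v = A^\top w/\|A^\top w\|_2$ we have $w^\top A v = \|A^\top w\|_2$, hence $w^\top A(I-vv^\top) = (A^\top w)^\top - \|A^\top w\|_2 \cdot (A^\top w)^\top/\|A^\top w\|_2 = 0$. Therefore the row space of $A(I-vv^\top)$ is orthogonal to $w$, so left-multiplying by $(I - ww^\top)$ acts as the identity on it, i.e. $(I - ww^\top) A (I - vv^\top) = A(I - vv^\top)$. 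For the first term, I would argue $\|ww^\top A vv^\top\|_{\calS_p} = \|ww^\top A\|_{\calS_p}$: the matrix $ww^\top A$ has rank at most $1$ with row space spanned by $A^\top w$, which is exactly the direction $v$, so right-multiplying by $vv^\top$ fixes it, giving $ww^\top A vv^\top = ww^\top A$. Substituting both identities into the displayed inequality yields exactly the claim.

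\textbf{Main obstacle.} The only delicate point is verifying the two projection identities — specifically that $v = A^\top w/\|A^\top w\|_2$ is precisely the right singular direction that makes $vv^\top$ act trivially on both $ww^\top A$ (from the right) and on the complement $A(I-vv^\top)$ (via orthogonality to $w$ after left projection). This is really just the observation that $ww^\top A$ and $A(I-vv^\top) = A - (Av)v^\top \cdot$ (rescaled) decompose $A$ along the $w$/$A^\top w$ directions, but one has to be careful that the ranks of $P$ and $Q$ match (both $1$) so that \Cref{lem:schatten-pythagorean-initial} genuinely applies, and that the $A^\top w = 0$ edge case is dispatched. Everything else is a routine unwinding of definitions, so I do not anticipate a genuine difficulty here.
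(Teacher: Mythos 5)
Your proposal is correct and follows essentially the same route as the paper: apply \Cref{lem:schatten-pythagorean-initial} with $P = ww^\top$, $Q = vv^\top$, and verify the identities $ww^\top A vv^\top = ww^\top A$ and $(I-ww^\top)A(I-vv^\top) = A(I-vv^\top)$. Your additional handling of the degenerate case $A^\top w = 0$ is a harmless (and slightly more careful) extra.
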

\begin{proof}
Let $P= w w^\top$ and $Q = v v^\top$. Observe that $  w w^\top A v v^\top = \frac{ w w^\top A A^\top w w^\top A    }{ \norm{A^\top w}^2  } = ww^\top A$.  
This also implies that $ww^\top A (I-vv^\top) = ww^\top A - ww^\top A vv^\top = 0$, which means that $(I-ww^\top) A (I-vv^\top) = A (I-vv^\top)$.
Invoking Lemma~\ref{lem:schatten-pythagorean-initial}, we have 
\begin{equation}
    \begin{split}
        \norm{A}_{\calS_p}^p & \geq \norm{ (ww^\top) A (vv^\top) }_{\calS_p}^p + \norm{ \Paren{I - ww^\top} A \Paren{I - vv ^\top}  }_{\calS_p}^p \\
        & = \norm{  ww^\top A }_{\calS_p}^p + \norm{  A \Paren{I - vv^\top} }_{\calS_p}^p, 
    \end{split}
\end{equation}
as desired. 
\end{proof}


We also note the following basic lemma, which will simplify our goal.

\begin{lemma}[Correlated Vectors to LRA] \label{lem:existence-implies-ub}
    Let the singular values of $A$ be $\sigma_1 \ge \sigma_2 \ge \cdots \ge \sigma_n \ge 0$.
    For $p \ge 1$, suppose that $w$ is a unit vector such that $\|A^\top w\|_2^p \ge (1+\eps) \sigma_1^p - \eps \cdot \|A\|_{\calS_p}^p$. Further, let $v = A^\top w/\norm{A^\top w}$.  Then,  $\norm{A(I-vv^\top)}_{\calS_p}^p \le (1+\eps) \cdot \min_{\norm{u}_2 = 1} \norm{A(I-uu^\top)}_{\calS_p}^p$.
\end{lemma}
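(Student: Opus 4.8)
The plan is to reduce the low-rank approximation guarantee to the simple scalar inequality in the hypothesis, using the Schatten-$p$ Pythagorean inequality from \Cref{lem:schatten-pythagorean}. First I would recall that by \Cref{fact:schtatten_lra}, the optimal rank-$1$ approximation of $A$ in Schatten-$p$ norm is $A_1 = \sigma_1 u_1 u_1^\top$ where $u_1$ is the top left singular vector, so that
\[
\min_{\norm{u}_2=1} \norm{A(I-uu^\top)}_{\calS_p}^p = \norm{A - A_1}_{\calS_p}^p = \sum_{i=2}^n \sigma_i^p = \|A\|_{\calS_p}^p - \sigma_1^p.
\]
This identifies the quantity we must upper bound $\norm{A(I-vv^\top)}_{\calS_p}^p$ against: namely $(1+\eps)\bigl(\|A\|_{\calS_p}^p - \sigma_1^p\bigr)$.

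Next I would apply \Cref{lem:schatten-pythagorean} with the given unit vector $w$ and $v = A^\top w/\norm{A^\top w}_2$, which yields
\[
\norm{A(I-vv^\top)}_{\calS_p}^p \le \|A\|_{\calS_p}^p - \norm{ww^\top A}_{\calS_p}^p.
\]
Since $ww^\top A$ is a rank-$1$ matrix, its only nonzero singular value is $\norm{ww^\top A}_2 = \norm{A^\top w}_2$, so $\norm{ww^\top A}_{\calS_p}^p = \norm{A^\top w}_2^p$. Combining with the hypothesis $\norm{A^\top w}_2^p \ge (1+\eps)\sigma_1^p - \eps\|A\|_{\calS_p}^p$ gives
\[
\norm{A(I-vv^\top)}_{\calS_p}^p \le \|A\|_{\calS_p}^p - (1+\eps)\sigma_1^p + \eps\|A\|_{\calS_p}^p = (1+\eps)\bigl(\|A\|_{\calS_p}^p - \sigma_1^p\bigr),
\]
which is exactly $(1+\eps)\min_{\norm{u}_2=1}\norm{A(I-uu^\top)}_{\calS_p}^p$ by the computation above, completing the proof.

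There is essentially no obstacle here — this lemma is a bookkeeping step that packages the two structural facts (the closed form for the optimum, and the Schatten-$p$ Pythagorean inequality) into a clean target inequality. The only minor points to be careful about are: (i) handling the degenerate case $A^\top w = 0$ so that $v$ is well-defined (if this occurs the hypothesis forces $(1+\eps)\sigma_1^p \le \eps\|A\|_{\calS_p}^p$, which can only hold in trivial cases, or one simply takes any unit vector as $v$); and (ii) verifying that $\norm{ww^\top A}_{\calS_p} = \norm{A^\top w}_2$ for all $p \ge 1$, which is immediate since a rank-$1$ matrix has a single nonzero singular value equal to its operator norm, and $\norm{ww^\top A}_{\text{op}} = \norm{w}_2 \cdot \norm{w^\top A}_2 = \norm{A^\top w}_2$. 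The real content — producing a $w$ satisfying the scalar hypothesis via $O(\log(1/\eps)/\eps^{1/3})$ Krylov iterations — is deferred to the subsequent lemmas analyzing the Chebyshev and zeroing-out polynomials.
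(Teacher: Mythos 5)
Your proof is correct and follows essentially the same route as the paper's: identify the optimum as $\|A\|_{\calS_p}^p - \sigma_1^p$ via Eckart--Young, apply the Schatten-$p$ Pythagorean inequality (Lemma~\ref{lem:schatten-pythagorean}) to bound $\norm{A(I-vv^\top)}_{\calS_p}^p \le \|A\|_{\calS_p}^p - \norm{ww^\top A}_{\calS_p}^p$, and use that the rank-$1$ matrix $ww^\top A$ has Schatten-$p$ norm $\norm{A^\top w}_2$. Your extra remarks on the degenerate case $A^\top w = 0$ are a minor refinement the paper omits but do not change the argument.
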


\begin{proof}
    By Eckardt-Young, $\min \norm{A(I-uu^\top)}_{\calS_p}^p = \sum_{i=2}^n \sigma_i^p.$ In other words, $\min_{\norm{u}=1} \norm{A(I-uu^\top)}_{\calS_p}^p = \norm{A}_{\calS_p}^p - \sigma_1^p$. Hence, if we find a unit vector $w$ with $\norm{ww^\top A}_{\calS_p}^p \ge (1+\eps) \sigma_1^p - \eps \cdot \norm{A}_{\calS_p}^p,$ then by \Cref{lem:schatten-pythagorean}, 
\begin{align*}
    \norm{A(I-vv^\top)}_{\calS_p}^p 
    &\le \norm{A}_{\calS_p}^p - \norm{uu^\top A}_{\calS_p}^p \\
    &\le \norm{A}_{\calS_p}^p - (1+\eps) \sigma_1^p + \eps  \norm{A}_{\calS_p}^p\\
    &= (1+\eps) \cdot (\norm{A}_{\calS_p}^p - \sigma_1^p) \\
    &= (1+\eps) \cdot \min_{\|u\|_2 = 1} \norm{A(I-uu^\top)}_{\calS_p}^p,
\end{align*}
    where $v = A^\top w/\norm{A^\top w}$.
    
    Finally, since $ww^\top A$ is a rank-1 matrix and $w$ is a unit vector, its Schatten-$p$ norm is simply $ \norm{w^\top A}_2 = \norm{A^\top w}_2$. Thus, it suffices for $\norm{A^\top w}_2^p \ge (1+\eps)  \sigma_1^p - \eps \cdot \|A\|_{\calS_p}^p$.
\end{proof}

\begin{lemma}[Existence of good vectors in the Krylov Subspace]
\label{lem:existing-of-good-vectors}
    Let $\Lambda = AA^\top \in \BR^n$, with eigenvalues $\lambda_1 \ge \lambda_2 \ge \cdots \ge \lambda_n \ge 0$. Let $g \sim \mathcal{N}(0, I) \in \BR^n$ be a random Gaussian vector. Next, suppose that $t \ge C p \eps^{-1/3} \log (p/\eps)$ for some sufficiently large constant $C$, and define $\calK := \Span(g, \Lambda g, \dots, \Lambda^t g)$. Then, there exists a unit vector $w \in \calK$ such that
\[(w^\top \Lambda w)^{p/2} \ge (1+\eps) \lambda_1^{p/2} - \eps \cdot (\lambda_1^{p/2} + \cdots + \lambda_n^{p/2}) = \lambda_1^{p/2} - \eps \cdot \sum_{i=2}^n \lambda_i^{p/2}.\]
\end{lemma}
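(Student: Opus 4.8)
The plan is to reduce the lemma to the existence of a single univariate polynomial applied to $\Lambda$, and then to build that polynomial by a case analysis on the spectrum; the point of the argument is to achieve degree $O(p\,\eps^{-1/3}\log(p/\eps))$ rather than the $\log n$ that a naive analysis would cost. Expand $g = \sum_i c_i u_i$ in the eigenbasis of $\Lambda$, so $c_i = \Iprod{u_i,g}$ are i.i.d.\ $\mathcal N(0,1)$. For any polynomial $P$ of degree $\le t$ the vector $w_P := P(\Lambda)g/\norm{P(\Lambda)g}_2$ lies in $\calK$ and $w_P^\top\Lambda w_P = \big(\sum_i c_i^2 P(\lambda_i)^2\lambda_i\big)\big/\big(\sum_i c_i^2 P(\lambda_i)^2\big)$. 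Normalizing $\lambda_1 = 1$ (the case $\lambda_1 = 0$ is trivial) and putting $\delta := 1 - w_P^\top\Lambda w_P\in[0,1]$, the elementary bound $1-(1-\delta)^{p/2}\le \max(1,p/2)\,\delta\le p\,\delta$ reduces the goal to producing $P$ with $P(1)=1$, $\deg P\le t$, and $\delta\le\tfrac{\eps}{Cp}\,\mathrm{OPT}$, where $\mathrm{OPT}:=\sum_{i\ge2}\lambda_i^{p/2}$ (if $\eps\,\mathrm{OPT}\ge1$ the claim is vacuous). Using $P(1)=1$, that $c_1^2=\Omega(1)$ with probability $\ge 0.99$, and Markov's inequality applied to $\sum_{i\ge2}c_i^2 P(\lambda_i)^2(1-\lambda_i)$, it then suffices to build $P$ depending only on the $\lambda_i$ with
\[\sum_{i\ge2}P(\lambda_i)^2(1-\lambda_i)\;\le\;\frac{\eps}{Cp}\,\mathrm{OPT}.\]

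I would build $P$ out of three kinds of factors, each aimed at a range of eigenvalues. First, a power $x^{q}$ with $q=\Theta(p+\eps^{-1/3}\log(p/\eps))$: this boosts $\lambda_1=1$ and, because for small $\lambda_i$ one can write $P(\lambda_i)^2\le\lambda_i^{p/2}\cdot(\text{prefactor})$, it charges the contribution of the (possibly $n$-many) tiny eigenvalues against $\sum_{i\ge 2}\lambda_i^{p/2}=\mathrm{OPT}$ rather than against $n$ — this is precisely the step that replaces the $\log n$ of \cite{musco2015randomized} by $\log(p/\eps)$. Second, a shifted‑Chebyshev factor $T_{d,\zeta}$ from \Cref{cor:magic-polynomial}, which decays like $e^{-\Omega(\min(\sqrt\zeta d,\zeta d^2))}$ on $[0,1-\zeta]$ and thus kills the ``middle'' eigenvalues; the eigenvalues within $\zeta$ of $1$ are then automatically harmless, since each carries $\lambda_i^{p/2}\ge\tfrac12$ of $\mathrm{OPT}$ while $1-\lambda_i\le\zeta$, so with $|P|\le1$ there their total is $\le2\zeta\,\mathrm{OPT}$. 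Third, a factor $p_0(x)=\prod_\mu\frac{x-\mu}{1-\mu}$ that \emph{exactly} annihilates the finitely many stubborn middle eigenvalues in the regime where there are too many of them for the Chebyshev factor to be cheap; one checks $|p_0|\le1$ between the annihilated range and $1$, while its blow-up on $[0,\tfrac14)$ is $\le\tau^{-\deg p_0}$, which gets absorbed into the power factor.

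The case split is on whether $\mathrm{OPT}$ is large or small — equivalently, whether many or few of the non-top singular values are non-negligible. If $\mathrm{OPT}\ge\Omega\!\big(\tfrac1{p\,\eps^{1/3}}\big)$, then one can afford a Chebyshev shift $\zeta$ as large as $\Theta(\eps^{2/3}/p^2)$ while still keeping $d=\Theta(\log(p/\eps)/\sqrt\zeta)\le t$, and $P=T_{d,\zeta}(x)\cdot x^{\lceil p/4\rceil}$ already satisfies the displayed inequality. If $\mathrm{OPT}$ is smaller, then in a suitable band $[\beta,1-\tau]$ — with $\tau=\Theta(\eps/p)$, $\beta=\tfrac14$ for bounded $p$, and $\beta=1-\Theta(1/p)$ for large $p$ — every non-top eigenvalue carries a constant fraction of $\mathrm{OPT}$, so the band contains only $O(\eps^{-1/3})$ distinct eigenvalues; these are the ones $p_0$ annihilates, and $P=x^q\,p_0(x)$ (with an extra shifted-Chebyshev factor inserted when $\beta$ is a constant but $p$ is large) then has total degree $O(p+\eps^{-1/3}\log(p/\eps))\le t$. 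In either case the resulting $w=w_P$ is fed into \Cref{lem:existence-implies-ub} to conclude.

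I expect the main obstacle to be exactly the bookkeeping in the last step: making the Chebyshev regime (which needs $\zeta\ge\Omega(\eps^{2/3}/p^2)$, hence $\mathrm{OPT}$ large) and the annihilation regime (which needs $\le O(\eps^{-1/3})$ distinct middle eigenvalues, hence $\mathrm{OPT}$ small) genuinely complementary and exhaustive over \emph{all} spectra and \emph{all} $p\ge1$ at once, while tuning the cutoffs $\beta$, $\tau$, $\zeta$ and the power $q$ so that (i) the eigenvalues within $\tau$ of $1$ are always ``free'', (ii) $x^{q}$ simultaneously dominates the blow-up of $p_0$ on the tail and the exponent comparison $x^{2q}\le x^{p/2}$, and (iii) the total degree never exceeds $O(p\,\eps^{-1/3}\log(p/\eps))$. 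The conceptual idea — replacing a union bound over $n$ eigenvalues by a comparison with $\mathrm{OPT}$ through the power factor — is short; the delicate part is getting all the parameter ranges to line up.
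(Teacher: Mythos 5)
Your plan follows the paper's proof in all essentials (case analysis on $\mathrm{OPT}=\sum_{i\ge2}\lambda_i^{p/2}$; shifted Chebyshev times a power when $\mathrm{OPT}$ is large, an annihilating product times a power when the near-top band is sparse, a pure power when $\mathrm{OPT}$ is tiny; charging small eigenvalues against $\mathrm{OPT}$ via $P(\lambda_i)^2\le(\text{tiny})\cdot\lambda_i^{p/2}$ plus Markov over the Gaussian coefficients; conclude via \Cref{lem:existence-implies-ub}), but the $\mathrm{OPT}$-large case does not close as you have set it up. The gap is in your reduction to the single condition ``$P(1)=1$ and $\sum_{i\ge2}P(\lambda_i)^2(1-\lambda_i)\le\frac{\eps}{Cp}\,\mathrm{OPT}$'', which you obtain by lower-bounding the Rayleigh-quotient denominator $\sum_i c_i^2P(\lambda_i)^2$ by the top term $c_1^2P(1)^2$ alone. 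Under that reduction, your own accounting of the eigenvalues within $\zeta$ of $1$ (total contribution $\le 2\zeta\,\mathrm{OPT}$) must be compared against the target $\frac{\eps}{Cp}\,\mathrm{OPT}$, and the $\mathrm{OPT}$ factors cancel: you are forced to take $\zeta=O(\eps/p)$, hence Chebyshev degree $\Theta(\sqrt{p/\eps}\,\log(p/\eps))$, blowing the budget $O(p\,\eps^{-1/3}\log(p/\eps))$ whenever $p\ll\eps^{-1/3}$ (in particular for $p=1,2$). Your claimed $\zeta=\Theta(\eps^{2/3}/p^2)$ is genuinely inconsistent with the displayed condition: for $p=2$, take $\lambda_1=1$ and $\eps^{-1}$ eigenvalues at $1-\theta$ with $\theta=\eps^{2/3}/\log(1/\eps)$; this is squarely in your ``$\mathrm{OPT}$ large'' regime, yet $T_{d,\zeta}(1-\theta)=T_d(1+\zeta-\theta)/T_d(1+\zeta)=e^{-\Theta(d\theta/\sqrt{\zeta})}=e^{-\Theta(1)}$ for $d=\Theta(\log(1/\eps)/\sqrt{\zeta})$, so the left side of your condition is $\Theta\big(\eps^{-1/3}/\log(1/\eps)\big)$ while the right side is $\Theta(1)$ --- off by a polynomial factor. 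So the polynomial you propose in that regime does not satisfy the inequality you reduced to (some other polynomial might, but you would have to construct it, and your Chebyshev-times-power choice is not it).

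The lemma of course still holds for such spectra, and the fix is exactly what the paper's Case 2 does: do not discard the near-top spectral mass from the denominator. Normalize $w=P(\Lambda)g/\|P(\Lambda)g\|_2$, use the Markov argument to show that all but an $O(\eps^3/p^2)$ fraction of $\|w\|_2^2$ sits on eigenvalues $\ge 1-\zeta$, deduce $w^\top\Lambda w\ge(1-\zeta)\bigl(1-O(\eps^3/p^2)\bigr)$, and only then compare with $\mathrm{OPT}$: since $\mathrm{OPT}\ge\Omega(\eps^{-1/3}/p)$ in this case, the loss of roughly $\tfrac{p}{2}\zeta$ in $(w^\top\Lambda w)^{p/2}$ is absorbed by $\eps\,\mathrm{OPT}$, which is precisely what legitimizes a shift of order $\eps^{2/3}/p^2$ (the paper uses $\eps^{2/3}/(2p)$). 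In short, in the Chebyshev regime the near-$1$ eigenvalues must be charged multiplicatively (they cost $\zeta$ in total in the Rayleigh quotient), not additively per eigenvalue through your unnormalized sufficient condition. With that change, your trivial case and your $\mathrm{OPT}$-small case (annihilating product, plus the extra Chebyshev factor for large $p$, with the $x^q$ factor absorbing the $\tau^{-O(\eps^{-1/3})}$ blow-up and enforcing $P(\lambda_i)^2\lesssim\lambda_i^{p/2}$ on the tail) go through essentially as in the paper's Cases 3 and 4.
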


\begin{proof}
We assume WLOG that $\|A\|_{\textrm{op}} = 1$ (by scaling, note that scaling does not affect $\calK$),
so $\lambda_1 = 1$.
In addition, we may assume WLOG that $\Lambda$ is diagonal by rotating $\Lambda$: this rotates $g$ correspondingly but identity-covariance Gaussians are rotation-invariant.
We split the analysis into four cases.

\paragraph{Case 1: $\sum_{i=2}^n \lambda_i^{p/2} \ge \eps^{-1}$.} In this case, $\lambda_1^{p/2} - \eps \cdot \sum_{i=2}^n \lambda_i^{p/2} \le 0$, and since $\Lambda$ is PSD, any unit vector $w \in \calK$ satisfies $(w^\top \Lambda w)^{p/2} \ge 0 \ge \lambda_1^{p/2} - \eps \cdot \sum_{i=2}^n \lambda_i^{p/2}$.

In the remainder of the cases, given a starting vector $g = (g_1, \dots, g_d) \sim \mathcal{N}(0, I_d),$ we heavily exploit the fact that the subspace $\Span\{g, \Lambda g, \dots, \Lambda^t g\}$ contains the vector $\Paren{ \phi(\lambda_1^2) g_1, \dots, \phi(\lambda_d^2) g_d}$ for any polynomial $\phi(x)$ of degree at most $t$, since we assumed WLOG that $\Lambda$ was diagonal.

\paragraph{Case 2: $\sum_{i=2}^n \lambda_i^{p/2} < \eps^{-1}$, and the number of eigenvalues $\lambda_i$ in the range $[1-\frac{1}{2p}, 1]$ is at least $\eps^{-1/3}+1$.} In this case, first note that $\sum_{i=2}^n \lambda_i^{p/2} \ge (1-\frac{1}{2p})^{p/2} \cdot \eps^{-1/3} \ge \frac{\eps^{-1/3}}{2}$, since there are at least $\eps^{-1/3}$ eigenvalues of $\Lambda$, not including $\lambda_1 = 1$, that are at least $1 - \frac{1}{2p}$. We first consider the shifted/scaled Chebyshev polynomial $T_{t, \eps^{2/3}/(2p)}$ from \Cref{cor:magic-polynomial}, where $t = O(p^{1/2}/\eps^{1/3} \cdot \log (p/\eps))$. Note that $T_{t, \eps^{2/3}/(2p)}(1) = 1$ and 
\[|T_{t, \eps^{2/3}/(2p)}(x)| \le e^{-c \cdot \min(t^2 \cdot \eps^{2/3}/2p, t \cdot \eps^{1/3}/\sqrt{2p})} \le \frac{\eps^2}{p}\]
for all $x \in [0, 1-\frac{\eps^{2/3}}{2p}].$ \ainesh{why is it possible to have such a chebyshev polynomial? not obvious.} By letting $\hat{\phi}(x) = x^{\lceil p/2 \rceil} \cdot T_{t, \eps^{2/3}/(2p)}(x)$, we have that $\hat{\phi}(1) = 1$ and $|\hat{\phi}(x)| \le x^{p/2} \cdot \frac{\eps^2}{p}$ for all $x \in [0, 1-\frac{\eps^{2/3}}{2p}].$ For $\hat{t} \ge C \cdot p \cdot \eps^{-1/3} \log (1/\eps)$ for a sufficiently large constant $C$, $\hat{\phi}(x)$ has degree at most $\hat{t}$.

Now, with probability at least $0.99$ over the randomness of $g \sim \mathcal{N}(0, I)$, $|g_1| \ge 0.01$, and since $\hat{\phi}(\lambda_1) = \hat{\phi}(1) = 1$, we have 
$|\hat{p}(\lambda_1) g_1| \ge 0.01$.

 Next, for any $i$ such that $\lambda_i \le 1-\frac{\eps^{2/3}}{2p},$ $|\hat{\phi}(\lambda_i)| \le \frac{\eps^2}{p} \cdot \lambda_i^{p/2}$. Therefore, for each such $i$, 
\begin{equation*}
    \expecf{g \sim \mathcal{N}(0, I)}{(\hat{\phi}(\lambda_i) g_i)^2} \leq \frac{\eps^4}{p^2} \cdot \lambda_i^{p}. 
\end{equation*}
Now, since $\lambda_i \le 1$ for all $i$, and by our assumption, we have $\sum_{i=2}^n \lambda_i^{p} \le \sum_{i=2}^n \lambda_i^{p/2} < \eps^{-1}.$ Therefore,
\begin{equation}
\label{eqn:bounding-expectation-phi-hat-lambda}
\begin{split}
    \expecf{{g \sim \mathcal{N}(0, I)} }{\sum_{i: \lambda_i  \le 1-(\eps^{2/3}/(2p))} (\hat{\phi}(\lambda_i) g_i)^2 } 
   =   \sum_{i: \lambda_i \le 1-(\eps^{2/3}/(2p))} \hat{\phi}(\lambda_i)^2
    & \le \frac{\eps^4}{p^2} \cdot \sum_{i: \lambda_i \le 1-(\eps^{2/3}/(2p))} \lambda_i^p \\
    & \le \frac{\eps^4}{p^2} \cdot \sum_{i=2}^n \lambda_i^{p/2} \\
    &\le \frac{\eps^3}{p^2}.
\end{split}
\end{equation}

Therefore, by Markov's inequality, with probability at least $0.99$ over $g$, $\sum_{i: \lambda_i \le 1-(\eps^{2/3}/(2p))} (\hat{\phi}(\lambda_i) g_i)^2 \le 100 \eps^{3}/p^2$. Hence, with probability at least 0.98 over $g$, 
\begin{equation}
\label{eqn:bounding-sum-lambda_i}
    \sum_{i: \lambda_i \le 1-\eps^{2/3}/(2p)} (\hat{\phi}(\lambda_i) g_i)^2 \le \frac{10^6 \cdot \eps^3}{p^2} \cdot (\hat{\phi}(\lambda_1) g_1)^2.
\end{equation}

Next, we define $\phi(x)$ to be a scaled version of $\hat{\phi}(x)$ (i.e., $\phi(x) = \gamma \cdot \hat{\phi}(x)$ for some parameter $\gamma$), so that the vector $(\phi(\lambda_i) g_i)_{i=1}^n$ has unit norm. It follows from equation \eqref{eqn:bounding-sum-lambda_i}, that with probability at least $0.98$,
\begin{equation*}
    \sum_{i: \lambda_i \le 1-\eps^{2/3}/(2p)} (\phi(\lambda_i) g_i)^2 \le \frac{10^6 \cdot \eps^3}{p^2} \cdot (\phi(\lambda_1) g_1)^2 \le \frac{10^6 \cdot \eps^3}{p^2},
\end{equation*}
where the first inequality above holds by \eqref{eqn:bounding-sum-lambda_i} and the second inequality holds because $(\phi(\lambda_i) g_i)_{i=1}^n$ has unit norm so $|\phi(\lambda_1) g_1| \le 1$.
\ainesh{i am confused by this, (11) is scale invariant, but only $\hat{\phi}(1) =1$, $\phi(1)= \gamma$...}
This implies that $\sum_{i: \lambda_i > 1-\eps^{2/3}/2p} (\phi(\lambda_i) g_i)^2 \ge 1-10^6 \eps^3/p^2$. So, for the normalized vector $w = (\phi(\lambda_i) g_i)_{i=1}^n,$ we have
\begin{align*}
    w^\top \Lambda w &= \sum_{i=1}^n (\phi(\lambda_i) g_i)^2 \cdot \lambda_i \\
    &\ge \sum_{i: \lambda_i > 1-\eps^{2/3}/2p} (\phi(\lambda_i) g_i)^2 \cdot \left(1-\frac{\eps^{2/3}}{2p}\right) \\
    &\ge \left(1-\frac{10^6 \eps^3}{p^2}\right) \cdot \left(1-\frac{\eps^{2/3}}{2p}\right) \\
    &\ge 1-\frac{\eps^{2/3}}{1.5 p},
\end{align*}
    where the last line assumes that $\eps$ is smaller than some small but fixed constant.

    Therefore, $(w^\top \Lambda w)^{p/2} \ge (1-\eps^{2/3}/(1.5 p))^{p/2} \ge 1-\frac{\eps^{2/3}}{2}$. However, we know that $\sum_{i=2}^n \lambda_i^{p/2} \ge \frac{\eps^{-1/3}}{2}$ which means that
\[\lambda_1^{p/2} - \eps \cdot \sum_{i=2}^n \lambda_i^{p/2} \le 1 - \frac{\eps^{2/3}}{2} \le (w^\top \Lambda w)^{p/2},\]
    as desired.

%

\textbf{Case 3: $\frac{1}{2} \le \sum_{i=2}^n \lambda_i^{p/2} \le \eps^{-1}$, and the number of eigenvalues $\lambda_i$ in the range $[1-\frac{1}{2p}, 1]$ is at most $\eps^{-1/3}+1$.} In this case, we consider the polynomial $$\hat{\phi}(x) = 
x^{t/2} \cdot \prod_{i: 1-\eps/2p \ge \lambda_i \ge 1-1/2p} (x-\lambda_i).$$ 

Assuming that $t \ge C \cdot p \cdot \eps^{-1/3} \log (p/\eps)$, $\hat{\phi}(x)$ has degree at most $t$. Note that 
\begin{equation} \label{eq:phi1-bound}
    \hat{\phi}(1) = \prod_{i:1-\eps/2p \ge \lambda_i \ge 1-1/2p} (1-\lambda_i) \ge \left(\frac{\eps}{2p}\right)^{\eps^{-1/3}+1} \ge 2^{-O(\eps^{-1/3} \log (p/\eps))},
\end{equation}
since every $1-\lambda_i$ term in the product is at least $\eps/(2p)$ and there are at most $\eps^{-1}+1$ such terms. In addition, $\hat{\phi}(\lambda_i) = 0$ for all $\lambda_i \in [1-\frac{1}{2p}, 1-\frac{\eps}{2p}],$ since one of the $x-\lambda_i$ terms vanishes.
Finally, for all $\lambda_i \in [0, 1-\frac{1}{2p}],$ we know that $\prod_{i:1-\eps/2p \ge \lambda_i \ge 1-1/2p} (x-\lambda_i)$ has magnitude at most $1$, and for $t \ge C \cdot p \cdot \eps^{-1/3} \log (p/\eps)$ for a sufficiently large constant $C$, 
\begin{equation*}
    |\lambda_i^{(t-p)/2}| \le (1-1/2p)^{(t-p)/2} \le 2^{-C \eps^{-1/3} \log(p/\eps) /10} \le \hat{\phi}(1) \cdot \frac{\eps^2}{p}.
\end{equation*}
Above, the first inequality holds because we are only considering $\lambda_i \le 1-\frac{1}{2p}$, the second inequality holds by our assumption that $t \ge C \cdot p \cdot \eps^{-1/3} \log (p/\eps)$, and the third inequality holds by \eqref{eq:phi1-bound}
So, $|\hat{\phi}(\lambda_i)| \le \hat{\phi}(1) \cdot \lambda_i^{p/2} \cdot \frac{\eps^2}{p}$ for all $i$ with $\lambda_i \le 1-\eps/2p$.
\ainesh{can we add some exposition for why these inequalities are true? }

With at least $0.99$ probability over $g \sim \mathcal{N}(0, I)$, $|\hat{\phi}(1) \cdot g_1| \ge 0.01 \cdot \hat{\phi}(1)$. Also,
\begin{equation}
    \begin{split}
        \expecf{g \sim \mathcal{N}(0, I)}{ \sum_{i:\lambda_i \le 1-\eps/2p} (\hat{\phi}(\lambda_i) g_i)^2} & = \sum_{i: \lambda_i \le 1-\eps/2p} \hat{\phi}(\lambda_i)^p \\
        & \le \hat{\phi}(1)^2 \cdot \frac{\eps^4}{p^2} \cdot \sum_{i: \lambda_i \le 1-\eps/2p} \lambda_i^p \\
        & \le \hat{\phi}(1)^2 \cdot \frac{\eps^4}{p^2} \cdot \sum_{i = 2}^n \lambda_i^{p/2} \\
        & \le \hat{\phi}(1)^2 \cdot \frac{\eps^3}{p^2}.
    \end{split}
\end{equation}

Therefore, by Markov's inequality, with probability at least $0.99$ over $g$, $\sum_{i: \lambda_i \le 1-\eps} (\hat{\phi}(\lambda_i) g_i)^2 \le 100 \frac{\eps^3}{p^2} \cdot \hat{\phi}(1)^2$. Hence, with probability at least 0.98 over $g$, $\sum_{i: \lambda_i \le 1-\eps} (\hat{P}(\lambda_i) g_i)^2 \le 10^6 \cdot \frac{\eps^3}{p^2} \cdot (\hat{P}(\lambda_1) g_1)^2$.

As in the second case, we let $\phi(x)$ be a scaled version of $\hat{\phi}(x)$, so that $(\phi(\lambda_i) g_i)_{i=1}^n$ has unit norm. Then, $\sum_{i: \lambda_i \le 1-\eps/2p} (\phi(\lambda_i) g_i)^2 \le 10^6 \cdot \frac{\eps^3}{p^2},$ so $\sum_{i: \lambda_i > 1-\eps/2p} (\phi(\lambda_i) g_i)^2 \ge 1-10^6 \frac{\eps^3}{p^2}$. Therefore,
\begin{align*}
    w^\top \Lambda w
    &= \sum_{i=1}^n (\phi(\lambda_i) g_i)^2 \cdot \lambda_i \\
    &\ge \sum_{i: \lambda_i > 1-\eps/2p} (\phi(\lambda_i) g_i)^2 \cdot \left(1 - \frac{\eps}{2p}\right) \\
    &\ge \left(1 - 10^6 \cdot \frac{\eps^3}{p^2}\right) \cdot \left(1-\frac{\eps}{2p}\right) \\
    &\ge 1-\frac{\eps}{1.5 p}
\end{align*}

    Therefore, $(w^\top \Lambda w)^{p/2} \ge (1-\eps/(1.5 p))^{p/2} \ge 1-\frac{\eps}{2}$. However, we know that $\sum_{i=2}^n \lambda_i^{p/2} \ge \frac{1}{2},$ which means that
\[\lambda_1^{p/2} - \eps \cdot \sum_{i=2}^n \lambda_i^{p/2} \le 1 - \frac{\eps}{2} \le (w^\top \Lambda w)^{p/2},\]
    as desired.

\paragraph{Case 4: $\sum_{i=2}^n \lambda_i^{p/2} < \frac{1}{2}.$} In this case, $\lambda_i \le 1 - \frac{1}{2p}$ for all $i \ge 2$. Therefore, we can set $\hat{\phi}(x) = x^L$ for some $L = O(\log (p/\eps))$, to obtain $\hat{\phi}(1) = 1$ and $\hat{\phi}(\lambda_i) \le \frac{\eps^2}{p} \cdot \lambda_i^{p/2}$ for all $i \ge 2$. With probability at least $0.99$ over $g \sim \mathcal{N}(0, I),$ $|\hat{\phi}(1) \cdot g_1| \ge 0.01$, since $\hat{P}(1) = 1$, and 
\begin{equation*}
    \expecf{g \sim \mathcal{N}(0, I)}{\sum_{i=2}^n (\hat{\phi}(\lambda_i) g_i)^2 } = \sum_{i=2}^n \hat{P}(\lambda_i)^2 \le \sum_{i=2}^n \frac{\eps^4}{p^2} \cdot \lambda_i^p \le \frac{\eps^4}{p^2} \cdot \left(\sum_{i=2}^n \lambda_i^{p/2}\right).
\end{equation*}
Define $\tau = \sum_{i=2}^n \lambda_i^{p/2}$.
Again, by Markov's inequality, $\sum_{i=2}^n (\hat{\phi}(\lambda_i) g_i)^2 \le 100\frac{\eps^4}{p^2} \cdot \tau$ with probability at least $0.99$ over $g$. Thus, with probability at least $0.98$ over $g$, $\sum_{i=2}^n (\hat{\phi}(\lambda_i) g_i)^2 \le 10^6 \cdot \frac{\eps^4}{p^2} \cdot \tau \cdot (\hat{\phi}(\lambda_1) g_1)^2$.

As in the second and third cases, we let $\phi(x)$ be a scaled version of $\hat{\phi}(x)$, so that $w := (\phi(\lambda_i) g_i)_{i=1}^n$ has unit norm. We have that $\sum_{i=2}^n (\phi(\lambda_i) g_i)^2 \le 10^6 \cdot \frac{\eps^4}{p^2} \cdot \tau,$ which means $(\phi(\lambda_1) g_1)^2 \ge 1 - O(\frac{\eps^4}{p^2} \cdot \tau)$. Hence, $w^\top \Lambda w \ge \lambda_1 \cdot (\phi(\lambda_1) g_1)^2 \ge 1-O(\frac{\eps^4}{p^2} \cdot \tau)$, which means that $(w^\top \Lambda w)^{p/2} \ge 1-O(\frac{\eps^4}{p} \cdot \tau)$. However, $\lambda_1^{p/2} - \eps \cdot \sum_{i=2}^n \lambda_i^{p/2} = 1 - \eps \cdot \tau$. Thus, 
\[(w^\top \Lambda w)^{p/2} \ge \lambda_1^{p/2} - \eps \cdot \sum_{i=2}^n \lambda_i^{p/2},\]
which completes the proof.
\end{proof}

We are now ready to complete the proof of Theorem~\ref{thm:rank_1_frobenius}.

\begin{proof}[Proof of Theorem~\ref{thm:rank_1_frobenius}]
Recall, $t = O\Paren{ p^{-1} \log(1/\eps)/\eps^{1/3}}$. 
In $O(t)$ matrix-vector computations, we can compute $\calK$ and compute $\max_{w \in \calK: \|w\|_2=1} \|A^\top w\|_2$. This is equivalent to $w = Q y_1$, where $y_1$ is the top singular vector of $A^\top Q$ for $Q$ an orthonormal basis for the column span of $\calK$, or equivalently, the top eigenvector of $Q^\top AA^\top Q$. If there exists a unit vector in the subspace $\calK := \Span(g, (AA^\top) g, \dots, (AA^\top)^t g)$ satisfying the assumption of \Cref{lem:existence-implies-ub}, then $w$ will also satisfy the assumption, since $w$ is defined  to maximize $\|A^\top w\|_2$ over $w \in \calK$. Thus, by \Cref{lem:existence-implies-ub}, we have that for $v = A^\top w/\norm{A^\top w},$ $Avv^\top$ is a $(1+\eps)$-approximate rank-$1$ approximation in Schatten-$p$ norm. Since $\|A^\top w\|_2 = \sqrt{w^\top AA^\top w}$, \Cref{lem:existing-of-good-vectors} implies the existence of such a vector, concluding the proof.  
%
 %
\end{proof}

\section*{Acknowledgements}

The authors thank Sinho Chewi, Jaume de Dios Pont, Piotr Indyk, Jerry Li, Chen Lu, and Erik Waingarten for helpful discussions. AB is supported by Ankur Moitra’s ONR grant. 
SN is supported by the NSF TRIPODS Program, an NSF Graduate Fellowship, and a Google Fellowship.


\begin{thebibliography}{MMMW21}

\bibitem[AN13]{andoni2013eigenvalues}
Alexandr Andoni and Huy~L. Nguyen.
\newblock Eigenvalues of a matrix in the streaming model.
\newblock In {\em Proceedings of the twenty-fourth annual ACM-SIAM symposium on
  Discrete algorithms}, pages 1729--1737. Society for Industrial and Applied
  Mathematics, 2013.

\bibitem[Bak22]{bakshi2022algorithms}
Ainesh Bakshi.
\newblock {\em Algorithms for Learning Latent Models: Establishing Tractability
  to Approaching Optimality}.
\newblock PhD thesis, Carnegie Mellon University Pittsburgh, PA, 2022.

\bibitem[BBK{\etalchar{+}}21]{bakshi2021learning}
Ainesh Bakshi, Chiranjib Bhattacharyya, Ravi Kannan, David~P Woodruff, and
  Samson Zhou.
\newblock Learning a latent simplex in input-sparsity time.
\newblock {\em ICLR}, 2021.

\bibitem[BBV04]{boyd2004convex}
Stephen Boyd, Stephen~P Boyd, and Lieven Vandenberghe.
\newblock {\em Convex optimization}.
\newblock Cambridge university press, 2004.

\bibitem[BCW20]{bakshi2020robust}
Ainesh Bakshi, Nadiia Chepurko, and David~P Woodruff.
\newblock Robust and sample optimal algorithms for {PSD} low rank
  approximation.
\newblock In {\em 2020 IEEE 61st Annual Symposium on Foundations of Computer
  Science (FOCS)}, pages 506--516. IEEE, 2020.

\bibitem[BCW22]{bakshi2022low}
Ainesh Bakshi, Kenneth~L Clarkson, and David~P Woodruff.
\newblock Low-rank approximation with $1/\eps^{1/3}$ matrix-vector products.
\newblock In {\em Proceedings of the 54th Annual ACM SIGACT Symposium on Theory
  of Computing}, pages 1130--1143, 2022.

\bibitem[BDN15]{BDN15}
Jean Bourgain, Sjoerd Dirksen, and Jelani Nelson.
\newblock Toward a unified theory of sparse dimensionality reduction in
  euclidean space.
\newblock In {\em Proceedings of the Forty-Seventh Annual {ACM} on Symposium on
  Theory of Computing, {STOC} 2015, Portland, OR, USA, June 14-17, 2015}, pages
  499--508, 2015.

\bibitem[BHSW20]{braverman2020gradient}
Mark Braverman, Elad Hazan, Max Simchowitz, and Blake Woodworth.
\newblock The gradient complexity of linear regression.
\newblock In {\em Conference on Learning Theory}, pages 627--647. PMLR, 2020.

\bibitem[BKKS19]{braverman2019schatten}
Vladimir Braverman, Robert Krauthgamer, Aditya Krishnan, and Roi Sinoff.
\newblock Schatten norms in matrix streams: Hello sparsity, goodbye dimension.
\newblock {\em arXiv preprint arXiv:1907.05457}, 2019.

\bibitem[BT23]{bakshi2023improved}
Ainesh Bakshi and Ewin Tang.
\newblock An improved classical singular value transformation for quantum
  machine learning.
\newblock {\em arXiv preprint arXiv:2303.01492}, 2023.

\bibitem[BW18]{bakshi2018sublinear}
Ainesh Bakshi and David Woodruff.
\newblock Sublinear time low-rank approximation of distance matrices.
\newblock In {\em Advances in Neural Information Processing Systems}, pages
  3782--3792, 2018.

\bibitem[CCHW20]{chepurko2020quantum}
Nadiia Chepurko, Kenneth~L Clarkson, Lior Horesh, and David~P Woodruff.
\newblock Quantum-inspired algorithms from randomized numerical linear algebra.
\newblock {\em arXiv preprint arXiv:2011.04125}, 2020.

\bibitem[CdDL{\etalchar{+}}23]{samplingLB}
Sinho Chewi, Jaume de~{Dios Pont}, Jerry Li, Chen Lu, and Shyam Narayanan.
\newblock Query lower bounds for log-concave sampling.
\newblock {\em CoRR}, abs/2304.02599, 2023.

\bibitem[CLR21]{carson2021stability}
Erin Carson, Kathryn Lund, and Miroslav Rozloznik.
\newblock The stability of block variants of classical gram--schmidt.
\newblock {\em SIAM Journal on Matrix Analysis and Applications},
  42(3):1365--1380, 2021.

\bibitem[CLW18]{chia2018quantum}
Nai-Hui Chia, Han-Hsuan Lin, and Chunhao Wang.
\newblock Quantum-inspired sublinear classical algorithms for solving low-rank
  linear systems.
\newblock {\em arXiv preprint arXiv:1811.04852}, 2018.

\bibitem[Coh16]{cohen2016nearly}
Michael~B Cohen.
\newblock Nearly tight oblivious subspace embeddings by trace inequalities.
\newblock In {\em Proceedings of the twenty-seventh annual ACM-SIAM symposium
  on Discrete algorithms}, pages 278--287. SIAM, 2016.

\bibitem[CW13]{clarkson2013low}
Kenneth~L Clarkson and David~P Woodruff.
\newblock Low rank approximation and regression in input sparsity time.
\newblock In {\em Proceedings of the forty-fifth annual ACM symposium on Theory
  of computing}, pages 81--90. ACM, 2013.

\bibitem[DM23]{dharangutte2023tight}
Prathamesh Dharangutte and Christopher Musco.
\newblock A tight analysis of hutchinson's diagonal estimator.
\newblock In {\em Symposium on Simplicity in Algorithms (SOSA)}, pages
  353--364. SIAM, 2023.

\bibitem[Elm96]{elman1996multigrid}
Howard~C Elman.
\newblock Multigrid and krylov subspace methods for the discrete stokes
  equations.
\newblock {\em International journal for numerical methods in fluids},
  22(8):755--770, 1996.

\bibitem[ES96]{elman1996fast}
Howard Elman and David Silvester.
\newblock Fast nonsymmetric iterations and preconditioning for navier--stokes
  equations.
\newblock {\em SIAM Journal on Scientific Computing}, 17(1):33--46, 1996.

\bibitem[ESW96]{elman1996iterative}
Howard~C Elman, David Silvester, and Andrew~J Wathen.
\newblock Iterative methods for problems in computational fluid dynamics.
\newblock 1996.

\bibitem[GBC16]{goodfellow2016deep}
Ian Goodfellow, Yoshua Bengio, and Aaron Courville.
\newblock {\em Deep learning}.
\newblock MIT press, 2016.

\bibitem[GLT18]{gilyen2018quantum}
Andr{\'a}s Gily{\'e}n, Seth Lloyd, and Ewin Tang.
\newblock Quantum-inspired low-rank stochastic regression with logarithmic
  dependence on the dimension.
\newblock {\em arXiv preprint arXiv:1811.04909}, 2018.

\bibitem[GSLW19]{gilyen2019quantum}
Andr{\'a}s Gily{\'e}n, Yuan Su, Guang~Hao Low, and Nathan Wiebe.
\newblock Quantum singular value transformation and beyond: exponential
  improvements for quantum matrix arithmetics.
\newblock In {\em Proceedings of the 51st Annual ACM SIGACT Symposium on Theory
  of Computing}, pages 193--204, 2019.

\bibitem[HHL09]{harrow2009quantum}
Aram~W Harrow, Avinatan Hassidim, and Seth Lloyd.
\newblock Quantum algorithm for linear systems of equations.
\newblock {\em Physical review letters}, 103(15):150502, 2009.

\bibitem[HMT11]{halko2011finding}
Nathan Halko, Per-Gunnar Martinsson, and Joel~A Tropp.
\newblock Finding structure with randomness: Probabilistic algorithms for
  constructing approximate matrix decompositions.
\newblock {\em SIAM review}, 53(2):217--288, 2011.

\bibitem[IVWW19]{indyk2019sample}
Piotr Indyk, Ali Vakilian, Tal Wagner, and David Woodruff.
\newblock Sample-optimal low-rank approximation of distance matrices.
\newblock {\em arXiv preprint arXiv:1906.00339}, 2019.

\bibitem[KK04]{knoll2004jacobian}
Dana~A Knoll and David~E Keyes.
\newblock Jacobian-free newton--krylov methods: a survey of approaches and
  applications.
\newblock {\em Journal of Computational Physics}, 193(2):357--397, 2004.

\bibitem[KP16]{kerenidis2016quantum}
Iordanis Kerenidis and Anupam Prakash.
\newblock Quantum recommendation systems.
\newblock {\em arXiv preprint arXiv:1603.08675}, 2016.

\bibitem[KW21]{kacham2021reduced}
Praneeth Kacham and David Woodruff.
\newblock Reduced-rank regression with operator norm error.
\newblock In {\em Conference on Learning Theory}, pages 2679--2716. PMLR, 2021.

\bibitem[LNW14a]{li2014sketching}
Yi~Li, Huy~L Nguyen, and David~P Woodruff.
\newblock On sketching matrix norms and the top singular vector.
\newblock In {\em Proceedings of the twenty-fifth annual ACM-SIAM symposium on
  Discrete algorithms}, pages 1562--1581. SIAM, 2014.

\bibitem[LNW14b]{LNW14}
Yi~Li, Huy~L. Nguyen, and David~P. Woodruff.
\newblock Turnstile streaming algorithms might as well be linear sketches.
\newblock In {\em Symposium on Theory of Computing, {STOC} 2014, New York, NY,
  USA, May 31 - June 03, 2014}, pages 174--183, 2014.

\bibitem[LW16a]{li2016approximating}
Yi~Li and David~P Woodruff.
\newblock On approximating functions of the singular values in a stream.
\newblock In {\em Proceedings of the forty-eighth annual ACM symposium on
  Theory of Computing}, pages 726--739, 2016.

\bibitem[LW16b]{li2016tight}
Yi~Li and David~P Woodruff.
\newblock Tight bounds for sketching the operator norm, {S}chatten norms, and
  subspace embeddings.
\newblock In {\em Approximation, Randomization, and Combinatorial Optimization.
  Algorithms and Techniques (APPROX/RANDOM 2016)}. Schloss
  Dagstuhl-Leibniz-Zentrum fuer Informatik, 2016.

\bibitem[LW17]{li2017embeddings}
Yi~Li and David~P Woodruff.
\newblock Embeddings of {S}chatten norms with applications to data streams.
\newblock In {\em 44th International Colloquium on Automata, Languages, and
  Programming (ICALP 2017)}. Schloss Dagstuhl-Leibniz-Zentrum fuer Informatik,
  2017.

\bibitem[LW20]{lw20}
Yi~Li and David~P. Woodruff.
\newblock Input-sparsity low rank approximation in {S}chatten norm.
\newblock {\em CoRR}, abs/2004.12646, 2020.

\bibitem[MCG04]{mellor2004optimizing}
John Mellor-Crummey and John Garvin.
\newblock Optimizing sparse matrix--vector product computations using unroll
  and jam.
\newblock {\em The International Journal of High Performance Computing
  Applications}, 18(2):225--236, 2004.

\bibitem[MH02]{mason2002chebyshev}
John~C Mason and David~C Handscomb.
\newblock {\em Chebyshev polynomials}.
\newblock CRC press, 2002.

\bibitem[MM13]{mm13}
Xiangrui Meng and Michael~W. Mahoney.
\newblock Low-distortion subspace embeddings in input-sparsity time and
  applications to robust linear regression.
\newblock In {\em Symposium on Theory of Computing Conference, STOC'13, Palo
  Alto, CA, USA, June 1-4, 2013}, pages 91--100, 2013.

\bibitem[MM15]{musco2015randomized}
Cameron Musco and Christopher Musco.
\newblock Randomized block {K}rylov methods for stronger and faster approximate
  singular value decomposition.
\newblock In {\em Advances in Neural Information Processing Systems}, pages
  1396--1404, 2015.

\bibitem[MMMW21]{MMMW21}
Raphael~A. Meyer, Cameron Musco, Christopher Musco, and David~P. Woodruff.
\newblock Hutch++: Optimal stochastic trace estimation.
\newblock In {\em 4th Symposium on Simplicity in Algorithms, {SOSA} 2021,
  Virtual Conference, January 11-12, 2021}, pages 142--155, 2021.

\bibitem[Mut05]{M05}
S.~Muthukrishnan.
\newblock Data streams: Algorithms and applications.
\newblock {\em Found. Trends Theor. Comput. Sci.}, 1(2), 2005.

\bibitem[MW17]{mw17}
Cameron Musco and David~P. Woodruff.
\newblock Sublinear time low-rank approximation of positive semidefinite
  matrices.
\newblock In {\em 58th {IEEE} Annual Symposium on Foundations of Computer
  Science, {FOCS} 2017, Berkeley, CA, USA, October 15-17, 2017}, pages
  672--683, 2017.

\bibitem[Nel11]{nelson2011sketching}
Jelani~Osei Nelson.
\newblock {\em Sketching and streaming high-dimensional vectors}.
\newblock PhD thesis, Massachusetts Institute of Technology, 2011.

\bibitem[NN13]{NN13}
Jelani Nelson and Huy~L. Nguyen.
\newblock {OSNAP:} faster numerical linear algebra algorithms via sparser
  subspace embeddings.
\newblock In {\em 54th Annual {IEEE} Symposium on Foundations of Computer
  Science, {FOCS} 2013, 26-29 October, 2013, Berkeley, CA, {USA}}, pages
  117--126, 2013.

\bibitem[NSW22]{needell2022testing}
Deanna Needell, William Swartworth, and David~P Woodruff.
\newblock Testing positive semidefiniteness using linear measurements.
\newblock In {\em 2022 IEEE 63rd Annual Symposium on Foundations of Computer
  Science (FOCS)}, pages 87--97. IEEE, 2022.

\bibitem[Rou89]{roux1989acceleration}
Fran{\c{c}}ois-Xavier Roux.
\newblock Acceleration of the outer conjugate gradient by reorthogonalization
  for a domain decomposition method for structural analysis problems.
\newblock In {\em Proceedings of the 3rd International Conference on
  Supercomputing}, pages 471--476, 1989.

\bibitem[RST10]{rokhlin2010randomized}
Vladimir Rokhlin, Arthur Szlam, and Mark Tygert.
\newblock A randomized algorithm for principal component analysis.
\newblock {\em SIAM Journal on Matrix Analysis and Applications},
  31(3):1100--1124, 2010.

\bibitem[RWYZ21]{woodruff21}
Cyrus Rashtchian, David~P. Woodruff, Peng Ye, and Hanlin Zhu.
\newblock Average-case communication complexity of statistical problems, 2021.

\bibitem[RWZ20]{RWZ20}
Cyrus Rashtchian, David~P. Woodruff, and Hanlin Zhu.
\newblock Vector-matrix-vector queries for solving linear algebra, statistics,
  and graph problems.
\newblock In {\em Approximation, Randomization, and Combinatorial Optimization.
  Algorithms and Techniques, {APPROX/RANDOM} 2020, August 17-19, 2020, Virtual
  Conference}, pages 26:1--26:20, 2020.

\bibitem[Saa81]{saad1981krylov}
Yousef Saad.
\newblock Krylov subspace methods for solving large unsymmetric linear systems.
\newblock {\em Mathematics of computation}, 37(155):105--126, 1981.

\bibitem[SAR18]{SAR18}
Max Simchowitz, Ahmed~El Alaoui, and Benjamin Recht.
\newblock Tight query complexity lower bounds for {PCA} via finite sample
  deformed wigner law.
\newblock In {\em Proceedings of the 50th Annual {ACM} {SIGACT} Symposium on
  Theory of Computing, {STOC} 2018, Los Angeles, CA, USA, June 25-29, 2018},
  pages 1249--1259, 2018.

\bibitem[Sch09]{schlick2009optimization}
Tamar Schlick.
\newblock Optimization methods in computational chemistry.
\newblock {\em Reviews in computational chemistry}, 3:1--71, 2009.

\bibitem[SW19]{sw19}
Xiaofei Shi and David~P. Woodruff.
\newblock Sublinear time numerical linear algebra for structured matrices.
\newblock In {\em The Thirty-Third {AAAI} Conference on Artificial
  Intelligence, {AAAI} 2019, The Thirty-First Innovative Applications of
  Artificial Intelligence Conference, {IAAI} 2019, The Ninth {AAAI} Symposium
  on Educational Advances in Artificial Intelligence, {EAAI} 2019, Honolulu,
  Hawaii, USA, January 27 - February 1, 2019.}, pages 4918--4925, 2019.

\bibitem[SWYZ19]{SunWYZ19}
Xiaoming Sun, David~P. Woodruff, Guang Yang, and Jialin Zhang.
\newblock Querying a matrix through matrix-vector products.
\newblock In {\em 46th International Colloquium on Automata, Languages, and
  Programming, {ICALP} 2019, July 9-12, 2019, Patras, Greece}, pages
  94:1--94:16, 2019.

\bibitem[Tan19]{tang2019quantum}
Ewin Tang.
\newblock A quantum-inspired classical algorithm for recommendation systems.
\newblock In {\em Proceedings of the 51st Annual ACM SIGACT Symposium on Theory
  of Computing}, pages 217--228. ACM, 2019.

\bibitem[TW04]{toselli2004domain}
Andrea Toselli and Olof Widlund.
\newblock {\em Domain decomposition methods-algorithms and theory}, volume~34.
\newblock Springer Science \& Business Media, 2004.

\bibitem[Ver10]{vershynin2010introduction}
Roman Vershynin.
\newblock Introduction to the non-asymptotic analysis of random matrices.
\newblock {\em arXiv preprint arXiv:1011.3027}, 2010.

\bibitem[Woo14]{woodruff2014sketching}
David~P. Woodruff.
\newblock Sketching as a tool for numerical linear algebra.
\newblock {\em Foundations and Trends{\textregistered} in Theoretical Computer
  Science}, 10(1--2):1--157, 2014.

\bibitem[WWZ14]{WWZ14}
Karl Wimmer, Yi~Wu, and Peng Zhang.
\newblock Optimal query complexity for estimating the trace of a matrix.
\newblock In {\em Automata, Languages, and Programming - 41st International
  Colloquium, {ICALP} 2014, Copenhagen, Denmark, July 8-11, 2014, Proceedings,
  Part {I}}, pages 1051--1062, 2014.

\bibitem[WY22]{woodruff2022improved}
David~P Woodruff and Taisuke Yasuda.
\newblock Improved algorithms for low rank approximation from sparsity.
\newblock In {\em Proceedings of the 2022 Annual ACM-SIAM Symposium on Discrete
  Algorithms (SODA)}, pages 2358--2403. SIAM, 2022.

\bibitem[ZVY{\etalchar{+}}15]{zuev2015new}
Dmitry Zuev, Eugene Vecharynski, Chao Yang, Natalie Orms, and Anna~I Krylov.
\newblock New algorithms for iterative matrix-free eigensolvers in quantum
  chemistry.
\newblock {\em Journal of Computational Chemistry}, 36(5):273--284, 2015.

\end{thebibliography}

\newcommand{\etalchar}[1]{$^{#1}$}

\appendix

\section{Proof of \Cref{lem:chen_block_krylov}} \label{appendix:lifting}

In this section, we prove \Cref{lem:chen_block_krylov}. The proof will essentially copy that of~\cite{samplingLB} (which we have received explicit permission from the authors to do), apart from a few details to provide a minor generalization.

\subsection{Additional Preliminaries}

We recall the definitions of the Extended Oracle Model (\Cref{def:extended-oracle-model}) and adaptive deterministic algorithms (\Cref{def:adaptive-deterministic-algorithm}).

First, we note that, in the extended oracle model, we can assume that each $v_k$ is a unit vector orthogonal to its inputs.
\begin{lemma}[Extended Oracle and Orthogonal Queries]
\label{lem:extended-oracle-queries-are-orthogonal}
For $k \in [2, K]$, let $v_k$ be as stated in Definition~\ref{def:adaptive-deterministic-algorithm} and let $\{A^{i} v_j\}_{H_{k-1}}$ be as stated in Definition~\ref{def:extended-oracle-model}. Then, we may assume WLOG that $v_k$ is orthogonal to the subspace spanned by the vectors in $\{A^{i} v_j\}_{H_{k-1}}$. 
\end{lemma}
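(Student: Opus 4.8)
The plan is to Gram--Schmidt the queries. Fix an adaptive deterministic algorithm $\calA$ with intended queries $v_1, v_2(\cdot), \dots, v_K(\cdot)$, and at step $k \ge 2$ let $S_k := \Span\Paren{\{A^i v_j\}_{H_{k-1}}}$ be the span of everything revealed through step $k-1$. Write the intended query as $v_k = v_k^{\|} + v_k^{\bot}$ with $v_k^{\|} \in S_k$ and $v_k^{\bot} \perp S_k$, and build $\calA'$ which instead issues the unit query $\tilde v_k := v_k^{\bot}/\Norm{v_k^{\bot}}_2$. I claim, by induction on $k$, that immediately after step $k$ the information available to $\calA'$ (the span, and indeed the full list, of its oracle responses) is exactly the information available to $\calA$; in particular $\calA'$ can replay $\calA$'s decision rule to compute $v_k$, its projection $v_k^{\|}$, and hence $\tilde v_k$.

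The crux is the equivalence of the two information states, which reduces to an index-set inclusion. Since $v_k^{\|} \in S_k$, we may write $v_k^{\|} = \sum_{(i',j') \in H_{k-1}} c_{i',j'}\, A^{i'} v_{j'}$, where the $c_{i',j'}$ are read off from the Gram matrix of the (already known) vectors $\{A^i v_j\}_{H_{k-1}}$ together with the inner products $\Iprod{A^{i'}v_{j'}, v_k}$. Now fix $i \ge 0$. The oracle first reveals $A^i \tilde v_k$ at the step $k' \ge k$ for which $(i,k) \in H_{k'}$, i.e.\ $i + k \le k'+1$. At that very step, for every $(i',j') \in H_{k-1}$ we have $i'+j' \le k$, hence $(i+i') + j' \le i + k \le k'+1$, so $(i+i', j') \in H_{k'}$ and $A^{i+i'} v_{j'}$ is also available. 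Consequently $\calA'$ can reconstruct
\begin{align*}
    A^i v_k &= A^i v_k^{\|} + \Norm{v_k^{\bot}}_2 \cdot A^i \tilde v_k, \\
    A^i v_k^{\|} &= \sum_{(i',j') \in H_{k-1}} c_{i',j'}\, A^{i+i'} v_{j'},
\end{align*}
and symmetrically $\calA$ can recover $A^i \tilde v_k$ from $A^i v_k$ and its own (identical) knowledge. Thus the transcripts of $\calA$ and $\calA'$ are in information-preserving bijection, closing the induction.

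It remains to handle the degenerate case $v_k \in S_k$: then $v_k^{\bot} = 0$, the query of $\calA$ reveals nothing new, and $\calA'$ may skip it or issue any unit vector orthogonal to $S_k$ (one exists whenever $\dim S_k \le \abs{H_{k-1}} = O(K^2)$ is smaller than the ambient dimension $d$, which holds in our regime). Renormalizing $v_k^{\bot}$ to unit norm costs nothing since $\calA'$ knows the scalar $\Norm{v_k^{\bot}}_2$. The only place that needs care is the index bookkeeping in the two displayed identities — confirming that $A^i \tilde v_k$ is never revealed before all the pieces $A^{i+i'} v_{j'}$ needed to undo the projection — but this is exactly the inclusion $(i+i',j') \in H_{k'}$ verified above, and I anticipate no deeper obstacle.
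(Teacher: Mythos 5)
Your proposal is correct and follows essentially the same route as the paper: decompose $v_k$ into its component in $\Span(\{A^i v_j\}_{H_{k-1}})$ and its orthogonal part, query the normalized orthogonal part instead, and use the index structure of the sets $H_k$ (i.e., $(i+i',j') \in H_{k'}$ whenever $(i',j')\in H_{k-1}$ and $(i,k)\in H_{k'}$) to show the discarded responses can be reconstructed, so no information is lost. The paper's argument is terser (it checks only the new response $Av_k$ at step $k$ and appeals to induction), while you verify the bookkeeping for all powers $A^i$ explicitly, but the underlying idea is identical.
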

\begin{proof}
Assume for sake of contradiction that this were not the case. Then, we can decompose $v_k = \sum_{(i, j) \in H_{k-1}} c_{i,j} A^i v_j + c^\perp v_k^\perp$ where $v_k^\perp$ is a unit vector orthogonal to $\{A^i v_j\}_{H_{k-1}}$ and each $c_{i, j}$ and $c^\perp$ is a scalar.
At the end of iteration $k$, the new information obtained by the algorithm is $\{A^i v_j\}_{i+j=k+1, j \le k}$.
For all $(i,j) \ne (1,k)$, the new information does not depend on $v_k$.
Also, $A v_k = \sum_{(i, j) \in H_{k-1}} c_{i,j} A^{i+1} v_j + c^\perp A v_k^\perp$, where each $A^{i+1} v_j$ is information obtained by the algorithm at the end of iteration $k+1$ regardless (due to our extended query model).
Since $(i+1, j) \in H_k$ if $(i, j) \in H_{k-1}$, and since $(1, k) \in H_k$, this expression shows that the algorithm would receive the same amount of information (or more, if $c^\perp = 0$) if it queries $v_k^\perp$ instead of $v_k$.
Applying this reasoning inductively proves the claim.
\end{proof}


We compare to a Block Krylov algorithm that makes i.i.d.\ standard Gaussian queries $z_1,z_2,\dotsc,z_K$ and then receives $\{A^i z_j\}$ for all $i, j \le K$. Recall, the Block Krylov algorithm does not make \emph{adaptive} queries, it is easier to prove lower bounds against Block Krylov algorithms. Our goal is to now show that Block Krylov algorithms can simulate an adaptive deterministic algorithm.

\subsection{Conditioning lemma}\label{scn:conditioning}

We start by proving a general conditioning lemma which will be invoked repeatedly in the reduction to Block Krylov algorithms.
We implicitly assume that all mappings are measurable, in order to avoid undue technical issues. This lemma roughly shows that if the adaptive algorithm knows $\{A^i v_j\}_{H_k},$ the posterior distribution of $A$ given $\{A^i v_j\}_{H_k}$ is indeed rotationally symmetric on the orthogonal complement $\{A^i v_j\}_{H_k}$.

We will use the notation $\eqdist$ to denote that two random variables are equal in probability distribution (possibly conditioned on other information).


\begin{lemma}[conditioning lemma, preliminary version] \label{lem:random_rotation_fixed_subspace}
    Let $U$ be a Haar-random orthogonal matrix, and $A = U^\top D U$, where $D$ is a fixed positive diagonal matrix. Suppose that $\calA$ is an adaptive deterministic algorithm that generates extended oracle queries $v_1, \dots, v_{K}$, and after the $k$th query knows $A^i v_j$ for all $(i, j) \in H_k$.    
    For any integer $m \ge 1$, let $k$ be the integer such that $\frac{k(k+1)}{2} \le m < \frac{(k+1)(k+2)}{2},$ i.e., $m$ is at least the $k$th triangular number but less than the $(k+1)$th triangular number.
    Consider the order of vectors $v_1, A v_1, v_2, A^2 v_1, A v_2, v_3, A^3 v_1, \dots$ (this enumerates $A^i v_j$ in order of $i+j$, breaking ties with smaller values of $j$ first).
    Let $W_m$ be the set of first $m$ of these vectors and $X_k$ be the set $\{v_1, \dots, v_k\}$. Let $V$ be a Haar-random orthogonal matrix fixing $W_m$ and acting on the orthogonal complement $W_m^\perp$. Then, $(X_k, U) \eqdist (X_k, U V)$.
\end{lemma}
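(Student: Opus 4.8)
The plan is to prove the distributional identity $(X_k, U) \eqdist (X_k, UV)$ by exploiting the fact that $V$ is, by construction, Haar-random on the orthogonal complement $W_m^\perp$ and fixes $W_m$ pointwise, and that each query vector $v_j \in X_k$ is a deterministic function only of the responses $\{A^i v_\ell\}$ that lie inside $W_m$ (by the ordering, $X_k = \{v_1, \dots, v_k\}$ and all the inputs to $v_j(\cdot)$ for $j \le k$ are vectors $A^i v_\ell$ with $i + \ell \le j \le k$, hence among the first $m$ vectors in the enumeration, i.e. in $W_m$). The key point is that $V$ acts trivially on everything the algorithm has actually seen, so $X_k$ is measurable with respect to the $\sigma$-algebra on which $V$ acts as the identity.

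First I would make precise the claim that $W_m \supseteq \{A^i v_j : i + j \le k\} \supseteq$ (the span of) all inputs to the functions $v_1, v_2(\cdot), \dots, v_k(\cdot)$, so that $X_k$ is a deterministic function of $W_m$ alone, and in fact $\mathrm{span}(X_k) \subseteq \mathrm{span}(W_m)$. This uses the ordering "$A^i v_j$ in order of $i+j$, ties broken by smaller $j$", together with the triangular-number choice of $k$ relative to $m$: the first $\binom{k+1}{2}$ vectors in the enumeration are exactly $\{A^i v_j : i + j \le k+1\} \supseteq \{v_1, \dots, v_k\}$, and $m \ge \binom{k+1}{2}$, so $X_k \subseteq W_m$. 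Second, I would condition on the event that $W_m$ equals a particular fixed tuple $w = (w_1, \dots, w_m)$ of vectors (equivalently condition on the $\sigma$-algebra $\mathcal{F}_m$ generated by $W_m$). Since the adaptive algorithm is deterministic, $X_k$ is then a fixed tuple $x$ determined by $w$; and $V$ is constructed to be Haar-random subject to fixing $W_m$, so conditionally on $\mathcal{F}_m$, $V$ fixes the span of $x$ as well. Hence $V x = x = x$, which gives $(X_k, UV) = (x, UV)$ and $(X_k, U) = (x, U)$ conditionally on $\mathcal{F}_m$.

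It therefore remains to show $U \eqdist UV$ conditionally on $\mathcal{F}_m$. This is the heart of the matter and is where the Haar invariance is used: I would argue that the conditional distribution of $U$ given $\{A^i v_j = (\text{fixed value})\}_{H_k}$ — equivalently, given $W_m = w$ — is invariant under right multiplication by any orthogonal matrix that fixes $\mathrm{span}(w)$. Concretely, fixing $W_m = w$ imposes linear constraints on $U$ only through how $U$ (and powers of $A = U^\top D U$) act on the vectors $v_1, \dots, v_k$, all of which lie in $\mathrm{span}(w)$ by the previous paragraph; equivalently the constraint only pins down $U$ restricted to the subspace $\mathrm{span}(w)$ and leaves $U$ on $\mathrm{span}(w)^\perp = W_m^\perp$ "as free as possible". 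By the right-invariance of Haar measure on $\mathbb{O}(n)$ and the fact that the subgroup fixing $\mathrm{span}(w)$ acts only on $W_m^\perp$, the posterior law of $U$ is invariant under $U \mapsto UV'$ for any such $V'$; averaging over the Haar-random choice of $V$ (which lives in exactly this subgroup, conditionally on $\mathcal{F}_m$) preserves the law. Thus $(X_k, U) \eqdist (X_k, UV)$ conditionally on $\mathcal{F}_m$, and integrating out $\mathcal{F}_m$ gives the unconditional statement.

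**Main obstacle.** The delicate step is the claimed right-invariance of the posterior law of $U$ given $W_m = w$: one must verify carefully that conditioning on the responses $\{A^i v_j\}_{H_k}$ genuinely only constrains $U$ through its action on $\mathrm{span}(w)$, so that no "hidden" constraint leaks onto $W_m^\perp$. This requires the observation that each $v_j$ for $j \le k$ (hence each queried direction) is a deterministic function of earlier responses all lying in $W_m$, so the constraint set is stable under the subgroup fixing $\mathrm{span}(w)$; formalizing this — ideally by a disintegration / change-of-variables argument on $\mathbb{O}(n)$ factored as action on $\mathrm{span}(w)$ times action on its complement, and checking measurability — is the technical crux. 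I would expect the rest (the combinatorics of the ordering, and $V x = x$) to be routine bookkeeping.
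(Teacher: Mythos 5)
Your reduction is fine as far as it goes: $X_k$ is indeed a sub-tuple of $W_m$ (each $v_j$, $j\le k$, sits at position $j(j+1)/2\le m$ in the enumeration), so conditionally on $W_m$ the tuple $X_k$ is fixed and $V$ fixes it, and the lemma reduces to showing that the conditional law of $U$ given $W_m=w$ is invariant under right multiplication by the subgroup $G_w$ of rotations fixing $\mathrm{span}(w)$ pointwise. You also correctly identify the analogue of \Cref{prop:invariant} (the transcript is unchanged under such rotations). The problem is that the step you defer as ``routine disintegration'' is not routine --- it is the entire content of the lemma, and the justification you give for it does not go through as stated. A single appeal to right-invariance of Haar measure handles only a \emph{fixed} rotation $V_0$: for each fixed $V_0$, right multiplication is a measure-preserving bijection of $\{U: V_0\in G_{W_m(U)}\}$ commuting with $U\mapsto W_m(U)$, which yields $V_0$-invariance of the fiber measures for a.e.\ fiber, with the null set depending on $V_0$. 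What you actually need is invariance under a rotation that \emph{depends on the fiber} --- either the conditionally Haar $V$ (which is a function of $U$ through $W_m$, plus independent randomness), or, after conditioning on that randomness, a measurable selection $w\mapsto V_0(w)\in G_w$. That statement is precisely \Cref{lem:conditioning_lemma}, which the paper derives \emph{from} the present lemma; you cannot upgrade from fixed $V_0$ to a selection by Fubini, because the Haar measures on the varying subgroups $G_w$ are mutually singular as $w$ varies, so ``for each $V_0$, a.e.\ $w$'' does not imply ``a.e.\ $w$, a.e.\ $V_0\in G_w$'' without further argument.

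The missing mechanism is exactly what the paper's induction supplies: it reveals the transcript one vector at a time (distinguishing whether the $m$-th vector is a new query $v_k$, a deterministic function of $W_{m-1}$, or a new response $A^i v_j$, a function of $U$), introduces auxiliary Haar rotations $V_1$ fixing $W_{m-1}$ and $V_2$ fixing $W_m$, represents $V_2=f(U,R)$ with $R$ independent of $U$ so that the $U$-dependence of the conditional rotation is channeled entirely through $W_m$, and then chains $(X_k,U)\eqdist(X_k,UV_1)\eqdist(X_k,UV_1V_2)\eqdist(X_k,UV_2)$ (with a conjugation $V_1^\top V_2 V_1$ in the non-triangular case), using \Cref{prop:invariant} at each step. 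Equivalently, the fact you implicitly assume --- that the ``unexplored'' part of a Haar matrix remains conditionally uniform even when the explored subspace is chosen \emph{adaptively} as a function of $U$ itself --- is true but requires an induction over the rounds of adaptivity (or an equivalent argument); it is not a consequence of the classical fixed-subspace statement plus right-invariance. So as written your proposal asserts the crux rather than proving it, and the one-shot conditioning-on-$W_m$ route would need to be replaced by (or reconstructed into) such an inductive revelation argument.
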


Before proving this lemma, we note that since the algorithm is deterministic and $D$ is fixed, $W_m$ and $X_k$ are deterministic functions of $A$, and thus of $U$. Hence, we can write $v_k(U'), W_m(U'), X_k(U')$ to be the $v_k, W_m, X_k$ that would have been generated if we started with $A' = (U')^\top D U'$. (If no argument is given, $v_k, W_m, X_k$ are assumed to mean $v_k(U), W_m(U), X_k(U)$, respectively.) We note the following proposition.

\begin{proposition}[fixing the first $m$ queries and responses]
\label{prop:invariant}
    Suppose that $V$ is any orthogonal matrix fixing $W_m(U)$. Then, $W_m(U) = W_m(UV)$.
\end{proposition}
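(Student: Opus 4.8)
The plan is to strengthen the statement into an explicit invariant and prove it by strong induction on position in the fixed enumeration. Write $A' := (UV)^\top D (UV) = V^\top A V$, so that $(A')^i = V^\top A^i V$ for all $i \ge 0$, and write $v'_k := v_k(UV)$ for the $k$-th query issued by the deterministic algorithm $\calA$ when run on $A'$. The order $v_1, Av_1, v_2, A^2 v_1, Av_2, v_3, \dots$ is really a fixed ordering of index pairs $(i,j)$ — sorted by $i+j$, ties broken by smaller $j$ — which does not depend on the underlying matrix; call this sequence $(i_1,j_1),(i_2,j_2),\dots$, so that $W_m(U)=\{A^{i_\ell}v_{j_\ell}(U)\}_{\ell\le m}$ and $W_m(UV)=\{(A')^{i_\ell}v'_{j_\ell}\}_{\ell\le m}$. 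I would show that for every $\ell\le m$ one has $v'_{j_\ell}=v_{j_\ell}(U)$ and $(A')^{i'}v'_{j_\ell}=A^{i'}v_{j_\ell}(U)$ for all $0\le i'\le i_\ell$; taking $i'=i_\ell$ for each $\ell$ then gives $W_m(UV)=W_m(U)$ termwise, which is stronger than the set equality claimed.

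The base case $\ell=1$ is trivial since $(i_1,j_1)=(0,1)$ and $v_1$ is a constant. For the inductive step with $(i,j):=(i_\ell,j_\ell)$, I would split into two cases. If $i=0$, then $v_j(\cdot)$ is a deterministic function of the responses $\{A^{i'}v_{j'}\}_{H_{j-1}}$; every such pair $(i',j')$ satisfies $i'+j'\le j$ and $j'\le j-1$, hence strictly precedes $(0,j)$ in the enumeration and so occupies one of the first $\ell-1$ positions. The inductive hypothesis then says these responses coincide for $A$ and $A'$, so $\calA$ sees identical input and outputs $v'_j=v_j(U)$. If $i\ge1$, the pair $(i-1,j)$ precedes $(i,j)$, so by induction $(A')^{i-1}v'_j=A^{i-1}v_j(U)$; since this vector sits at a position $<\ell\le m$ it lies in $W_m(U)$ and is therefore fixed by $V$ (hence by $V^\top$), giving
\[
(A')^{i}v'_j \;=\; V^\top A V\,(A')^{i-1}v'_j \;=\; V^\top A V\, A^{i-1}v_j(U) \;=\; V^\top A\, A^{i-1}v_j(U) \;=\; V^\top A^{i}v_j(U).
\]
Finally $A^i v_j(U)\in W_m(U)$ (its position is $\ell\le m$) is fixed by $V^\top$ as well, so $(A')^i v'_j=A^i v_j(U)$, closing the induction.

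The only real content — and the step to be careful with — is the combinatorial bookkeeping that makes the induction self-supporting: I must verify that every vector feeding into the construction of the $\ell$-th item, whether the inputs $\{A^{i'}v_{j'}\}_{H_{j-1}}$ to a fresh query or the vector $A^{i-1}v_j$ used to form $A^i v_j$, occurs strictly earlier in the enumeration and hence lies among the first $m$ vectors. This is precisely what makes the inductive hypothesis available for it and, crucially, places it inside $W_m(U)$ so that the assumption that $V$ fixes $W_m(U)$ can be used; it is also the reason the enumeration is ordered by $i+j$, since all one-step ``ancestors'' of $A^i v_j$ (under one application of $A$, or under the query map) have strictly smaller index-sum. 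Once this ordering observation is pinned down, no estimates are needed and the argument is purely formal.
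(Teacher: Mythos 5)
Your proof is correct and follows essentially the same route as the paper's: induction along the fixed enumeration ordered by $i+j$, handling query positions ($i=0$) via determinism of $v_j(\cdot)$ applied to earlier-agreeing responses, and power positions ($i\ge 1$) via the fact that $V$ fixes the relevant elements of $W_m(U)$, using $A' = V^\top A V$. The only cosmetic difference is that you peel off one power of $A$ at a time through the pair $(i-1,j)$, whereas the paper applies the same cancellation directly with $v_j$ and $A^i v_j$; this changes nothing substantive.
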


\begin{proof}
    We prove $W_{m'}(U) = W_{m'}(UV)$ for all $m' \le m$. The base case of $k = 1$ is trivial, since $v_1$ is fixed. We now prove the induction step for $m'$.
    
    If $m' \le m$ is a triangular number, $m' = \frac{k(k+1)}{2}$, then the $m'$th vector in $W_m$ is $v_k$. But note that $v_k(U)$ is a deterministic function of $W_{m'-1}(U)$, and $v_k(UV)$ is the same deterministic function of $W_{m'-1}(UV)$. Hence, if the induction hypothesis holds for $m'-1$, it also holds for $m$.

    If $m' \le m$ is not a triangular number, then the $m'$th number in $W_m(U)$ is $A^i v_j$ for some $i \ge 1$. Likewise, the $m'$th number in $W_m(UV)$ is $V^\top A^i V \cdot v_j(UV)$. Since $i \ge 1$, we know that $v_j(U) = v_j(UV)$, by the induction hypothesis on $\frac{j(j+1)}{2} < m'$. But, we know that $V$ fixes $W_m$, which means it fixes $v_j$ and $A^i v_j$. Thus, $V^\top A^i V v_j(UV) = V^\top A^i V v_j = A^i v_j$.
\end{proof}

We are now ready to prove Lemma \ref{lem:random_rotation_fixed_subspace}.

\begin{proof}[Proof of Lemma~\ref{lem:random_rotation_fixed_subspace}]
    We prove this by induction on $m$. For the base case $m = 1$, $U$ is a random matrix and $V$ is a random matrix that fixes $v_1$. Note that $v_1$ is chosen independently of $A$ (and thus of $U$), so $U$ and $V$ are independent. Even for any fixed $V$, the distribution $U V$ is a uniformly random orthogonal matrix, so overall $U \eqdist U V$. Also, $v_1$ is deterministic, so $(v_1, U) \eqdist (v_1, U V)$.
    
    For the induction step, we split the proof into $2$ cases. The proofs in both cases will be very similar, but with minor differences.
    
    \textbf{Case 1: $m$ is a triangular number.} This means that the $m$th vector added is $v_k$, where $m = \frac{k(k+1)}{2}$. Let $V_1$ be a random orthogonal matrix fixing $W_{m-1}$ and $V_2$ be a random orthogonal matrix fixing $W_m$. Our goal is then to show $(X_k, U) \eqdist (X_k, U V_2)$.

    To make this rigorous, we note an order of generating the random variables. First, we generate $U$ randomly: $W_m$ and $X_k$ are deterministic in terms of $U$. Next, we define $V_1$ to be a random rotation fixing $W_{m-1}$. Finally, we define $V_2$ to be a random rotation fixing $W_m$, where $V_1, V_2$ are conditionally independent on $U$.

    First, we prove that $(X_k, U) \eqdist (X_k, UV_1)$. Note that $U \eqdist UV_1$ by our inductive hypothesis. In addition, since $V_1$ fixes $W_{m-1}(U)$, $W_{m-1}(U) = W_{m-1}(UV_1)$ by \Cref{prop:invariant}. Since $m = \frac{k(k+1)}{2}$ is a triangular number, $X_k(\cdot)$ is a deterministic function of $W_{m-1}(\cdot)$, which means $X_k(U) = X_k(UV_1)$. Hence, $(X_k, U) \eqdist (X_k(UV_1), UV_1) = (X_k, UV_1)$.
        
    Next, we prove that $(X_k, U V_2) \eqdist (X_k, U V_1 V_2)$. It suffices to prove that 
\[(X_k, U, V_2) \eqdist (X_k, UV_1, V_2).\]
    To do so, we first show that $V_2 = f(U, R)$, where $f$ is a deterministic function and $R$ represents a random orthogonal matrix over $d-\dim(W_m)$ dimensions that is independent of $U$. (Recall that $W_m$ is a deterministic function of $U$.) To define $f(U, R)$, we consider some deterministic map that sends each $W_m$ to a set of $d-\dim(W_m)$ basis vectors in $W_m^\perp$. We then define $V_2 = f(U, R)$ to act on $W_m^\perp$ using $R$ and the correspondence of basis vectors. Since $W_m$ and $X_k$ are deterministic in terms of $U$, this means $f(U, R)$ is well-defined.
    We will now show that 
\[V_2 = f(U, R) = f(UV_1, R) \hspace{0.5cm} \text{and} \hspace{0.5cm} X_k = X_k(UV_1).\]
    Since $U \eqdist UV_1$ by our inductive hypothesis, 
\[(X_k, U, V_2) \eqdist (X_k(UV_1), UV_1, f(UV_1, R)) = (X_k, UV_1, V_2).\]
    By Proposition \ref{prop:invariant}, $W_{m-1}(U) = W_{m-1}(UV_1),$ and since $X_k(\cdot)$ is deterministic given $W_{m-1}(\cdot)$ for $m = \frac{k(k+1)}{2}$, $X_k(U) = X_k(UV_1)$. This implies $W_m(U) = W_m(UV_1),$ which means $f(UV_1, R) = f(U, R)$, since $f(\cdot, R)$ only depends on $W_m(\cdot)$ and $R$. This completes the proof.
    
    
    Next, we show that $(X_k, U V_1 V_2) \eqdist (X_k, U V_1)$. Since we chose the order with $U$ being defined first, we are allowed to condition on $U$. Since $X_k$ is deterministic in terms of $U$, it suffices to show that $V_1 V_2|U \eqdist V_1|U$. Since $W_{m-1}, W_m$ are also deterministic given $U$, note that $V_1$ is a uniformly random orthogonal matrix fixing $W_{m-1},$ and $V_2$ is a random orthogonal matrix fixing $W_{m} \supset W_{m-1}$. Since $V_1$ and $V_2$ are conditionally independent given $U$, this means $V_1 V_2|U$ is a uniformly random orthogonal matrix fixing $W_{m-1}$, so $V_1 V_2|U \eqdist V_1|U$.
    
    In summary, we have that
\begin{align*}
    (X_k, U) &\eqdist (X_k, U V_1) \\
    &\eqdist (X_k, U V_1 V_2) \\
    &\eqdist (X_k, U V_2).
\end{align*}
    
    \paragraph{Case 2: $m$ is not a triangular number.}
    Again, let $V_1$ be a random orthogonal matrix fixing $W_{m-1}$ and $V_2$ be a random orthogonal matrix fixing $W_m$. Our goal is again to show that $(X_k, U) \eqdist (X_k, U V_2)$. 

    First, we again have $(X_k, U V_1) \eqdist (X_k, U)$ by our inductive hypothesis.
    
    Next, we show that $(X_k, U V_2) \eqdist (X_k, U V_2 V_1)$. It suffices to prove that 
\[(X_k, U, V_2) \eqdist (X_k, UV_1, V_1^\top V_2 V_1),\]
    since $(U V_1) (V_1^\top V_2 V_1) = U V_2 V_1$.
    We recall the random variable $R$ and use the same function $V_2 = f(U, R)$.
    Since we have already shown that $U \eqdist UV_1$, this implies that $(X_k, U, V_2) \eqdist (X_k(UV_1), UV_1, f(UV_1, R))$.
    Since $m$ is not triangular, $X_k(\cdot)$ is contained in $W_{m-1}(\cdot)$, so by Proposition \ref{prop:invariant}, $X_k(U) = X_k(UV_1)$. So, we have 
\[(X_k, U, V_2) \eqdist (X_k(UV_1), UV_1, f(UV_1, R)) = (X_k, UV_1, f(UV_1, R)).\]
    Now, if we fix $U$ and $V_1$, $W_{m-1}(UV_1) = W_{m-1}(U)$ by \Cref{prop:invariant}. However, since the $m$th $(i, j)$ pair has $i \ge 1$ when $m$ is not triangular, the final vector in $W_m(UV_1)$ will be $V_1^\top A^i V_1 v_j = V_1^\top (A^i v_j)$. For fixed $U, V_1$, $f(U, R)$ is a random rotation fixing $W_{m-1}$ and $A^i v_j$, but $f(UV_1, R)$ is a random rotation fixing $W_{m-1}$ and $V_1^\top (A^i v_j)$. Since $V_1^\top$ fixes $W_{m-1}$ by how we defined $V_1$, this means that for fixed $U, V_1$, $f(U, R)$ is a random rotation fixing $W_m$ but $f(UV_1, R)$ is a random rotation fixing $V_1^\top W_m$. Therefore, conditioned on $U, V_1$, $f(UV_1, R)$ has the same distribution as $V_1^\top f(U, R) V_1$. Since $X_k$ is deterministic in terms of $U$, this means
\[(X_k, UV_1, f(UV_1, R))|U, V_1 \eqdist (X_k, UV_1, V_1^\top f(U, R) V_1)|U, V_1.\]
    We can remove the conditioning to establish that $(X_k, UV_1, f(UV_1, R)) \eqdist (X_k, UV_1, V_1^\top f(U, R) V_1) = (X_k, UV_1, V_1^\top V_2 V_1),$ which completes the proof.
        
    Next, we show that $(X_k, U V_2 V_1) \eqdist (X_k, U V_1)$. The proof is essentially the same as in the case when $m$ is triangular. We again condition on $U$, and we have that $V_2 V_1|U \eqdist V_1|U$ have the same distribution as uniform orthogonal matrices fixing $W_{m-1}(U)$. Since $X_k$ is a deterministic function of $U$, this means $(X_k, U V_2 V_1)|U \eqdist (X_k, U V_1)|U,$ and removing the conditioning finishes the proof.
    
    In summary, 
\begin{align*}
    (X_k, U) &\eqdist (X_k, U V_1) \\
    &\eqdist (X_k, U V_2 V_1) \\
    &\eqdist (X_k, U V_2). \qedhere
\end{align*}
\end{proof}

We now prove our main conditioning lemma, which will be a modification of Lemma \ref{lem:random_rotation_fixed_subspace}.

\begin{lemma}[Conditioning Lemma] \label{lem:conditioning_lemma}
    Let all notation be as in Lemma \ref{lem:random_rotation_fixed_subspace}, and let $V_0$ be a fixed orthogonal matrix fixing $W_m$. Importantly, $V_0$ is a deterministic function only depending on $W_m$ (and not directly on $U$). Then, $(X_k, U) \eqdist (X_k, UV_0)$.
\end{lemma}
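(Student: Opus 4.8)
The plan is to upgrade \Cref{lem:random_rotation_fixed_subspace}, which applies a \emph{Haar-random} rotation $V$ fixing $W_m$, into the statement that applies an \emph{arbitrary fixed} rotation $V_0$ fixing $W_m$. The conceptual point is that once we condition on the value of $W_m$, the quantities $X_k$ and $V_0$ both become deterministic (this uses that $X_k \subseteq W_m$ always holds, and the hypothesis that $V_0$ is a function of $W_m$), while the ``remaining randomness'' in $U$ lives on $W_m^\perp$ and is Haar-distributed there, hence unaffected by right-multiplication by a fixed element of the pointwise stabilizer of $W_m$.

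Concretely, write $H_w := \{R \in O(d) : R \text{ fixes every vector in } w\}$ for the pointwise stabilizer of a tuple $w$, let $\nu$ be the law of $W_m := W_m(U)$, and let $\mu_w$ denote the regular conditional distribution of $U$ given $\{W_m(U) = w\}$ (well-defined, since $O(d)$ and finite tuples of vectors are Polish). Since $X_k$ is a deterministic function of $W_m$ and $V_0 = V_0(W_m) \in H_{W_m}$ is too, it suffices to show that for $\nu$-a.e.\ $w$ the measure $\mu_w$ is invariant under right-multiplication by $V_0(w)$; integrating back over $\nu$ then yields $(X_k, UV_0) \eqdist (X_k, U)$. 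To obtain this, I would disintegrate \Cref{lem:random_rotation_fixed_subspace} itself over $W_m$: there $V$ is Haar on $H_{W_m(U)}$ and conditionally independent of $U$ given $W_m(U)$, and $W_m(UV) = W_m(U)$ by \Cref{prop:invariant}, so the identity $(X_k, U) \eqdist (X_k, UV)$ disintegrates (by uniqueness of disintegration) to $\mu_w = \mu_w * \mathrm{Haar}_{H_w}$ for $\nu$-a.e.\ $w$, where $*$ denotes convolution, i.e.\ pushforward under $(\xi, h) \mapsto \xi h$.

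To finish, note that for \emph{any} probability measure $\mu$ on $O(d)$ and any closed subgroup $H$, the mixture $\mu * \mathrm{Haar}_H$ is right-$H$-invariant: for $g \in H$ we have $\mathrm{Haar}_H * \delta_g = \mathrm{Haar}_H$ by right-invariance of Haar measure \emph{on the group} $H$, hence $(\mu * \mathrm{Haar}_H) * \delta_g = \mu * (\mathrm{Haar}_H * \delta_g) = \mu * \mathrm{Haar}_H$. Applying this to $\mu_w = \mu_w * \mathrm{Haar}_{H_w}$ shows $\mu_w$ is right-$H_w$-invariant, and in particular $\mu_w * \delta_{V_0(w)} = \mu_w$ because $V_0(w) \in H_w$. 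This is exactly the invariance needed. I expect the only real work to be the measure-theoretic bookkeeping in the disintegration step --- checking that $\mu_w$ is well-defined and that $(X_k, U) \eqdist (X_k, UV)$ genuinely passes to $\mu_w = \mu_w * \mathrm{Haar}_{H_w}$, which relies on $X_k$ and $\dim W_m$ (hence $H_{W_m}$) being $\sigma(W_m)$-measurable and on the conditional independence of $V$ from $U$ given $W_m$; everything else is soft. (One can also sidestep \Cref{lem:random_rotation_fixed_subspace} entirely: \Cref{prop:invariant} shows the fiber $E_w := \{U' : W_m(U') = w\}$ satisfies $E_w g = E_w$ for all $g \in H_w$, so the law of the Haar variable $U$ conditioned on $\{U \in E_w\}$ is right-$H_w$-invariant by the same convolution identity; alternatively the whole lemma can be proved by an induction on $m$ paralleling the proof of \Cref{lem:random_rotation_fixed_subspace}.)
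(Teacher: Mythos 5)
Your main argument is correct and rests on the same three ingredients as the paper's proof --- \Cref{lem:random_rotation_fixed_subspace}, \Cref{prop:invariant} (so that $W_m$, $X_k$, and hence $V_0$ are unchanged when $U$ is right-multiplied by an element fixing $W_m$), and right-invariance of Haar measure on the pointwise stabilizer $H_{W_m}$ --- but you package them differently. The paper never forms conditional laws given $W_m$: it chains distributional identities $(X_k,U)\eqdist(X_k,UV)\eqdist(X_k,UVV_0)\eqdist(X_k,UV_0)$, where the middle step is exactly your convolution identity ($VV_0\,|\,U \eqdist V\,|\,U$, i.e.\ $\mathrm{Haar}_{H_w}*\delta_{V_0(w)}=\mathrm{Haar}_{H_w}$) and the last step is the transfer via $U\eqdist UV$ together with $X_k(UV)=X_k(U)$, $V_0(UV)=V_0(U)$. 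Your route instead disintegrates over $W_m$ and proves $\mu_w=\mu_w*\mathrm{Haar}_{H_w}$ for $\nu$-a.e.\ $w$; this is valid (to localize in $w$ one tests against functions of the form $F(\xi)G(W_m(\xi))$, which is legitimate precisely because $W_m(UV)=W_m(U)$, the fact you cite), and it buys a slightly more transparent statement --- the conditional law of $U$ given $W_m$ is invariant under the whole stabilizer, hence under any measurably chosen $V_0(W_m)$ --- at the cost of regular-conditional-distribution bookkeeping that the paper's formulation avoids entirely (the paper only ever conditions on $U$, where everything is deterministic).

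One caution about your parenthetical ``sidestep'': the observation that the fiber $E_w=\{U':W_m(U')=w\}$ satisfies $E_w g=E_w$ for $g\in H_w$ does \emph{not} by itself imply that the conditional law of Haar $U$ on that fiber is right-$H_w$-invariant. The fibers are null sets, regular conditional distributions are only defined up to $\nu$-null modification, and invariance of a single fiber as a set carries no measure-theoretic content about the disintegration; the statement ``the leftover randomness of $U$ on $W_m^\perp$ is Haar'' is exactly what the adaptive induction in \Cref{lem:random_rotation_fixed_subspace} is needed to establish. So the claimed shortcut is essentially circular (it is equivalent to the identity $U\eqdist UV$ you would be trying to avoid), and either \Cref{lem:random_rotation_fixed_subspace} or an induction on $m$ paralleling its proof is genuinely required. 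Since your main argument does invoke \Cref{lem:random_rotation_fixed_subspace}, this does not affect its correctness.
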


\begin{proof}
    First, note that since $V_0$ is a deterministic function of $W_m$, it is also a deterministic function of $U$. We can write $V_0(\cdot)$ as this function, and $V_0 = V_0(U)$.

    Now, Lemma \ref{lem:random_rotation_fixed_subspace} proves that $(X_k, U) \eqdist (X_k, UV)$. Note that conditioned on $U$, $V$ is a random matrix fixing $W_m$ and $V_0$ is a fixed matrix fixing $W_m$, which means that $VV_0|U \eqdist V|U$. Hence, $(X_k, UV) \eqdist (X_k, UVV_0)$. But from Proposition \ref{prop:invariant}, $X_k(UV) = X_k(U)$ and $W_m(UV) = W_m(U)$, which means that $V_0(\cdot)$, which only depends on $W_m(\cdot)$, satisfies $V_0(UV) = V_0(U)$. Hence, because $U \eqdist UV$, we have $(X_k, UVV_0) = (X_k(UV), UV \cdot V_0(UV)) \eqdist (X_k(U), U \cdot V_0(U)) = (X_k, UV_0)$.

    In summary, we have that $(X_k, U) \eqdist (X_k, UV) \eqdist (X_k, UVV_0) \eqdist (X_k, UV_0)$, which completes the proof.
\end{proof}

\subsection{From Query Algorithms to Block Krylov Algorithms}\label{scn:from_query}


We now aim to prove \Cref{lem:chen_block_krylov}, which implies that any adaptive deterministic algorithm in the extended oracle model can be simulated by rotating the output of a block Krylov algorithm.


First, we describe how to construct $\vsim_k$. Let $\vsim_1 = \frac{z_1}{\norm{z_1}}$, and for $k \ge 2$, let $\vsim_k$ be the unit vector parallel to the component of $z_k$ that is orthogonal to the span of $\{A^i z_j\}_{H_{k-1}}$. (With probability $1$, this is well-defined.) Equivalently, we can let $Q_k$ be an orthogonal basis for $\Span\{A^i z_j\}_{H_{k-1}}$, and define $\vsim_k := \Paren{I- Q_k Q_k^\top} z_k  / \norm{ \Paren{I- Q_k Q_k^\top} z_k}_2 $.

We note that $\tilde{v}_k$ can be written in terms of of $A^i z_j$, and likewise, $\tilde{z}_k$ can be written in terms of $A^i \tilde{v}_j$. Formally, we have the following.
\begin{proposition}
    For every $k \ge 1$, $\tilde{v}_k$ is a linear combination of $\{A^i z_j\}_{i+j \le k}$, and $z_k$ is a linear combination of $\{A^i \tilde{v}_j\}_{i+j \le k}$.
\end{proposition}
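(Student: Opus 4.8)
The plan is to prove both halves of the proposition by a short induction on $k$, leaning on the closed form $\tilde v_k = (I - Q_k Q_k^\top) z_k / \big\| (I - Q_k Q_k^\top) z_k \big\|_2$, where $Q_k$ is an orthonormal basis for $\Span\{A^i z_j\}_{H_{k-1}}$, together with the explicit shape $H_{k-1} = \{(i,j) : i + j \le k,\ i \ge 0,\ 1 \le j \le k-1\}$. Throughout I would work on the probability-one event that each projection residual $(I - Q_k Q_k^\top) z_k$ is nonzero, which holds because the $z_j$ are i.i.d.\ Gaussians and $K^2 < d$; this is the only place randomness enters the argument.

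For the first claim --- that $\tilde v_k$ is a linear combination of $\{A^i z_j\}_{i+j \le k}$ --- essentially no induction is needed. Every spanning vector $A^i z_j$ of $\Span\{A^i z_j\}_{H_{k-1}}$ has $i + j \le k$, so $Q_k Q_k^\top z_k$, lying in that span, is a linear combination of $\{A^i z_j\}_{i+j \le k}$; moreover $z_k = A^0 z_k$ itself belongs to $\{A^i z_j\}_{i+j \le k}$ since $0 + k = k$. Hence $(I - Q_k Q_k^\top) z_k = z_k - Q_k Q_k^\top z_k$, and after normalization $\tilde v_k$, is such a linear combination. The base case is $\tilde v_1 = z_1 / \|z_1\|_2$.

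For the second claim --- that $z_k$ is a linear combination of $\{A^i \tilde v_j\}_{i+j \le k}$ --- I would induct on $k$, with base case $z_1 = \|z_1\|_2 \cdot \tilde v_1$. For the step, write $z_k = \big\| (I - Q_k Q_k^\top) z_k \big\|_2 \cdot \tilde v_k + Q_k Q_k^\top z_k$. The first summand is a scalar multiple of $A^0 \tilde v_k$ with $0 + k \le k$. The second summand lies in $\Span\{A^i z_j : i + j \le k,\ 1 \le j \le k-1\}$; for each such $A^i z_j$, the inductive hypothesis applied to $z_j$ (legitimate since $j \le k-1$) expresses $z_j$ as a linear combination of $\{A^{i'} \tilde v_{j'}\}_{i' + j' \le j}$, and multiplying through by $A^i$ shows $A^i z_j$ is a linear combination of terms $A^{i + i'} \tilde v_{j'}$ with $(i+i') + j' \le i + j \le k$. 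Assembling these expressions yields the claim.

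The proposition is essentially bookkeeping, so I do not anticipate a genuine obstacle; the one point to handle carefully is precisely the index arithmetic in the last step --- verifying that pushing an $A^i$ through the expansion of $z_j$ keeps the total index bounded by $k$ --- which goes through exactly because in the decomposition of $z_k$ every surviving $A^i z_j$ already satisfies $i + j \le k$.
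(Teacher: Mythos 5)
Your proposal is correct and follows essentially the same route as the paper: the first claim is immediate from writing $\tilde v_k$ as a normalized combination of $z_k$ and vectors in $\Span\{A^i z_j\}_{H_{k-1}}$, and the second is the same induction on $k$, decomposing $z_k$ into its component along $\tilde v_k$ plus its projection onto $\Span\{A^i z_j\}_{H_{k-1}}$ and pushing $A^i$ through the inductive expansion of each $z_j$ with the identical index bookkeeping $(i+i')+j' \le i+j \le k$.
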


\begin{proof}
    By definition, $\vsim_k$ is a linear combination of $\{A^i z_j\}_{H_{k-1}}$ and $z_k$, so it is a linear combination of $\{A^i z_j\}_{i+j \le k}$. Therefore, we can construct the set $\{A^i \vsim_j\}_{H_k}$ as a linear combination of the set $\{A^i z_j\}_{H_k}$, for all $k \le K$.
    
    We now show that $z_k$ is a linear combination of $\{A^i \tilde{v}_j\}_{i +j \le k}$, by induction. The base case of $k = 1$ is trivial. Now, assume the induction hypothesis for $k-1$. With probability $1$, $\tilde{v}_k$ is nonzero, so $z_{k}$ is a linear combination of $\tilde{v}_{k}$ and $\{A^i z_j\}_{H_{k-1}}$. By the inductive hypothesis, each $z_j$ is a linear combination of $\{A^{i'} z_{j'}\}_{i'+j' \le j}$ for $j < k$, which means for $(i, j) \in H_{k-1}$, $A^i z_j$ can be written as a linear combination of $A^{i+i'} z_j'$ for $i+i'+j' \le i+j \le k$.
    Thus, $(i+i', j') \in H_{k-1}$, which means $z_k$ is a linear combination of $\tilde{v}_k$ and $\{A^i \tilde{v}_j\}_{H_{k-1}}$, or equivalently, $\{A^i \tilde{v}_j\}_{i+j \le k}$.
\end{proof}

We now construct the rotation matrices $\Usim_k$. First, we define matrix-valued functions $U_k(\cdot)$, for $k=1, \dots, K$, as follows.
\begin{definition}
    For $1 \le k \le K$, the function $U_k(\cdot)$ takes arguments $\{x_{i,j}\}_{H_{k-1}}$, $y_k$, $z_k$, where the vectors $y_k$ and $z_k$ have unit norm and are both orthogonal to the collection $\{x_{i,j}\}_{H_{k-1}}$. 
    
    To define $U_1(\cdot)$: since $H_{0}$ is empty, the first function $U_1$ only takes arguments $y_1, z_1$, and is such that $U_1(y_1, z_1)$ is a deterministic orthogonal matrix that satisfies $U_1(y_1, z_1)^\top y_1 = z_1$. Note that $U_1(\cdot)$ exists because $y_1$ and $z_1$ both have unit norm; for example, we can complete $y_1$ and $z_1$ to orthonormal bases $(y_1,y_2,\dotsc,y_{d})$, $(z_1,z_2,\dotsc,z_{d})$ and take $U_1(y_1,z_1) = \sum_{i=1}^{d} y_i z_i^\top$.

To define $U_k(\cdot)$: $U_k(\{x_{i,j}\}_{H_{k-1}}, y_k, z_k)$ is a deterministic orthogonal matrix that satisfies
\begin{align}
\begin{aligned}\label{eq:uk_func}
    U_k^\top x_{i,j} &= x_{i,j}\,, \qquad \text{for all}~ (i, j) \in H_{k-1}\,, \\
    U_k^\top y_k &= z_k\,.
\end{aligned}
\end{align}
Such a choice of $U_k$ is always possible, because $k^2 < d$, and because $y_k$ and $z_k$ are orthogonal to $x_{i,j}$; for example, we can start with the identity matrix on the subspace spanned by $\{x_{i,j}\}_{H_{k-1}}$ and add to it a sum of outer products formed by completing $y_k$ and $z_k$ to two orthonormal bases of the orthogonal complement.
\end{definition}

Next, we describe how to construct $\Usim_k$. We will define $\Usim_k$ along with an auxiliary sequence $\{\vssim_k\}_{k=1,2,\dotsc,K-1}$. 

\begin{definition}
    We let $\vssim_1 = v_1$, and $\Usim_1 = U_1(\vsim_1, \vssim_1)$. For $k \ge 2$, $\vssim_k$ and $\Usim_k$ are defined recursively as follows:
\begin{align}
\begin{aligned}\label{eq:vs_us}
    \vssim_k &= v_k\bigl(\{(\Usim_{1:(k-1)})^\top A^i \vsim_j\}_{H_{k-1}}\bigr)\\
    \Usim_k &= U_k\bigl(\{(\Usim_{1:(k-1)})^\top A^i \vsim_j\}_{H_{k-1}},\; (\Usim_{1:(k-1)})^\top \vsim_k,\; \vssim_k\bigr)\, .
\end{aligned}
\end{align}
\end{definition}

Intuitively, one can think of $\vssim_k$ as the $k$th vector the simulator thinks the algorithm is querying, and $\Usim_k$ as a rotation that corresponds $\vssim_k$ to the random unit vector known by Block Krylov.

\begin{proposition}
    Each $\Usim_k$ is well-defined.
\end{proposition}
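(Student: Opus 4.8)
The plan is to prove the proposition by induction on $k$, the only real content being to verify that the arguments handed to the function $U_k(\cdot)$ in \eqref{eq:vs_us} meet the hypotheses under which $U_k$ was defined. Recall that $U_k(\{x_{i,j}\}_{H_{k-1}}, y_k, z_k)$ is defined as long as (a) $y_k$ and $z_k$ are unit vectors, (b) both are orthogonal to $\Span(\{x_{i,j}\}_{H_{k-1}})$, and (c) $k^2 < d$, so that there is enough room to complete to the required orthonormal bases. Condition (c) is free here, since $k \le K$ and $K^2 < d$ by assumption of \Cref{lem:chen_block_krylov}. I would also first invoke \Cref{lem:extended-oracle-queries-are-orthogonal} to assume, without loss of generality, that every query function $v_k(\cdot)$ returns a unit vector orthogonal to its inputs.

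For the base case $k=1$, $H_0$ is empty, $\vsim_1 = z_1/\norm{z_1}_2$ is (almost surely) a unit vector, and $\vssim_1 = v_1$ is a unit vector by the normalization convention, so $\Usim_1 = U_1(\vsim_1,\vssim_1)$ is well-defined. For the inductive step, I would assume $\Usim_1,\dots,\Usim_{k-1}$ are well-defined --- hence $\Usim_{1:(k-1)}$ is a bona fide orthogonal matrix --- and then check the three requirements for $\Usim_k = U_k\big(\{(\Usim_{1:(k-1)})^\top A^i\vsim_j\}_{H_{k-1}},\,(\Usim_{1:(k-1)})^\top\vsim_k,\,\vssim_k\big)$: first, $(\Usim_{1:(k-1)})^\top\vsim_k$ is a unit vector because $\vsim_k$ is one by construction (the almost surely nonzero component of $z_k$ orthogonal to $\Span(\{A^i z_j\}_{H_{k-1}})$, then normalized) and $\Usim_{1:(k-1)}$ is orthogonal; second, $\vssim_k = v_k\big(\{(\Usim_{1:(k-1)})^\top A^i\vsim_j\}_{H_{k-1}}\big)$ is a unit vector orthogonal to its inputs, by the WLOG property of the query functions; third, $(\Usim_{1:(k-1)})^\top\vsim_k$ is orthogonal to $\{(\Usim_{1:(k-1)})^\top A^i\vsim_j\}_{H_{k-1}}$, which (applying the orthogonal matrix $\Usim_{1:(k-1)}$ to both sides) reduces to $\vsim_k \perp \Span(\{A^i\vsim_j\}_{H_{k-1}})$.

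The step I expect to be the (mild) crux is this last orthogonality: $\vsim_k$ was orthogonalized against $\Span(\{A^i z_j\}_{H_{k-1}})$, not against $\Span(\{A^i\vsim_j\}_{H_{k-1}})$, so I need these two subspaces to coincide. This is exactly what the preceding proposition supplies --- each $\vsim_j$ is a linear combination of $\{A^i z_{j'}\}_{i+j'\le j}$ and conversely each $z_j$ is a linear combination of $\{A^i \vsim_{j'}\}_{i+j'\le j}$ --- together with the observation that the triangular shape of $H_{k-1}$ makes the degrees line up: if $(i,j)\in H_{k-1}$ and $A^{i'} z_{j'}$ appears in the expansion of $\vsim_j$ with $i'+j'\le j$, then $A^{i+i'} z_{j'}$ has index sum $i+i'+j' \le i+j \le k$, so it lies in $\Span(\{A^a z_b\}_{H_{k-1}})$. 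Granting that, $\Span(\{A^i\vsim_j\}_{H_{k-1}}) \subseteq \Span(\{A^i z_j\}_{H_{k-1}})$ (and equality follows by symmetry, though containment is all that is needed), and $\vsim_k$ is orthogonal to the larger space by construction. Everything else is a routine consequence of orthogonality of $\Usim_{1:(k-1)}$ and the normalization conventions.
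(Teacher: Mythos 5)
Your proof is correct and follows essentially the same route as the paper: check unit norms via orthogonality of $\Usim_{1:(k-1)}$, orthogonality of $\vssim_k$ to its inputs via the WLOG assumption on the query functions, and orthogonality of $(\Usim_{1:(k-1)})^\top\vsim_k$ to $\{(\Usim_{1:(k-1)})^\top A^i\vsim_j\}_{H_{k-1}}$ by pulling the rotation off and using the construction of $\vsim_k$. The only difference is that you spell out the span-containment step $\Span(\{A^i\vsim_j\}_{H_{k-1}})\subseteq\Span(\{A^i z_j\}_{H_{k-1}})$ via the index-sum argument, which the paper compresses into the phrase ``follows from the definition of $\tilde v_k$''; this is a harmless (and arguably welcome) extra level of detail, not a different argument.
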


\begin{proof}
    To show that this choice of $\Usim_k$ is possible, we need to check that $(\Usim_{1:(k-1)})^\top \vsim_k$, $\vssim_k$ both have unit norm and are orthogonal to the subspace $S_k$ spanned by $(\Usim_{1:(k-1)})^\top A^i \vsim_j$ for $(i, j) \in H_{k-1}$. They both have unit norm because $\vsim_k$ and $\vssim_k$ are constructed to have unit norm, and inductively we can assume $\Usim_{1:(k-1)}$ is orthogonal. Note that $\vssim_k$ is orthogonal to $S_k$ by our assumption on the function $v_k(\cdot)$, and $(\Usim_{1:(k-1)})^\top \vsim_k$ is also orthogonal to $S_k$ because
\begin{align*}
    \langle (\Usim_{1:(k-1)})^\top A^i \vsim_j, (\Usim_{1:(k-1)})^\top\vsim_k\rangle = \langle A^i \vsim_j, \vsim_k\rangle = 0\, ,
\end{align*}
    where the second line follows from the definition of $\tilde v_k$.
\end{proof}


 We summarize some important additional properties of $\vssim_k$ and $\Usim_k$ in the following lemma.
\begin{lemma}[Properties of the Simulated Sequences]\label{lem:krylov_aux}
The variables $\Usim_k$ and $\vssim_k$ for $k = 1, \dotsc, K$ defined above satisfy the following properties:
\begin{enumerate}[label=(P\arabic*), ref=(P\arabic*)]
\item \label{prop1} $\vssim_k$ depends only on $\{A^i \vsim_j\}_{H_{k-1}}$, and $\Usim_k$ depends only on $\{A^i \vsim_j\}_{i+j \le k}$.
\item \label{prop2} 
For any $k \ge j$, we have
\begin{align*}
    \vsim_j = \Usim_{1:k} \vssim_j\, .
\end{align*}
\item \label{prop3} For $k \ge 2$, $\vssim_k$ satisfies
\begin{align*}
    \vssim_k
    &= v_k\bigl(\{(\Usim_{1:(k-1)})^\top A^i \Usim_{1:(k-1)} \vssim_j\}_{H_{k-1}}\bigr)\,.
\end{align*}
\item \label{prop4} For $k \ge 2$, $\Usim_k$ satisfies
\begin{align*}
    \Usim_k &= U_k\bigl(\{(\Usim_{1:(k-1)})^\top A^i \Usim_{1:(k-1)}\vssim_j\}_{H_{k-1}},\; (\Usim_{1:(k-1)})^\top \vsim_k,\; \vssim_k\bigr)\, .
\end{align*}
\end{enumerate}
\end{lemma}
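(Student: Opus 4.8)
The plan is to establish all four properties by a single induction on $k$, observing that P3 and P4 are merely rewritings of the recursion \eqref{eq:vs_us} once P2 is available, and that the inductive step for P2 is precisely where the defining relations \eqref{eq:uk_func} of the functions $U_k(\cdot)$ enter. I would take for granted, as already established above, that each $\vssim_k$ and $\Usim_k$ is well-defined and that each $\Usim_k$ is orthogonal. The induction hypothesis at step $k$ will be that P1 holds for all indices up to $k$, that $\vsim_j = \Usim_{1:k'}\vssim_j$ for all $j \le k' \le k$, and that P3, P4 hold for all $2 \le k' \le k$.

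For the base case $k = 1$, I would note that $\vssim_1 = v_1$ is a fixed vector, so P1 is vacuous since $H_0 = \emptyset$, while $\Usim_1 = U_1(\vsim_1, \vssim_1)$ is a function of $\vsim_1$ alone, which matches $\{A^i\vsim_j\}_{i+j \le 1} = \{\vsim_1\}$. For P2 at $j = k = 1$, the defining property $U_1(\vsim_1,\vssim_1)^\top\vsim_1 = \vssim_1$, together with orthogonality of $\Usim_1$, rearranges to $\vsim_1 = \Usim_1\vssim_1 = \Usim_{1:1}\vssim_1$. Properties P3 and P4 are vacuous at $k=1$.

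For the inductive step I would assume the hypothesis at $k-1$ and argue as follows. \textbf{(P1).} Since $\Usim_{1:(k-1)}$ depends only on $\{A^i\vsim_j\}_{i+j \le k-1}$ and every such pair $(i,j)$ lies in $H_{k-1} = \{(i,j): i+j \le k,\ i \ge 0,\ 1 \le j \le k-1\}$, and since the remaining arguments of $\vssim_k$ in \eqref{eq:vs_us} are exactly $\{A^i\vsim_j\}_{H_{k-1}}$, the vector $\vssim_k$ depends only on $\{A^i\vsim_j\}_{H_{k-1}}$; adjoining $\vsim_k = A^0\vsim_k$ (which has $i+j = k$) then shows $\Usim_k$ depends only on $\{A^i\vsim_j\}_{i+j \le k}$. \textbf{(P2).} For $j = k$, the second relation in \eqref{eq:uk_func}, applied with $y_k = (\Usim_{1:(k-1)})^\top\vsim_k$ and $z_k = \vssim_k$, gives $\Usim_k^\top(\Usim_{1:(k-1)})^\top\vsim_k = \vssim_k$, that is, $\vsim_k = \Usim_{1:k}\vssim_k$. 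For $j \le k-1$, the pair $(0,j)$ lies in $H_{k-1}$, so the first relation in \eqref{eq:uk_func} gives $\Usim_k^\top(\Usim_{1:(k-1)})^\top\vsim_j = (\Usim_{1:(k-1)})^\top\vsim_j$; invoking the inductive hypothesis $\vsim_j = \Usim_{1:(k-1)}\vssim_j$ on the right-hand side turns it into $\vssim_j$, whence $\Usim_k^\top\vssim_j = \vssim_j$ and thus $\Usim_k\vssim_j = \vssim_j$, so $\Usim_{1:k}\vssim_j = \Usim_{1:(k-1)}(\Usim_k\vssim_j) = \Usim_{1:(k-1)}\vssim_j = \vsim_j$. \textbf{(P3) and (P4).} Each argument $(\Usim_{1:(k-1)})^\top A^i\vsim_j$ occurring in \eqref{eq:vs_us} has $(i,j) \in H_{k-1}$, hence $j \le k-1$, so P2 lets me substitute $\vsim_j = \Usim_{1:(k-1)}\vssim_j$, producing precisely the expressions claimed in P3 and P4.

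I do not anticipate a genuine obstacle: the entire argument is bookkeeping driven by the defining identities \eqref{eq:uk_func} and the orthogonality of the $\Usim_k$. The only step deserving a moment's care is the index arithmetic — checking that $(0,j) \in H_{k-1}$ exactly when $1 \le j \le k-1$, and that $\{A^i\vsim_j\}_{i+j \le k-1} \subseteq \{A^i\vsim_j\}_{H_{k-1}}$ — both of which are immediate from the definition of $H_{k-1}$.
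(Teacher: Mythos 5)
Your proposal is correct and follows essentially the same route as the paper: P2 is obtained by induction on $k$ from the two defining relations \eqref{eq:uk_func} (the second relation for $j=k$, the first relation at the pair $(0,j)\in H_{k-1}$ plus the inductive hypothesis for $j\le k-1$), and P1, P3, P4 are the same definitional bookkeeping and substitution via P2 that the paper uses. The only cosmetic difference is that you run a single simultaneous induction over all four properties, while the paper dispatches P1 directly and inducts only for P2; the substance is identical.
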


Before we present the proof, we highlight the importance of~\ref{prop2} for $k = K$, which roughly states that $(\Usim_{1:K})^\top$ actually sends each Block Krylov-generated vector $\vsim_j$ to the simulated vector $\vssim_j$.

\begin{proof}
    \ref{prop1} is immediate from the definitions, since $\{(i, j): i+j \le k\} = H_{k-1} \cup \{(0, k)\}$. 
    
    To show \ref{prop2}, note that the second property of the function $U_k$ from \eqref{eq:uk_func} implies that
    \begin{align}\label{eq:tildev}
         \vssim_j = (\Usim_j)^\top (\Usim_{1:(j-1)})^\top\vsim_j = (\Usim_{1:j})^\top\vsim_j\, . 
    \end{align}
    This proves~\ref{prop2} for $k = j$. To prove~\ref{prop2} for $k > j$, we use induction on $k$. If~\ref{prop2} holds for $k-1 \ge j$, then
    \begin{align}
        (\Usim_{1:k})^\top \vsim_j = (\Usim_k)^\top (\Usim_{1:(k-1)})^\top \vsim_j = (\Usim_{1:(k-1)})^\top \vsim_j = \vssim_j.
    \end{align}
    Above, the middle equality holds by the first property of \eqref{eq:uk_func}, since $\Usim_k$ fixes $(\Usim_{1:(k-1)})^\top \vsim_j$ because $j \le k-1$. The final equality holds by our inductive hypothesis. So, \ref{prop2} holds for $k$.
    
    Finally, \ref{prop3} and \ref{prop4} then follow from~\ref{prop2}, since $k-1 \ge j$ if $j \in H_{k-1}$. 
\end{proof}

Before proving Lemma~\ref{lem:chen_block_krylov}, we must make one more basic definition.

\begin{definition}
    For $k \ge 2$, given the matrix $A$ and a set $\{v_j\}_{1 \le j \le k-1}$, define $\mathfrak{C}_k$ as the function that satisfies $\mathfrak{C}_k(A, \{v_j\}_{1 \le j \le k-1}) = \{A^i v_j\}_{H_{k-1}}$. In addition, define $\mathfrak{D}_{k} = v_k \circ \mathfrak{C}_k$.
\end{definition}

\medskip

We are now ready to prove Lemma~\ref{lem:chen_block_krylov}.
Although the proof is notationally burdensome, the message is that we can show the equality of distributions inductively by repeatedly invoking the conditioning lemma (Lemma~\ref{lem:conditioning_lemma}), which is designed precisely for the present situation.

\begin{proof}[Proof of Lemma~\ref{lem:chen_block_krylov}]

It is clear that $\vsim_1, \dots, \vsim_K$ and $\Usim_1, \dots, \Usim_K$ satisfy Property 1 in Lemma \ref{lem:chen_block_krylov}. 
We focus on proving the second property.
For $1 \le k \le K$, let $A_{k} \deq (\Usim_{1:k})^\top A \Usim_{1:k}$.
Since we can write $(\Usim_{1:k})^\top A^i \vsim_j = (\Usim_{1:k})^\top A^i (\Usim_{1:k}) \vssim_j = A_k^i \vssim_j$ for any $k \ge j$ by~\ref{prop2} of Lemma \ref{lem:krylov_aux}, it suffices to inductively prove that for all $1 \le k \le K$, 
    \begin{align} \label{eq:goal2}
        (A_k, \{\vssim_j\}_{1 \le j \le k}) \eqdist (A, \{\valg_j\}_{1 \le j \le k}).
    \end{align}

    For the base case of $k = 1$, it suffices to show that $(A_1, \vssim_1) \eqdist (A, \valg_1)$. Note, however, that $\vssim_1 = \valg_1 = v_1$, and $A_1 = (\Usim_1)^\top A (\Usim_1) = U_1(\vsim_1, v_1)^\top A U_1(\vsim_1, v_1)$.
    Since $v_1$ is a deterministic vector, $\vsim_1$ is independent of $A$, and the distribution of $A$ is rotationally invariant, the claim follows.

    For the inductive step, assume we know $(A_k, \{\vssim_j\}_{1 \le j \le k}) \eqdist (A, \{\valg_j\}_{1 \le j \le k})$. 
    Then, note that $\valg_{k+1} = v_{k+1}(\{A^i \valg_j\}_{H_k})$ and $\vssim_{k+1} = v_k(\{A_k^i \vssim_j\}_{H_k})$. Thus, we have $\valg_{k+1} = \mathfrak{D}_{k+1}(A, \{\valg_j\}_{1 \le j \le k})$ and $\vssim_{k+1} = \mathfrak{D}_{k+1}(A_k, \{\vssim_j\}_{1 \le j \le k})$. In addition, because $\Usim_{k+1}$ fixes $A_k^i \vssim_j$ for all $(i, j) \in H_k$ by~\ref{prop4}, we also have that $A_{k+1}^i \vssim_j = A_k^i \vssim_j$ for all $(i, j) \in H_k$, which means $\vssim_{k+1} = \mathfrak{D}_{k+1}(A_{k+1}, \{\vssim_j\}_{1 \le j \le k})$. Therefore, it suffices to show 
    \begin{align} \label{eq:goal3}
        (A_{k+1}, \{\vssim_j\}_{1 \le j \le k})
        \eqdist (A, \{\vssim_j\}_{1 \le j \le k})\,,
    \end{align}
    as this implies $(A_{k+1}, \{\vssim_j\}_{1 \le j \le k+1}) \eqdist (A, \{\valg_j\}_{1 \le j \le k+1})$, which completes the inductive step.

    Next, we show that $\Usim_{k+1}$ sends $\vsim_{k+1}$ to a random unit vector orthogonal to the simulated queries so far.
    Note that $A_{k+1} = (\Usim_{k+1})^\top A_k (\Usim_{k+1})$, where,  by~\ref{prop4},
    \begin{equation} \label{eq:Usim_k+1_redefined}
        \Usim_{k+1} = U_{k+1}(\{A_k^i \vssim_j\}_{H_k}, (\Usim_{1:k})^\top\tilde{v}_{k+1}, \vssim_{k+1}).
    \end{equation}
   Note that $\tilde{v}_{k+1}$ is a random unit vector orthogonal to $\{A^i z_j\}_{H_k}$, or equivalently, it is a random unit vector orthogonal to $\{A^i \vsim_j\}_{H_k}$. However, since $(\Usim_{1:k})^\top A^i \vsim_j = (\Usim_{1:k})^\top A^i(\Usim_{1:k}) \vssim_j = A_k^i \vssim_j$ for all $(i, j) \in H_k$ (by~\ref{prop2}), this means that $(\Usim_{1:k})^\top \vsim_{k+1}$ is orthogonal to $\{A_k^i \vssim_j\}_{H_k}$. In addition, by~\ref{prop1} and the definition of $\vssim_k, \Usim_k$ (Equation \eqref{eq:vs_us}), we have that the first and third arguments of $\Usim_{k+1}$ only depend on $\{A^i \tilde{v}_j\}_{H_k}.$ Thus, the random direction of $\tilde{v}_{k+1}$ has no dependence on $\{A^i \tilde{v}_j\}_{H_k}$ apart from being orthogonal to them, which means by~\ref{prop1}, $(\Usim_{1:k})^\top \vsim_{k+1}$ is a \emph{uniformly random} unit vector orthogonal to $\{A_k^i \vssim_j\}_{H_k}$.
    
    Recalling that $\vssim_{k+1} = \mathfrak{D}_{k+1}(A_{k}, \{\vssim_j\}_{1 \le j \le k})$, this means that we can rewrite \eqref{eq:Usim_k+1_redefined} as
    \begin{align}
        \Usim_{k+1} =& \hspace{0.1cm} U_{k+1}\left(\{A_k^i \vssim_j\}_{H_k}, \hat{v}^{\msf{sim}}, \mathfrak{D}_{k+1}(A_{k}, \{\vssim_j\}_{1 \le j \le k})\right), \label{eq:uk+1-sim-new}\\
    \intertext{where $\hat{v}^{\msf{sim}}$ is a random unit vector orthogonal to $\{A_k^i \vssim_j\}_{H_k}$.
    As a result, if we define}
        U_{k+1}^{\msf{alg}} \deq& \hspace{0.1cm} U_{k+1}\left(\{A^i \valg_j\}_{H_k}, \hat{v}^{\msf{alg}}, \mathfrak{D}_{k+1}(A, \{\valg_j\}_{1 \le j \le k})\right), \label{eq:uk+1-alg}
    \end{align}
    where $\hat{v}^{\msf{alg}}$ is a random unit vector orthogonal to $\{A^i \valg_j\}_{H_k}$, then 
    \begin{align*}
        (A_{k+1}, \{\vssim_j\}_{1 \le j \le k}) &= ((\Usim_{k+1})^\top A_k (\Usim_{k+1}), \{\vssim_j\}_{1 \le j \le k}) \\
        &\eqdist \left((U_{k+1}^{\msf{alg}})^\top A (U_{k+1}^{\msf{alg}}), \{\valg_j\}_{1 \le j \le k}\right).
    \end{align*}
    Above, the first equality follows by definition, and the third follows from our inductive hypothesis that $(A_k, \{\vssim_j\}_{1 \le j \le k}) \eqdist (A, \{\valg_j\}_{1 \le j \le k})$, along with \eqref{eq:uk+1-sim-new} and \eqref{eq:uk+1-alg}.

    We are now in a position to apply the conditioning lemma (Lemma~\ref{lem:conditioning_lemma}).
    Note that $U_{k+1}^{\msf{alg}}$ only depends on $\{A^i \valg_j\}_{H_k}$ (as well as some randomness in $\hat{v}^{\msf{alg}}$, but the randomness is independent of everything else given $\{A^i \valg_j\}_{H_k}$, so we can safely condition on it). Hence, we can apply the conditioning lemma with $U_{k+1}^{\msf{alg}}$, to obtain that 
    \begin{align*}
        (A_{k+1}, \{\vssim_j\}_{1 \le j \le k}) \eqdist \left((U_{k+1}^{\msf{alg}})^\top A (U_{k+1}^{\msf{alg}}), \{\valg_j\}_{1 \le j \le k}\right) \eqdist \left(A, \{\valg_j\}_{1 \le j \le k}\right) \,,
    \end{align*}
    which establishes \eqref{eq:goal3} and thereby concludes the proof.
\end{proof}

\end{document}